\def\draft{0}
\def\sigconf{0}
\def\big{0}
\def\anon{0}
\def\shownomenclature{0}
\DeclareMathAlphabet{\mathpzc}{OT1}{pzc}{m}{it}
\newcommand\visiblespace{\text{\textvisiblespace}}
\newcommand{\eur}{\text{\texteuro}}
\newcommand{\str}{\mathsterling}
\newcommand{\crs}{\ensuremath{\mathsf{crs}}\xspace}
\newcommand{\account}{\ensuremath{\mathsf{account}}\xspace}
\newcommand{\permittedaccount}{\ensuremath{\mathsf{permittedaccount}}\xspace}
\newcommand{\caccount}{\ensuremath{\mathsf{caccount}}\xspace}
\newcommand{\forwardaccount}{\ensuremath{\mathsf{forwardaccount}}\xspace}
\newcommand{\mpk}{\ensuremath{\mathsf{mpk}}\xspace}
\newcommand{\msk}{\ensuremath{\mathsf{msk}}\xspace}
\newcommand{\bpk}{\ensuremath{\mathsf{bpk}}\xspace}
\newcommand{\bsk}{\ensuremath{\mathsf{bsk}}\xspace}
\newcommand{\csk}{\ensuremath{\mathsf{csk}}\xspace}
\newcommand{\cvk}{\ensuremath{\mathsf{cvk}}\xspace}
\newcommand{\intersectingfamily}{\ensuremath{\mathsf{intersectingFamily}}\xspace}
\newcommand{\aux}{\ensuremath{\mathsf{aux}}\xspace}
\newcommand{\interpretedmessage}{\ensuremath{\mathsf{interpretedmessage}}\xspace}
\newcommand{\witness}{\ensuremath{\mathsf{witness}}\xspace}
\newcommand{\vswitness}{\ensuremath{\mathsf{vswitness}}\xspace}
\newcommand{\pwit}{\ensuremath{\mathsf{pwit}}\xspace}
\newcommand{\rand}{\ensuremath{\mathsf{rand}}\xspace}
\newcommand{\dividends}{\ensuremath{\mathsf{dividends}}\xspace}
\newcommand{\balance}{\ensuremath{\mathsf{Balance}}\xspace}
\newcommand{\verifycolored}{\ensuremath{\mathsf{VerifyColored}}\xspace}
\newcommand{\coloraccount}{\ensuremath{\mathsf{Color}}\xspace}
\newcommand{\outputscript}{\ensuremath{\mathsf{OutputScript}}\xspace}
\newcommand{\outputverifyscript}{\ensuremath{\mathsf{OutputVerifyScript}}\xspace}
\newcommand{\restrictedoutputverifyscript}{\ensuremath{\mathsf{RestrictedOutputVerifyScript}}\xspace}
\newcommand{\siginterpreter}{\ensuremath{\mathsf{SigInterpreter}}\xspace}
\newcommand{\simplesiginterpreter}{\ensuremath{\mathsf{SimpleSigInterpreter}}\xspace}
\newcommand{\multisiginterpreter}{\ensuremath{\mathsf{MultisigInterpreter}}\xspace}
\newcommand{\restrictedsiginterpreter}{\ensuremath{\mathsf{RestrictedSigInterpreter}}\xspace}
\newcommand{\exchangeoutputverifyscript}{\ensuremath{\mathsf{ExOVS}}\xspace}
\newcommand{\keygen}{\ensuremath{\mathsf{Gen}}\xspace}
\newcommand{\setup}{\ensuremath{\mathsf{Setup}}\xspace}
\newcommand{\coloredcoinkeygen}{\ensuremath{\mathsf{ColoredCoinKeyGen}}\xspace}
\renewcommand{\verify}{\ensuremath{\mathsf{Verify}}} 
\providecommand{\sign}{\ensuremath{\mathsf{Sign}}\xspace}
\newcommand{\topup}{\ensuremath{\mathsf{TopUp}}\xspace}
\newcommand{\optopup}{\texttt{OPTOPUP}\xspace}
\newcommand{\simpleoutputscript}{\ensuremath{\mathsf{SimpleOutputScript}}\xspace}
\newcommand{\permanentoutputscript}{\ensuremath{\mathsf{PermanentOutputScript}}\xspace}
\newcommand{\coloredoutputscript}{\ensuremath{\mathsf{ColoredOutputScript}}\xspace} 
\newcommand{\simpleoutputverifyscript}{\ensuremath{\mathsf{SimpleOutputVerifyScript}}\xspace}
\newcommand{\alg}{\ensuremath{\mathsf{Alg}}\xspace}
\newcommand{\sen}{\ensuremath{\mathsf{Sen}}\xspace}
\newcommand{\out}{\ensuremath{\mathsf{OUT}}\xspace}
\renewcommand{\ds}{\ensuremath{\mathsf{DS}}\xspace}
\newcommand{\ts}{\ensuremath{\mathsf{TS}}\xspace}
\newcommand{\tsms}{\ensuremath{\mathsf{TMS}}\xspace}
\newcommand{\qs}{\ensuremath{\mathsf{US}}\xspace}
\newcommand{\oss}{\ensuremath{\mathsf{OSS}}\xspace}
\newcommand{\qsk}{\ensuremath{\mathpzc{sk}}\xspace}
\newcommand{\qsign}{\ensuremath{\mathpzc{Sign}}\xspace}
\newcommand{\qalg}{\ensuremath{\mathpzc{Alg}}\xspace}
\newcommand{\createsimpleaccount}{\ensuremath{\mathpzc{CreateSimpleAccount}}\xspace}
\newcommand{\createmultisigaccount}{\ensuremath{\mathpzc{CreateMultisigAccount}}\xspace}
\newcommand{\createrestrictedaccount}{\ensuremath{\mathpzc{CreateRestrictedAccount}}\xspace}
\newcommand{\rec}{\ensuremath{\mathpzc{Rec}}\xspace}
\newcommand{\qpt}{\ensuremath{\mathsf{QPT}}\xspace}
\newcommand{\tokengen}{\ensuremath{\mathpzc{Gen}}\xspace}
\newcommand{\qgen}{\ensuremath{\mathpzc{Gen}}\xspace}
\newcommand{\qadv}{\ensuremath{\mathpzc{Adv}}\xspace}
\newcommand{\simplepay}{\ensuremath{\mathpzc{SimplePay}}\xspace}
\newcommand{\payfrompermanent}{\ensuremath{\mathpzc{PayFromPermanent}}\xspace}
\newcommand{\coloredpay}{\ensuremath{\mathpzc{ColoredPay}}\xspace}
\newcommand{\multisigpay}{\ensuremath{\mathpzc{MultisigPay}}\xspace}
\newcommand{\dvd}{\ensuremath{\mathpzc{Divide}}\xspace}
\newcommand{\merge}{\ensuremath{\mathpzc{Merge}}\xspace}
\newcommand{\chal}{\mathcal{C}}
\newcommand{\forgepayment}{\ensuremath{\mathsf{Forge\text{-}Payment}}\xspace}
\newcommand{\forger}{\mathcal{F}}
\newcommand{\sigforge}[1]{\ensuremath{\mathsf{SIG\text{-}FORGE}}_{\forger, #1}(\secpar)}
\begin{document}

\title{Quantum Prudent Contracts with Applications to Bitcoin
\ifdraft{\\(working draft)}
}
\ifnum\anon=0
    \ifnum\sigconf=0
        \author[1]{Or Sattath}
        \affil[1]{Computer Science Department, Ben-Gurion University of the Negev}
    \else
        \author{Or Sattath}
        \affiliation{%
        \institution{Computer Science Department, Ben-Gurion University of the Negev}
        \country{Israel}}
    \fi
\else
    \ifnum\sigconf=0
        \author{}
    \fi
\fi

\ifnum\sigconf=0
    \maketitle
\fi
\begin{abstract}
Smart contracts are cryptographic protocols that are enforced without a judiciary. Smart contracts are used occasionally in Bitcoin and are prevalent in Ethereum. Public quantum money improves upon cash we use today, yet the current constructions do not enable smart contracts. In this work, we define and introduce quantum payment schemes, and show how to implement prudent contracts---a non-trivial subset of the functionality that a network such as Ethereum provides.
Examples discussed include: multi-signature wallets in which funds can be spent by any 2-out-of-3 owners; restricted accounts that can send funds only to designated destinations; and ``colored coins'' that can represent stocks that can be freely traded, and their owner would receive dividends. 
Our approach is not as universal as the one used in Ethereum since we do not reach a consensus regarding the state of a ledger. We call our proposal prudent contracts to reflect this. 

The main building block is either quantum tokens for digital signatures~\cite{BS16,CLLZ21},
semi-quantum tokens for digital signatures~\cite{Shm22}
or one-shot signatures~\cite{AGKZ20}.
The solution has all the benefits of public quantum money: no mining is necessary, and the security model is standard (e.g., it is not susceptible to 51\% attacks, as in Bitcoin). 

Our one-shot signature construction can be used to upgrade the Bitcoin network to a quantum payment scheme.  Notable advantages of this approach are: transactions are locally verifiable and without latency, the throughput is unbounded, and most importantly, it would remove the need for Bitcoin mining. Our approach requires a universal large-scale quantum computer and long-term quantum memory; hence we do not expect it to be implementable in the next few years.
\end{abstract}


\ifnum\sigconf=1
    \keywords{}
    \maketitle
\fi

\newpage
\tableofcontents
\newpage

\section{Introduction} 
\label{sec:introduction}

\setlength{\epigraphwidth}{0.9\textwidth}
\epigraph{
[Smart contracts are] systems which automatically move digital assets according to arbitrary pre-specified rules. For example, one might have a treasury contract of the form ``A can withdraw up to X currency units per day, B can withdraw up to Y per day, A and B together can withdraw anything, and A can shut off B's ability to withdraw''.
}{Vitalik Buterin~\cite{But14}.}


The advantage of smart contracts, as loosely defined above, over standard legal contracts is that they are enforced by a protocol instead of the judiciary.
Smart contracts have flourished since their conception in the 90s. For example, Bitcoin supports a limited low-level scripting language that can be used to perform multi-sig transactions, atomic cross-chain swaps, and complex two-way payment channels, which culminated in the lightning-network~\cite{NBF+16,PD15}. The raison d'être of Ethereum, another prominent crypto-currency, which is second only to Bitcoin in terms of market cap as of 2022, is to allow \emph{arbitrary} smart contracts by programming the contract in a  Turing-Complete language~\cite{wood14ethereum}. 
An interesting ramification of this idea is a \emph{Decentralized Autonomous Organization} (DAO).\footnote{\emph{The DAO}, an attempt to instantiate this concept, failed spectacularly, see ~\cite{Dup18}.}

Quantum money, invented by Wiesner circa 1969~\cite{Wie83}, was further developed in the past two decades. Several flavors of quantum money were defined and constructed. Some constructions are noise tolerant~\cite{PYJ+12,MVW13,AA17}; quantum coins provide privacy (in the form of untraceability) to their users~\cite{MS10,JLS18,BS20}; there are public quantum money schemes (also called locally verifiable money)~\cite{Aar09,FGH+12,AC13,Zha21,RS22,BS20} in which the bank is not needed in the verification; in semi-quantum money, minting and paying can be done via a classical channel~\cite{RS22,Shm21}; and in quantum lightning schemes, even the bank cannot forge the money~\cite{Zha21,FGH+12}. 

Quantum money, and especially public quantum money, resemble almost all the properties of cash.\footnote{The most significant improvements over cash are that public quantum money provides formal unforgeability guarantees and that (public) quantum money can be transferred using a communication channel, whereas cash requires the payer and the receiver to be at the same location. For a more detailed comparison, see ~\cite{HS21}.} Can we add smart contracts to quantum money? A general solution --- such as the one Ethereum provides --- seems out of reach. Blockchain based solutions rely on reaching consensus about the state of the system, and therefore require ongoing communication.\footnote{See Ref.~\cite{CS20} for a hybrid approach which combines crypto-currencies with quantum money. The approach there  assumes a p2p network that maintains a distributed ledger, whereas this work does not.}
This is in sharp contrast to public quantum money, which is locally verifiable, i.e., transactions involve only the sender and the receiver. Nevertheless, as we show in this work, some limited forms of smart contracts can be achieved. We use the term \emph{prudent} contracts to emphasize that they are not as clever as smart contracts, yet using these contracts is ``exercising sound judgement in practical or financial affairs''.\footnote{The definition is taken from the Oxford English Dictionary.}

\paragraph{Quantum Tokens for Digital Signatures.}
A quantum token for digital signature scheme, or simply a tokenized signature (\ts) scheme, is an extension of digital signatures which can be used to \emph{delegate} the right to sign at most one message. In more detail, the master secret key can be used to generate a single-use quantum signing token. Each quantum signing token has an associated classical public key. A user who holds the quantum signing token can use it to sign an arbitrary message, and the signed message can be verified using the master public key and the public key. The unforgeability guarantee is that an adversary, which receives the master public key and a polynomial number of quantum signing tokens and their public keys, cannot generate two distinct valid signed messages  which pass verification with the same public key. The intuition for unforgeability is that the quantum signing token is measured destructively during the signing algorithm, and therefore cannot be reused to generate another signature. This is formalized in \cref{def:tss}. 

\paragraph{Semi-quantum Tokenized Digital Signatures.}
A semi-quantum tokenized digital signature scheme is the semi-quantum variant of quantum tokenized digital signatures, similar to how semi-quantum money schemes~\cite{RS22,Shm22} are for public-key quantum money schemes. The main difference and advantage over quantum tokenized signatures is that a classical issuer can use the master key to run a classical communication protocol with a quantum user; at the end of which an honest user ends up with a quantum single-use token and a public key associated with it. The unforgeability guarantees are the same as that in the case of tokenized signatures. An adversary who receives the master public key and runs the token generation protocol polynomially many times, cannot generate two distinct signed messages that pass verification with respect to the same public key.
In Ref.~\cite{Shm22} it was shown how semi-quantum tokenized signatures could be constructed based on sub-exponential quantum hardness for LWE and post-quantum indistinguishability obfuscation.

\paragraph{One-shot signatures.}
One-Shot Signatures (\oss) is a cryptographic primitive which was introduced by Amos et al.~\cite{AGKZ20}, which could be viewed as an extension of tokenized signatures and quantum lightning~\cite{Zha21}. Quantum lightning is less relevant for this work.

One-shot signatures could be viewed as tokenized signatures without master keys: the master public key is replaced with a common reference string ($\crs$), and there is no secret key. Instead, anyone could generate a public key and quantum signing key pair. The security guarantee is changed only slightly: the goal of the adversary, who receives the $\crs$ is similar: to produce two distinct signed messages that pass verification with the same public key, and the $\crs$.  Note that this time, there is no need to provide quantum signing tokens to the adversary, since the adversary can produce these signing tokens using the $\crs$. 
The intuition for unforgeability which was mentioned for tokenized signature extends to the generation algorithm of one-shot signatures: the generation algorithm involves a measurement. Therefore repeating the same algorithm would yield a different $(\pk,\qsk)$ pair.

In Ref.~\cite{AGKZ20} it was shown how \oss could be constructed based on a classical oracle, and recently Ref.~\cite{DS22} constructed \oss based on any non-collapsing hash-function. Unfortunately, as far as the author is aware, there are no hash-functions which are proved or even conjectured to be non-collapsing. 
This primitive might appear strange: note that anyone can create a quantum signing key using the $\crs$ (public information). How can that be useful? We discuss the paragraph after the next (\cpageref{par:quantum_money_from_OSS}) how \oss can be used to construct public quantum money, demonstrating this point.

\paragraph{Uncloneable signatures}  For most of the constructions in the article, we either need a tokenized signature, a semi-quantum tokenized signature or an one-shot signature scheme. 
We use the term ``Uncloneable Signature'' to represent all three of them, i.e., an uncloneable signature scheme is either a tokenized signature, a semi-quantum tokenized signature scheme, or an one-shot signature scheme. In our unified framework, a quantum signing key can refer to a tokenized signature token, a semi-quantum tokenized signature token or a one-shot signature quantum signing key. 

\paragraph{Quantum money from uncloneable signatures.}
\label{par:quantum_money_from_OSS}
We will first discuss how \oss imply public quantum money. 
Amos et al.~\cite{AGKZ20} showed how \oss can be used as public quantum money, with the additional novel property that making a transaction does not require any quantum communication channels between the participants. Our next goal is to describe their idea at a high level. Their construction also uses a (post-quantum) digital signature scheme. 

The bank generates a digital signature verification key and (classical) signing key for the digital signature scheme, and publishes the verification key and a \crs.
To mint, the user generates an \oss public key and quantum signing key and sends the \oss public key to the bank. The bank digitally signs the \oss public key.

The quantum money would consist of the \oss quantum signing key and a chain of signatures at any later point. To send the money without using any quantum communication, the sender and the receiver run the following two message protocol.
First, the receiver prepares a new one-shot signature signing key and public key pair and sends its public key to the sender. The sender signs the receiver's public key using the quantum signing key and appends the receiver's public key and the signature to the chain of signatures. Overall, the chain of signatures has the form $\pk_0,\sigma_0,\pk_1,\sigma_1,\pk_2,\sigma_2\ldots,\pk_n,\sigma_n$.
The receiver would check that $\sigma_i$ is indeed a valid signature of the message $\pk_{i+1}$, according to the key $\pk_i$. The receiver would also check that $\pk_0$ is the bank's verification key. Note that the receiver holds the quantum signing key associated with $\pk_n$ and could extend the chain as needed.

Amos et al. also generalize this construction to allow for spending any fraction of a unit of currency by using a primitive called \emph{budget signatures}; we do not use budget signatures in this work. 

We now discuss a similar approach that can be used even with a tokenized signature scheme. In this case, the bank creates a tokenized signature master public key, a master secret key, and the post-quantum digital signature signing key and verification key. The bank uses the master secret key to generate a public key and a quantum signing key for any user asking for such a pair. Note that this is the only part that requires quantum communication between the bank and the user if we use a tokenized signature scheme. We emphasize that a user can ask for many such key pairs. At this point, the quantum signing key do not have any monetary value associated with it. 

Minting is done similarly: the user sends the tokenized signature public key that the user received earlier to the bank. The bank signs the public key using the digital signature signing key, just as in the case of one-shot signatures.
The receiver must have an unused tokenized quantum signing key to receive money. The receiver sends the tokenized public key to the sender. The sender signs the tokenized public key using her quantum signing key. The form of the chain of signatures is the same as before, and verification is done in the same manner.

We can also use a semi-quantum tokenized signature scheme instead of a tokenized signature scheme. The only difference is in the way the user gets the public key and quantum signing key pair. In this case, the bank uses the master secret key to instead run a protocol with classical communication with the user asking for it. The main advantage is that in this case, no quantum communication is required.

\paragraph{Quantum payment schemes.} 
\label{par:payment_schemes}

One of our contributions is the notion of a quantum payment scheme. In existing quantum money schemes, each quantum money state represents some fixed denomination determined by the bank. 
In a quantum payment scheme, the quantum money state is a quantum signing key and is associated with a classical account (which consists of the public key associated with the quantum signing key, as well as other relevant classical data). In some cases, one account can be associated with several quantum signing keys, but that will be explained later in more detail.
Unlike vanilla quantum money, the account can  have an arbitrary non-negative integer balance. 
The account and the associated quantum signing key is created with a $0$ balance. Whether the bank is involved in the creation of an account depends on the choice of instantiation: see \cpageref{par:instantiation} for details. 
The bank can top-up any account the bank chooses, increasing its balance by $1$. The top-up algorithm is classical and generates a (classical) witness that is sent to the user, and certifies that the balance increased by $1$.
The quantum signing key is used only once during payment. 
Unlike quantum money, in which the whole amount is moved from the sender to the receiver, in a payment scheme, payments can reallocate the balance in the account in more advanced ways. A straightforward example is that Alice, that has \$30 in her account, pays \$5 to Bob's account, \$10 to Charlie's account, and the remaining \$15 to another account of hers. 
Interestingly, the payment could be initiated \emph{before} the account has any balance --- a property which is used in some of our applications. Of course the receiving accounts' balance would increase only \emph{after} payments are made to the originating account. 

The payments are made by using the quantum money state (which,  recall, is a quantum signing key) to sign a message in an appropriate format. 
The receiver, in this case, provides the receiving account, and the payment algorithm generates a classical payment witness, which certifies that the receiver's balance increases by the amount sent by the payer. Since the payment witnesses are classical, no quantum channel is needed for making payments, as in the quantum money from one-shot signature construction discussed above. 
To receive a transaction, the receiver needs to have an account and the associated quantum signing key.\footnote{Note that the account can be used to receive many payments, but can be used only once to send payments.}
As discussed in ~\cite{RS22}, the fact that classical communication is used in a transaction has a number of advantages, even if quantum communication is available: (i) unlike classical information, quantum information cannot be resent in case of some communication failure. Such a failure  is especially important here as it could cause money being lost. (ii) A signed version of all the communication between the parties could be recorded. This could be helpful for dispute resolution, internal control, external auditing, etc. Of course, quantum communication cannot be recorded and it might be harder to achieve the goals above.    

The $\balance$ algorithm uses the witnesses generated through the payments to compute the balance of an account. Note that a positive balance of an account is only useful to someone holding the quantum money associated with it. The $\balance$ algorithm does not check that. 

Quantum payment schemes are formally defined in \cref{sec:definitions_of_quantum_payment_scheme}. 

\paragraph{A warm-up quantum payment scheme.}
\label{par:introduction_warm-up_scheme}
In \cref{sec:warm-up_quantum_payment_scheme} we construct our first quantum payment scheme.
This scheme requires a digital signature scheme $\ds$ and an uncloneable signature scheme $\qs$. 
The bank's public key, in this scheme, contains the master public key of the uncloneable signature scheme $\qs$ and the verification key of the digital signature scheme $\ds$.
The bank's secret key is the signing key of $\ds$, and if $\qs$ is a tokenized or a semi-quantum tokenized signature scheme, it also contains the respective master secret key.

To generate an account, the $\qgen$ algorithm of $\qs$ is used: The quantum signing key serves as the quantum money and the public key as the account. Note that if $\qs$ is a tokenized signature scheme, this can be done only by the bank, who would send the quantum signing token and the public key to the user via a quantum channel; If $\qs$ is a semi-quantum tokenized signature scheme, this could be done by running a classical communication protocol with the classical bank with the master secret key and the quantum user; If $\qs$ is an one-shot signature scheme, this could be done by the user directly, without the bank. Recall that at this point, the account has no balance.

To $\topup$ an account, the user sends the account to the bank, and the bank uses the signing key of $\ds$ to sign the account, together with some random string.

To make a payment, the $\simplepay$ algorithm uses the quantum signing key to generate a signature of a message containing all the accounts to pay to. Along with the witness that the paying account has a sufficient balance, this signed message serves as the payment witness for this transaction. 

The $\balance$ algorithm for an $\account$ receives a payment witness list of length that is precisely the balance claimed.
Each such witness can be either a top-up witness, which means the account was topped-up directly by the bank, and in this case, the verification checks that the account and the random number were signed correctly. 
We also check that these (signed) messages are unique so that the same top-up cannot be reclaimed twice: this motivates the inclusion of a random string during top-up, which was mentioned in the penultimate paragraph. 
A witness can be a witness for a payment made from another account. In this case, we check that the signed message is a valid signature with respect to the uncloneable signature scheme, and check recursively that the paying account had a sufficient balance. As was done for top-up witnesses, we need to ensure that the same witness is not reclaimed twice. 

\paragraph{Our main quantum payment scheme construction.}
Our main payment scheme provides more flexibility through several modifications (cf. the warm-up payment scheme above).
\begin{enumerate}
    \item In this construction, an account can be associated with several uncloneable signature public keys. This can be used to allow, e.g., 2-out-of-3 payments, see \cref{sec:backup_and_multiple_signatures} for more details.  This is done by introducing the $\siginterpreter$ program. This is a program that is specified by the user and is part of the account (so, now an account consists not only of the uncloneable signature public keys, but also of $\siginterpreter$). The $\siginterpreter$ receives some messages, some of which could be missing, and outputs a single message, which represents the final decision. So, for example, a $\siginterpreter$ could output the majority of its input (even if some inputs are missing). One example where this is useful is for backup purposes: suppose one of the quantum signing keys in a m-out-of-n account gets lost. In this case, the message associated with the lost quantum signing key would be a missing message.
    Not all $\siginterpreter$ programs are valid. For example, if the account is associated with $2$ public keys, we cannot allow 1-out-of-2 payments, since this would allow a malicious user to double spend, by using the first quantum signing key to transfer the funds to Alice, and the second quantum signing key to transfer the funds to Bob. To overcome this issue, we define the notion of a monotone sig-interpreter (see \cref{def:siginterpreter} for more details), in which this problem cannot occur.  
    \item Recall that in the warm-up scheme, upon payment, the user signed a message containing the list of all accounts to which the funds should be transferred to. In our main scheme, instead, the user signs an $\outputscript$: a program which given $i$ returns the account to which that $i^{\text{th}}$ dollar would be transferred to. The advantage is that we can use a finite program to specify infinitely many outputs. For example, a program could specify that for any $i>10$, the $i^{\text{th}}$ dollar is transferred to $\account_{\text{default}}$. 
    \item Upon payment, the user must specify and sign yet another program, which can block or approve the payment. This program is denoted $\outputverifyscript$. To approve the $i^{\text{th}}$ output, the \emph{receiver} needs to know a verify-script witness, denoted $\vswitness$, such that $\outputverifyscript(i,\vswitness)=1$. This is useful in cases where the sender would like to withhold the payment and reveal the $\vswitness$ at a later step, as part of a more elaborated protocol. An example in which this additional feature is used is Secure Exchange (see \cref{sec:secure_exchange}). In most of our use cases $\outputverifyscript$ always outputs $1$ for every input, and therefore has no meaningful role.
    \item Upon signing, the user can choose to include some classical auxiliary information, denoted $\aux$, in the message to be signed. The $\balance$ algorithm ignores this auxiliary information. $\aux$ can be useful when extending the scheme for other purposes; see such examples in \cref{sec:proof_of_reserves,sec:colored_coins_smart_property_and_tokenized_securities}. 
\end{enumerate}

\paragraph{Applications through extensions.} 
\label{par:applications}
One of the main design goals of our main construction is \emph{flexibility}. As discussed in the previous paragraph, the user has the freedom to choose how many quantum signing keys the account is associated with and the $\siginterpreter$ when creating the account; and to specify the $\outputscript$ and $\outputverifyscript$ programs, as well the auxiliary information $\aux$ upon payment. By using the flexibility that our main construction provides, we show a number of applications, that cash and all the existing quantum money schemes do not provide. 

\begin{enumerate}
    \item \textbf{Divisibility and mergeability.} Banknotes and coins suffer from being harder to use due to issues related to \emph{change}. For example, merchants have to carry enough change to give their customers; this change is also cumbersome for the customers to carry around. 
    
    Cash does not provide any means to \emph{merge} bills: taking two bills with values $v_1$ and $v_2$, and combine them to a single bill with value $v=v_1+v_2$. \footnote{Here, in the context of quantum money, we mean that the number of qubits per bill should remain fixed. Therefore, storing both quantum states would not be considered a legitimate solution.}

    Despite the privacy that cash provides, it is declining as a medium of exchange in many countries~\cite{Jyr04}. Interestingly, its shortcoming concerning divisibility and mergeability (compared to debit cards and other forms of electronic payments) was linked to that decline: Amromin and Chakravorti~\cite{AC07} have found that ``the demand for low denomination notes and coins decreases as debit card usage increases because merchants need to make less change for customer purchases.'' This suggests that resolving this shortcoming may be key to restoring the consumers' privacy.
    
    In \cref{sec:divisibility_and_mergeability} we show that any quantum payment scheme provides divisibility and mergeability. 

    \item \textbf{Permanent Account.}\label{it:permanent_accounts} 
    Bitcoin users can receive multiple payments to the same address (Bitcoin addresses have the same role that accounts have in quantum payment schemes). The sender might save the receiver's address in her wallet's address book. This reduces the need for authenticating the address upon further payments.
    One example is shown in \cref{fig:le_desepere}, where a Bitcoin address is posted in public. We show how to construct \emph{permanent accounts}, i.e., accounts that can accept payments indefinitely. We show a construction for a permanent account using our quantum payment scheme in \cref{sec:permanent_accounts}. 
    \begin{figure}[phtb]
        \centering
        \includegraphics[width=\textwidth]{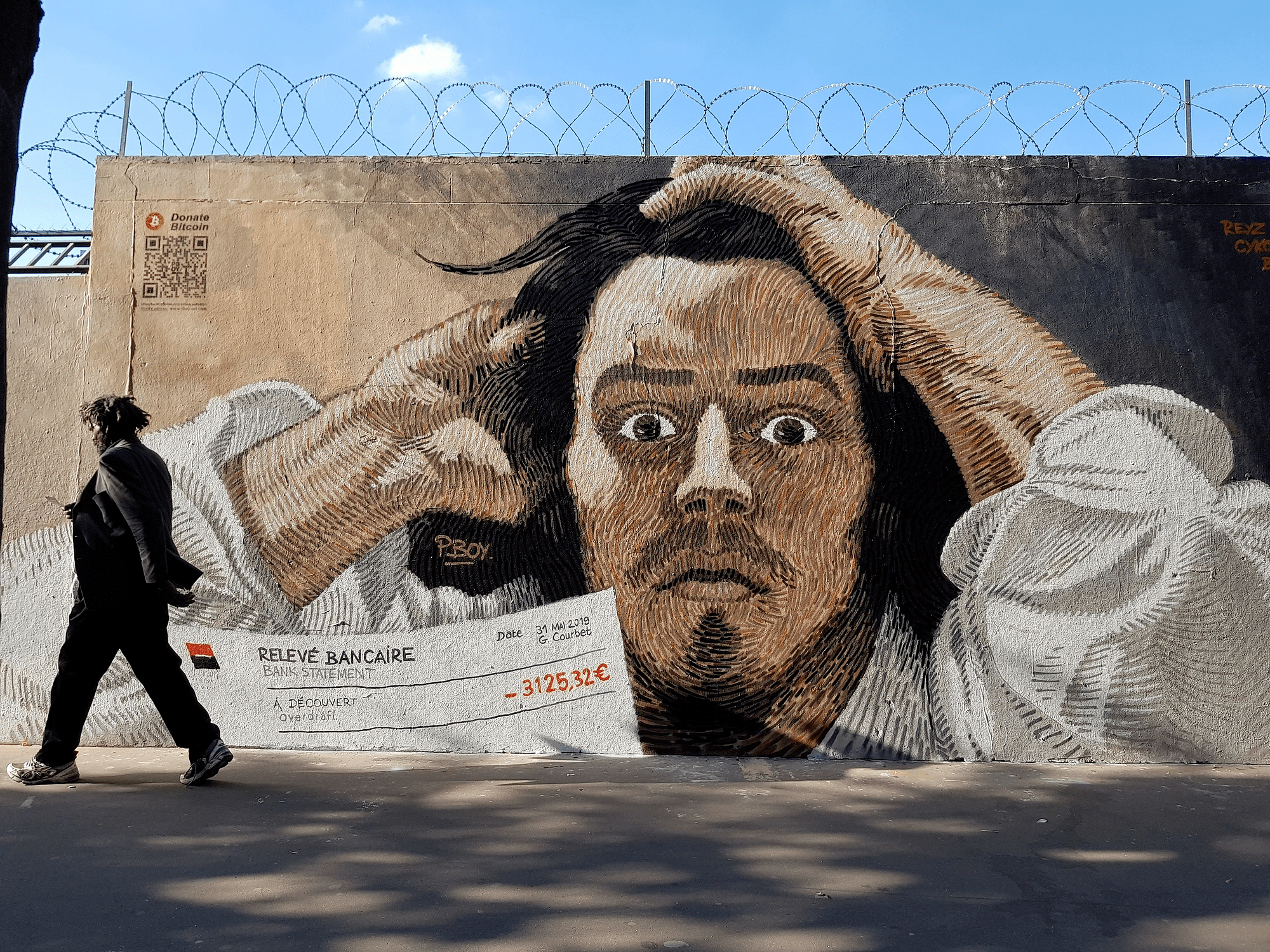}
        \caption{Le Désespéré 2019. Credit: \href{https://www.pboy-art.com/single-post/2019/05/31/fresque-le-d\%C3\%A9sesp\%C3\%A9r\%C3\%A9-2019}{PBoy}. This street-art is inspired by the 
        \href{https://commons.wikimedia.org/wiki/File:Gustave_Courbet_-_Le_D\%C3\%A9sesp\%C3\%A9r\%C3\%A9.JPG}{self-portrait} of Gustave Courbet with (almost) the same name. Bystanders can scan the QR code on the top left of the graffiti with their Bitcoin wallets and send their donations to the artist.     
        }
        \label{fig:le_desepere}
    \end{figure}

    \item \textbf{Secure exchange.} When exchanging currency (e.g., EUR for USD), the first mover is at risk that the second party will cheat and run away with the money. In \cref{sec:secure_exchange} we show a protocol in which the dishonest party can \emph{lock} the other party's funds, but \emph{not} steal it. The same approach can be used to exchange money securely between our quantum payment scheme, and other cryptocurrencies such as Bitcoin, providing the same guarantee. As far as the author is aware, this provides the first example of interoperability between cryptocurrencies and any form of quantum money. 
    
    \item \textbf{Backup and multiple signatures.} There is no way to ``backup'' cash, and losing access to it means that the money is lost. 
    Another important concept for security and internal control is multiple signatures, in which the ability to spend the money requires a subset of the owners. For example, a joint bank account can be set so that two signatures out of the three account owners are needed to sign a check. Cash does not provide this feature. 
    We show a construction that allows backup, in some scenarios, and some forms of multiple signatures, in \cref{sec:backup_and_multiple_signatures}.

    \item \textbf{Restricted account.} A parent might want to deposit funds for a child, which could be spent to some designated list of accounts (for example, money that is intended to be used for college tuition).
    Cash does not provide such a mechanism. Restricted accounts are constructed in \cref{sec:restricted_accounts}.
    
    \item \textbf{Proof of reserves.} A proof of reserves provides a guarantee that the prover possesses a certain amount of money. So far, it was used extensively in Bitcoin~\cite[Section 4.4]{NBF+16} --- e.g., some Bitcoin exchanges prove to their customers that their Bitcoin holdings are at least as high as their liabilities. 
    We show a construction for proof of reserves in \cref{sec:proof_of_reserves}.
    
    \item\textbf{Colored coins.} Cryptocurrencies, such as Bitcoin, can be used to ``color'' a certain coin. Since the history of Bitcoin transactions can be traced, it is fairly easy to transfer ownership of these colored coins and use them for tasks such as unlocking a car or representing company shares that provide voting rights and a mechanism to pay dividends to the shareholders. Our construction for colored coins, smart property and tokenized securities is discussed in \cref{sec:colored_coins_smart_property_and_tokenized_securities}.

\end{enumerate}

 \paragraph{Instantiation.} 
\label{par:instantiation}
Our construction can be instantiated with an uncloneable signature scheme, i.e., either a tokenized signature, semi-quantum tokenized signature, or an one-shot signature scheme.
We now discuss these options in detail:
\begin{itemize}
    \item Tokenized signatures. Coladangelo et al.~\cite{CLLZ21} have shown how to construct tokenized signatures based on post-quantum Indistinguishability Obfuscation ($i\mathcal{O}$), and (the very mild assumptions of) post-quantum one-way functions, and post-quantum digital signatures. Even though some $i\mathcal{O}$ constructions, such as the one which is based on ``well-founded assumption''~\cite{JLS21} are not post-quantum secure, recent constructions are provably secure under post-quantum assumptions, though some of the assumptions are less standard (namely, the circular security of some homomorphic encryption scheme) \cite{BDGM20,GP21}. 
    
    The main disadvantage of this approach compared to the others is that in order to create an account, there needs to be quantum communication between the bank and the user opening the account. 
    
    If the receiver in a transaction does not have any account left, then the sender and the receiver must have quantum communication and can use the same approach which is used for existing public quantum money: The receiver in this case can use the verification algorithm which checks the validity of the quantum signing key---see \cref{rem:verify_token}, and the $\balance$ algorithm to check that account's balance. This way for handling the case where the receiver does not have an account also holds for semi-quantum tokenized signatures described next.
    \item Semi-quantum tokenized signatures. Recall that in the case of tokenized signatures, the bank needs to create the account, which contains a quantum signing key, and send the quantum signing key to the user via a quantum channel. With semi-quantum tokenized signatures, there is no need for a quantum channel: the (now classical) bank and the user could run the (interactive) token generation protocol instead.
    
    Recently, semi-quantum tokenized signatures were formally defined and  constructed based on sub-exponential hardness of $LWE$ against \qpt adversaries, and the existence of post-quantum $i\mathcal{O}$, see,~\cite{Shm22}.
    \item One-shot signatures. Here, users do not need to interact with the bank other than for $\topup$s since an account can be created without the bank's assistance.
    
    The only construction known, by Amos et al.~\cite{AGKZ20} is relative to a classical oracle.
\end{itemize}

\paragraph{A transition from Bitcoin mining.}
In \cref{sec:transition_from_bitcoin_minig}, we outline an upgrade path for Bitcoin and other similar cryptocurrencies. The approach we use is similar to the one used in Ref.~\cite{CS20}. 
The prudent contracts in this work provide most of the functionality used in Bitcoin.
The main advantages of our approach compared to the current way the Bitcoin network operates are the following. Our approach does not require mining (and therefore does not consume enormous amounts of energy and does not assume an honest majority). Transactions are locally verifiable, incur no fees, and have low latency. Our solution crucially relies on one-shot signatures. Since the upgrade requires universal quantum computers and long-term quantum memories, it would not be practical in the next few years to the very least.

\paragraph{Organization.} 
\label{par:organization}
The outline of the rest of the paper is as follows. In \cref{sec:definitions} we introduce some notations, and preliminary definitions, most importantly, tokens for digital signatures, semi-quantum tokenized digital signatures and one-shot signatures. In \cref{sec:definitions_of_quantum_payment_scheme} we define quantum payment schemes, 
and in \cref{sec:why_existing_techniques_not_applicable} we discuss why existing scripting languages, or other domain specific languages are not easy to adapt to our setting. In \cref{sec:warm-up_quantum_payment_scheme} we construct a warm-up quantum payment scheme, and prove its security. In \cref{sec:divisibility_and_mergeability} we show how any quantum payment scheme, such as the warm-up scheme, provides divisibility and mergeability. In \cref{sec:construction_payment_scheme} our main quantum payment scheme is introduced. The security of the scheme is proved, by reusing most of the security proof of the warm-up scheme. The added features of our main construction are used in several applications, which are discussed in \cref{sec:applications}. A proposed upgrade to the Bitcoin network which removes the need for Bitcoin mining is presented in \cref{sec:transition_from_bitcoin_minig}.

\section{Notations and preliminaries} 
\label{sec:definitions}

We recommend the following introductory textbooks: for classical cryptography see ~\cite{KL14,Gol01,Gol04}, for cryptocurrencies see~\cite{NBF+16}, for quantum computing see~\cite{NC11}. See also the following survey for quantum cryptography~\cite{BS16b}. 

We mostly follow the notation conventions by Amos et al.~\cite{AGKZ20}.
We use calligraphic fonts with a capital letter to represent quantum algorithms (e.g., \qalg), lowercase calligraphic font for quantum states (e.g., \qsk for a quantum signing key), capitalized sans-serif for classical algorithms (e.g., $\mathsf{Alg}$) and lowercase classical variables (e.g., $\account$) and also standard math font for single-letter variables (e.g., $n$ and $i$). For cryptographic schemes, we use all capitals (e.g., $\ds$ for a digital signature). To avoid ambiguity, we use the scheme followed by the name of the algorithm (e.g., $\ds.\verify(\cdot)$).

For a finite set $S$, we use the shorthand $x \gets S$ for the random process where $x$ is sampled uniformly at random from $S$. Similarly, $x \gets \alg(\cdot)$ is the process in which $x$ is sampled from the distribution $\alg(\cdot)$ as specified by the algorithm. 
A function $\epsilon$ is negligible if $\epsilon(n)\in o(n^{-c})$ for all $c\in \NN$. 

While stating a new algorithm we often compare it to a similar one, and use strike-through to represent parts from the previous algorithm which are removed from the current algorithm (e.g., \sout{some removed part}), and curly underlines to represent additional parts that were not present in the previous algorithm (e.g., \uwave{an additional part}).

Next, we will define a new algorithm called $\verify^2$ that is based on the algorithm $\verify$ that will be used in the unforgeability definitions of all the three uncloneable signature schemes.

\begin{definition}\label{def:verify-2}
Given an algorithm $\verify$ which takes as an input $\mpk,\pk,m,\sigma$ we define $\verify^2(\mpk,\pk,m_0,m_1,\sigma_0,\sigma_1)$ that returns $1$ if and only if 
\begin{enumerate*}
    \item $m_0\neq m_1$, \label{it:check_m0m1}\\
    \item $\verify(\mpk,\pk,m_0,\sigma_0)=1$, and 
    \item $\verify(\mpk,\pk,m_1,\sigma_1)=1$.
\end{enumerate*}
\label{def:verify_2}
\end{definition}
\begin{remark}
    If we replace \cref{it:check_m0m1} with $(m_0,\sigma_0)\neq (m_1,\sigma_1)$ in the definition of $\verify^2$ above and use it instead in the unforgeability definitions, then we get a stricter notion of unforgeability called strong unforgeability. The difference between standard unforgeability (i.e., without the modification in \cref{def:verify_2}) and strong unforgeability has been discussed in detail in~\cite{BSW06,ADR02}. 
    One-shot signatures as defined in~\cite{AGKZ20}
    satisfy strong unforgeability. 
    However in this work, we only use standard unforgeability.
    \label{rem:standard_vs_weak_unforgeability} 
\end{remark}

\begin{definition}[Tokens for digital signatures]
A tokenized signature scheme consists of 4 algorithms, \setup (\ppt), \tokengen\ (\qpt), \qsign\ (\qpt), and \verify\ (deterministic polynomial time) with the following syntax:
\begin{enumerate}
\item $(\mpk,\msk) \gets \setup(\secparam)$: takes a security parameter and outputs a classical master public key \mpk, and a classical master secret \msk.
\item $(\pk,\qsk) \gets \tokengen(\msk)$: takes the master secret key $\msk$ and outputs a classical public key $\pk$, and a quantum signing key $\qsk$. We emphasize that if $\tokengen(\msk)$ is called $\ell$ times, it may (and for reasons related to unforgeability, which will become apparent soon, should, with overwhelming probability) output different states $\qsk_1,\ldots,\qsk_{\ell}$ and different public keys $\pk_1,\ldots,\pk_\ell$.
\item $\sigma \gets \qsign(\qsk,m)$: takes a quantum signing key $\qsk$, and a classical message $m$, and outputs a classical signature $\sigma$.
\item $b \gets \verify(\mpk,\pk,m,\sigma)$: receives a classical master public key, a classical public key, a classical message $m$ and an alleged classical signature. It outputs a bit $b$, with $b=1$ meaning \textbf{valid} and $b=0$ meaning \textbf{invalid}.
\end{enumerate}
\paragraph{Correctness.} If $(\mpk,\msk)\gets \setup(\secparam)$ and $(\pk,\qsk)\gets \tokengen(\msk)$ then for any message $m\in\{0,1\}^*$, $\verify(\mpk,\pk,m,\qsign(\qsk,m))=1$ with overwhelming probability. 
\paragraph{Unforgeability.} For any $\qpt$ adversary $\qadv$, there is a negligible function $\epsilon$ such that for all $\secpar$:
\begin{equation}
\Pr\left[
\begin{array}{c}
    (\mpk,\msk)\gets \setup(\secparam)\\
    (\pk,m_0,m_1,\sigma_0,\sigma_1) \gets \qadv^{\tokengen(\msk)}(\mpk)
\end{array}
:
\begin{array}{c}
    \verify^2(\mpk,\pk,m_0,m_1,\sigma_0,\sigma_1)=1
\end{array}
 \right]\leq \epsilon(\secpar)
\label{eq:ts_unforgeability}
\end{equation}
\label{def:tss}
\end{definition} 

 \begin{remark}
\label{rem:syntax-tokenized}
    The definitions of tokenized signatures and semi-quantum tokenized signatures (\cref{def:tss,def:semi-quantum-tokenized-signatures} respectively) are slightly different, both it term of syntax and unforgeability than the ones used in prior works~\cite{BS16,CLLZ21,Shm22}. We adapted the definition so that we could use a unified framework that uses very similar unforgeability definitions for tokenized and semi-quantum tokenized signatures and one-shot signatures. The difference between the original definition and the one here, as well as the applicability of the existing constructions to the definition above, is discussed in detail in \cref{sec:tokenized_signatures_unforgeability}.
\end{remark}

\begin{remark}
In prior works, the tokenized signature scheme had a fifth algorithm, which verifies the validity of the quantum signing key. This algorithm, as well as the unforgeability properties are mostly irrelevant for this work, and therefore are omitted.
\label{rem:verify_token}
\end{remark}
 
Coladangelo et al. showed that there exists a tokenized signature mini-scheme (see \cref{def:tss_mini_scheme} in \cref{sec:tokenized_signatures_unforgeability}) under some cryptographic assumptions, see \cref{thm:tokenized_signatures_mini-scheme_exist}. Based on this, we show the following result.

\begin{theorem}\label{thm:tokenized_signatures_exist}
Assuming post-quantum indistinguishability obfuscation and one-way function 
there exists an unforgeable tokenized signature (see \cref{def:tss}). 
\end{theorem}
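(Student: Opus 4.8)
The plan is to upgrade the single-token \emph{mini-scheme} provided by \cref{thm:tokenized_signatures_mini-scheme_exist} (itself built from post-quantum $i\mathcal{O}$ and one-way functions) into a full multi-token tokenized signature scheme by adding a classical ``certification layer'', following the standard mini-scheme-to-scheme amplification used for quantum-money-type primitives in~\cite{BS16,CLLZ21}. First I would fix a post-quantum existentially unforgeable ($\mathsf{EUF}$-$\mathsf{CMA}$) digital signature scheme $\ds$; such a scheme follows from post-quantum one-way functions by the standard classical constructions (one-way functions yield universal one-way hash functions, which yield signatures), and these remain secure against \qpt adversaries whenever the underlying one-way function is. Write $\qs_0$ for the mini-scheme (\cref{def:tss_mini_scheme}), whose token generation needs no master secret key and outputs $(\pk',\qsk)$, and which is unforgeable provided its token generation is invoked only once.

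The construction itself is as follows. $\setup(\secparam)$ runs $\ds.\gen(\secparam)$ to obtain $(\mathsf{vk},\mathsf{sgk})$ and sets $\mpk=\mathsf{vk}$, $\msk=\mathsf{sgk}$. $\tokengen(\msk)$ runs $\qs_0.\tokengen$ to get $(\pk',\qsk)$, computes $\sigma'\gets\ds.\sign(\msk,\pk')$, and outputs $\pk=(\pk',\sigma')$ together with the quantum key $\qsk$. $\qsign(\qsk,m)$ is just $\qs_0.\qsign(\qsk,m)$, and $\verify(\mpk,(\pk',\sigma'),m,\sigma)$ accepts iff $\ds.\verify(\mpk,\pk',\sigma')=1$ and $\qs_0.\verify(\pk',m,\sigma)=1$. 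Correctness is immediate from the correctness of $\ds$ and of $\qs_0$.

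For unforgeability, suppose a \qpt adversary $\qadv$ with oracle access to $\tokengen(\msk)$ outputs $(\pk=(\pk',\sigma'),m_0,m_1,\sigma_0,\sigma_1)$ making $\verify^2=1$ with non-negligible probability; let $q=q(\secpar)$ bound the number of oracle calls and $\pk'_1,\dots,\pk'_q$ the mini-scheme public keys returned. I would split into two cases. (i) If $\pk'\notin\{\pk'_1,\dots,\pk'_q\}$, then $\sigma'$ is a valid $\ds$-signature on a message never submitted to the signing oracle, so a reduction that generates its own $\qs_0$ tokens and uses the $\ds$-signing oracle only to produce the certificates breaks $\mathsf{EUF}$-$\mathsf{CMA}$ security of $\ds$. (ii) If $\pk'=\pk'_{i^\star}$ for some $i^\star$, the reduction guesses $i^\star\gets[q]$ (a polynomial loss), generates the $\ds$ keys itself, answers the $i^\star$-th query using the mini-scheme \emph{challenge} token (certifying its public key with the reduction's own $\ds$-signing key) and every other query by running $\qs_0.\tokengen$ locally; then, since $\verify^2=1$ forces $m_0\neq m_1$ and both of $(m_0,\sigma_0),(m_1,\sigma_1)$ valid under $\pk'$, the tuple $(m_0,m_1,\sigma_0,\sigma_1)$ is a mini-scheme forgery against the single challenge token. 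Both cases occur with negligible probability, hence so does $\qadv$'s success.

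The part requiring the most care is case (ii): the reduction must present a \emph{single}-token adversary to the mini-scheme challenger, which is possible precisely because the mini-scheme has no master secret (so all $q-1$ non-challenge tokens can be produced locally) and the certification of the challenge public key is done with the reduction's own $\ds$ signing key, the guessing step costing only a polynomial factor. I would also verify that the mildly non-standard syntax and unforgeability game of \cref{def:tss} (flagged in \cref{rem:syntax-tokenized}) is exactly what this reduction delivers, and that the known mini-scheme constructions satisfy the mini-scheme definition used here — the bookkeeping that is deferred to \cref{sec:tokenized_signatures_unforgeability}.
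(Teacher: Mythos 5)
Your proposal is correct and follows essentially the same route as the paper: instantiate the mini-scheme from \cref{thm:tokenized_signatures_mini-scheme_exist}, obtain a PQ-EU-CMA signature scheme from post-quantum one-way functions, certify each mini-scheme public key with the bank's classical signature, and prove unforgeability by the same two-case reduction (an uncertified public key breaks the digital signature; a certified one breaks the mini-scheme after guessing which of the $q$ tokens is attacked, losing only a polynomial factor). The only cosmetic difference is that you place the certificate $\sigma'$ inside the public key whereas the paper's \cref{fig:full_scheme_from_mini_scheme} carries it inside the quantum signing key and emits it as part of each signature; this does not affect the argument.
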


The proof of this theorem is given in \cref{sec:tokenized_signatures_unforgeability} on \cpageref{pf:thm:tokenized_signatures_exist}.

    In semi-quantum money~\cite{RS22,Shm21},  minting is not an algorithm but rather an interactive protocol between a \emph{classical} bank (which has the master secret key) and the user, which has a quantum computer.
    
    One can naturally define semi-quantum tokenized signature schemes in a similar manner, which were formalized and constructed in a recent work, see~\cite{Shm22}.
    The main advantage of using semi-quantum tokenized signatures over standard tokenized signatures is that no quantum communication infrastructure would be needed. 
\begin{definition}[{Semi-quantum tokenized signature}]
 A semi-quantum tokenized signature scheme consists of three algorithms ($\setup$, $\qsign$, $\verify$) with the same syntax as tokenized signatures. The $\qgen$ algorithm is replaced with the $\qgen$ protocol with classical communication with the following syntax:
 $(\pk,\qsk)\gets \langle\qgen.\sen(\msk),\qgen.\rec\rangle_{(\out_\sen,\out_\rec)}$: is a classical-communication protocol between two parties, a $\ppt$ sender $\qgen.\sen$ which gets the master secret key $\msk$ as input and a $\qpt$ receiver $\qgen.\rec$ (without any input). At the end of interaction, the sender outputs a classical public key $\pk$ and the receiver outputs a quantum signing key $\qsk$. As in the case of tokenized signature schemes, if this token generation protocol is called $\ell$ times, it may (and for reasons related to unforgeability, which will become apparent soon, should, with overwhelming probability) output different states $\qsk_1,\ldots,\qsk_{\ell}$ and different public keys $\pk_1,\ldots,\pk_\ell$.
  
 \paragraph{Correctness.} If $(\mpk,\msk)\gets \setup(\secparam)$, and $(\pk, \qsk ) \gets \langle\qgen.\sen(\msk),\qgen.\rec\rangle_{(\out_\sen,\out_\rec)}$, then $\verify(\mpk, \pk, m, \qsign(\qsk , m)) = 1$ for any message $m\in \{0,1\}^*$ with overwhelming probability.
 
 \paragraph{Unforgeability.\protect\footnote{See \cref{rem:standard_vs_weak_unforgeability}}} For any $\qpt$ algorithm $\qadv$, there is a negligible function $\epsilon$ such that for all $\secpar$,
\begin{equation}
\Pr\left[ 
\begin{array}{c}
    (\mpk,\msk)\gets \setup(\secparam)\\
    (\pk,m_0,m_1,\sigma_0,\sigma_1 ) \gets \qadv^{\langle\qgen.\sen(\msk),\cdot\rangle}(\mpk)
\end{array}
:
\begin{array}{c}
   \verify^2(\mpk,\pk,m_0,m_1,\sigma_0,\sigma_1)=1
\end{array}
\right]
\leq \epsilon(\secpar).
\label{eq:sqts_unforgeability}
\end{equation}
\label{def:semi-quantum-tokenized-signatures}
 In the equation above (\cref{eq:sqts_unforgeability}), $\qadv^{\langle\qgen.\sen(\msk),\cdot\rangle}$ represents that $\qadv$ has oracle access to the $\qgen$ protocol with $\sen(\msk)$. 
\end{definition}

Recently, Shmueli proved the existence of a semi-quantum tokenized signatures mini-scheme (see \cref{def:semi-quantum-tss_mini_scheme}) under cryptographic assumptions, see \cref{thm:semi_quantum_tokenized_signatures_mini-scheme_exist}, based on which we show the following result.
\begin{theorem}\label{thm:semi-quantum_tokenized_signatures_exist}
Assume that Decisional LWE has sub-exponential quantum indistinguishability 
and that indistinguishability obfuscation for classical circuits exists with security against quantum polynomial
time distinguishers. Then, there is a semi-quantum tokenized signature scheme.
\end{theorem}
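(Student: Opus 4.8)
The plan is to derive the full scheme from the semi-quantum tokenized signature \emph{mini-scheme} of \cref{thm:semi_quantum_tokenized_signatures_mini-scheme_exist} by the same ``certification'' trick used to prove \cref{thm:tokenized_signatures_exist} from the (plain) tokenized signature mini-scheme: the bank holds a key pair of a post-quantum existentially unforgeable digital signature scheme \ds and uses it to sign the mini-scheme public key produced inside each token-generation session, forcing an adversary to forge against a public key that was legitimately generated. Such a \ds exists under the stated hypotheses, since sub-exponential --- hence in particular polynomial --- hardness of Decisional LWE against \qpt adversaries already implies post-quantum one-way functions and therefore post-quantum EUF-CMA digital signatures.

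Concretely, $\setup(\secparam)$ runs $\ds.\keygen(\secparam)$ and outputs $(\mpk,\msk)$ equal to the \ds verification and signing keys. The \qgen protocol first runs the mini-scheme's token-generation protocol --- the sender obtaining the mini-scheme public key $\pk'$ and the receiver the quantum signing key $\qsk$, which doubles as the full scheme's signing key --- after which the sender signs $\pk'$ under $\msk$, sends the resulting certificate $\mathsf{cert}$ to the receiver, and outputs $\pk=(\pk',\mathsf{cert})$. Signing is inherited verbatim from the mini-scheme, and $\verify(\mpk,(\pk',\mathsf{cert}),m,\sigma)$ outputs $1$ iff $\ds.\verify(\mpk,\pk',\mathsf{cert})=1$ and the mini-scheme verification accepts $(\pk',m,\sigma)$. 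Correctness is immediate from correctness of \ds and of the mini-scheme.

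For unforgeability, suppose a \qpt adversary $\qadv$ wins the game of \cref{eq:sqts_unforgeability} with non-negligible probability while making at most $q=q(\secpar)$ queries to the \qgen oracle (for some polynomial $q$), and outputs $(\pk=(\pk',\mathsf{cert}),m_0,m_1,\sigma_0,\sigma_1)$ with $\verify^2=1$; unpacking the definitions, this means $m_0\neq m_1$, $\ds.\verify(\mpk,\pk',\mathsf{cert})=1$, and the mini-scheme accepts both $(\pk',m_0,\sigma_0)$ and $(\pk',m_1,\sigma_1)$. Split on whether $\pk'$ coincides with one of the $q$ mini-scheme public keys generated by the oracle. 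If it does not, then $(\pk',\mathsf{cert})$ is a valid \ds signature on a message the bank never signed: a reduction receiving $\mpk$ from a \ds challenger, emulating the \qgen oracle by running the mini-scheme sessions itself and obtaining each certificate from its \ds signing oracle, and finally outputting $(\pk',\mathsf{cert})$, breaks EUF-CMA of \ds --- so this case has negligible probability. Otherwise $\pk'$ is the public key of the $i$-th session for some $i\in[q]$; a second reduction (which generates the \ds key pair itself) guesses $i\gets[q]$, plays the mini-scheme adversary by relaying the $i$-th session to the external mini-scheme sender while running every other session and every \ds signature on its own, reads $\pk'$ off the transcript of the relayed session in order to certify it, and --- when the guess is correct, which happens with probability $1/q$ --- submits $(m_0,m_1,\sigma_0,\sigma_1)$ as a mini-scheme forgery against $\pk'$. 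Since $q$ is polynomial, this contradicts \cref{thm:semi_quantum_tokenized_signatures_mini-scheme_exist}.

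The one genuinely delicate point is the simulation carried out by the second reduction: it must relay a single interactive \qgen session faithfully while emulating all the others, and it must recover the mini-scheme public key of the relayed session in order to attach the \ds certificate that $\qadv$ expects. This is unproblematic provided --- as is natural, and as we may assume without loss of generality by having the sender append it to its last message --- that the mini-scheme public key is part of, or efficiently computable from, the public protocol transcript. The remaining quantitative loss, a factor $1/q$, is due solely to the mini-scheme being single-session secure and is harmless. All other steps parallel the proof of \cref{thm:tokenized_signatures_exist}, whose details we follow mutatis mutandis.
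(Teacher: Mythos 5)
Your proposal is correct and follows essentially the same route as the paper: the paper likewise derives PQ-EU-CMA digital signatures from the sub-exponential quantum hardness of Decisional LWE (via post-quantum one-way functions) and then invokes its lifting proposition (\cref{prop:lifting_theorem_sqts}) to certify the mini-scheme public keys of \cref{thm:semi_quantum_tokenized_signatures_mini-scheme_exist} with the bank's signature, exactly the construction of \cref{fig:full_scheme_from_mini_scheme}. The only difference is presentational: the paper states the semi-quantum lift as a proposition whose proof it omits ``for brevity,'' whereas you spell out the two-case reduction (forged certificate versus relayed-session mini-scheme forgery with a $1/q$ guessing loss) explicitly.
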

The proof of this theorem is given in \cref{sec:tokenized_signatures_unforgeability} on \cpageref{pf:thm:semi-quantum_tokenized_signatures_exist}.

\begin{definition}[One-shot signatures (Adapted from~\cite{AGKZ20})]
A one-shot signatures is a tuple of algorithms $(\qgen,\qsign,\verify)$ with the following syntax:
\begin{enumerate}
    \item $(\pk,\qsk) \gets \qgen(\crs)$: takes a common reference string and outputs a classical public key $\pk$ and a quantum signing key $\qsk$.
    \item $\sigma \gets \qsign(\qsk,m)$: takes a secret key $\qsk$ and a message $m$ and outputs a classical signature $\sigma$.
    \item $b \gets \verify(\crs, \pk, m, \sigma)$: takes a common reference string $\crs$, a public key $\pk$, a message $m$ and a signature $\sigma$ and outputs a bit $b$. 
\end{enumerate}
\paragraph{Correctness.} If $(\pk, \qsk ) \gets \qgen(\crs)$ then $\verify(\crs, \pk, m, \qsign(\qsk , m)) = 1$ for any message $m\in \{0,1\}^*$ with overwhelming probability.

\paragraph{Unforgeability.\protect\footnote{See \cref{rem:standard_vs_weak_unforgeability}}} For any quantum polynomial time algorithm $\qadv$, there is a negligible function $\epsilon$ such that for all $\secpar$,
\begin{equation}
\Pr\left[ 
\begin{array}{c}
    \crs \gets \{0, 1\}^\secpar \\
    (\pk,m_0,m_1,\sigma_0,\sigma_1 ) \gets \qadv(\crs)
\end{array}
:
\begin{array}{c}
    \verify^2(\mpk,\pk,m_0,m_1,\sigma_0,\sigma_1)=1
\end{array}
\right]
\leq \epsilon(\secpar).
\label{eq:oss_unforgeability}
\end{equation}
\label{def:one-shot-signatures}
\end{definition}

Amos et al. proved:
\begin{theorem}[\cite{AGKZ20}]
There exists a classical oracle relative to which one-shot signatures exist.
\label{thm:one-shot-signatures-exist}
\end{theorem}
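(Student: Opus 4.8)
The plan is to follow the oracle-separation paradigm: exhibit a distribution over classical oracles such that, with probability $1$ over the draw, a specific triple of quantum query algorithms $(\qgen,\qsign,\verify)$ meets correctness and no \qpt\ query adversary wins the unforgeability game of \cref{def:one-shot-signatures} except with negligible probability; a standard union bound / Borel--Cantelli argument over the security parameter then selects one fixed ``good'' oracle. Concretely, I would let the oracle encode, for each $\secpar$, a random ``collision-resistant non-collapsing hash'': a random $t$-regular function $H\colon\{0,1\}^{n+k}\to\{0,1\}^{n}$ with $t=2^{k}$, together with a verification sub-oracle returning $1$ on input $(\pk,x)$ iff $H(x)=\pk$. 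Regularity is what makes this work on both ends. On the honest side, $\qgen(\crs)$ queries $H$ coherently on a uniform superposition and measures the image register, obtaining a random $\pk$ together with the state $\qsk=t^{-1/2}\sum_{x\colon H(x)=\pk}\ket{x}$; moreover the map that XORs $[H(x)=\pk]$ into an ancilla acts as the identity on $\qsk$, so the state is genuinely non-collapsing and can be ``tested'' without being destroyed.

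To bind signatures to an arbitrary-length message I would layer an $m$-dependent structure on top of $H$. In the flat variant, the oracle also supplies, for every $\pk$ and every bit-string $m$, an independent random injection $g_{\pk,m}$ on the preimage set $H^{-1}(\pk)$; then $\qsign(\qsk,m)$ coherently evaluates $g_{\pk,m}$ on $\qsk$, measures, and outputs the result $\sigma$, while $\verify(\crs,\pk,m,\sigma)$ accepts iff a sub-oracle confirms that $\sigma$ has a $g_{\pk,m}$-preimage $x$ and that $H(x)=\pk$. To avoid an exponentially large oracle when $m\in\{0,1\}^{*}$, one replaces the family $\{g_{\pk,m}\}$ by a GGM/Merkle-style tree of random injections indexed by prefixes of $m$, so a signature is a certified root-to-leaf path; correctness and the reduction below survive a union bound over the $\mathrm{poly}$ tree levels. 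Correctness itself is then immediate from the construction — $\qgen$ produces an (essentially) uniform superposition over $H^{-1}(\pk)$, and evaluate-then-measure on $g_{\pk,m}$ yields a verifying $\sigma$; the ``with overwhelming probability'' clause absorbs the fact that a random function is only approximately $t$-regular, so one either conditions on near-regularity or argues the prepared state is close to uniform.

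The substantive part, and the main obstacle, is unforgeability. If a \qpt\ adversary outputs $(\pk,m_0,m_1,\sigma_0,\sigma_1)$ with $m_0\neq m_1$ passing $\verify^2$, then each $\sigma_b$ certifies some $x_b$ with $H(x_b)=\pk$, and since $g_{\pk,m_0},g_{\pk,m_1}$ are independent random injections we get $x_0\neq x_1$ with overwhelming probability — i.e.\ the adversary has produced two distinct $H$-preimages of a common image. The easy half of the bound is the advice-free statement: no polynomial-query quantum algorithm, given only oracle access, finds two $H$-preimages of one image; this is the usual quantum collision-resistance of a random (regular) function, provable by the compressed-oracle technique or the polynomial/adversary method, with a bound like $\Omega(2^{n/3})$ queries. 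The genuinely ``one-shot'' difficulty is that the adversary is not starting from scratch: after honest $\qgen$ it already holds one copy of $\qsk=t^{-1/2}\sum_{x\colon H(x)=\pk}\ket{x}$, which certainly yields one preimage of $\pk$, and one must prove this single copy is worth exactly one preimage and no more. My proposed route is a reduction that ``returns'' the copy: show that the joint state of (adversary register, oracle) immediately after $\qgen$ is indistinguishable, against any polynomial-query continuation, from a state in which the oracle is lazily/compressed-sampled and the adversary register is a superposition that is uncorrelated with $H^{-1}(\pk)$ beyond a single committed classical preimage; extracting a second preimage then contradicts the advice-free collision bound applied to the reprogrammed oracle. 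Carefully tracking the correlation between $\qsk$ and the compressed-oracle database through the adversary's queries is where essentially all the work lies, and I expect this bookkeeping — rather than the tree construction or the random-oracle-to-fixed-oracle passage, which are routine — to be the crux of the whole proof.
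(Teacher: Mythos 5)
You should first note that the paper does not prove \cref{thm:one-shot-signatures-exist} at all: it is imported verbatim from~\cite{AGKZ20} as a black box, so there is no in-paper argument to match. Judged against what a proof would actually have to establish, your proposal has a fatal gap in the unforgeability step. In your scheme, a valid signature on $m$ under $\pk$ is \emph{any} $\sigma$ admitting a $g_{\pk,m}$-preimage $x$ with $H(x)=\pk$, and nothing forces the two signatures in a $\verify^2$-forgery to certify two \emph{distinct} preimages. A classical adversary picks an arbitrary $x^*\in\{0,1\}^{n+k}$, sets $\pk=H(x^*)$, and outputs $\sigma_b=g_{\pk,m_b}(x^*)$ for $b\in\{0,1\}$; both verify with the same witness $x^*$, so the scheme is broken with three classical queries and no quantum state at all. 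The sentence ``since $g_{\pk,m_0},g_{\pk,m_1}$ are independent random injections we get $x_0\neq x_1$ with overwhelming probability'' is the false step: independence of the injections constrains nothing about which preimages the adversary chooses to certify, and it may reuse one. Consequently the reduction to collision resistance of $H$ never gets off the ground, and the compressed-oracle bookkeeping you identify as the crux is moot.

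The missing idea is the one that makes the primitive ``one-shot'' in the first place: signing $m_0$ and signing $m_1$ must correspond to \emph{incompatible measurements} of a single quantum state, and producing a valid signature \emph{without} that state must be hard relative to the oracle. In your construction, finding a preimage of $H$ is trivial (evaluate $H$ forward on any point and let $\pk$ be the output), so the superposition $\qsk$ protects nothing. The construction in~\cite{AGKZ20} goes through a one-shot chameleon hash: $\qgen$ outputs $\pk$ together with a superposition over a hidden set such that measuring it in a message-dependent way yields a preimage of a target derived from $m$ under a keyed hash $f_\pk$; crucially, the oracle is engineered (via an obfuscated two-layer structure, not a plain random regular function) so that without the state one cannot find even a single such preimage for a target one did not choose in advance, and obtaining one preimage collapses the state so a second target cannot be hit. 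Any repair of your approach must at minimum prevent the adversary from choosing $x$ first and deriving $\pk=H(x)$ afterwards — i.e., $\pk$ must bind the signer to a structure whose message-dependent preimages are infeasible to find by forward evaluation — and that is precisely the part of the argument your proposal omits.
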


Our constructions can use a tokenized, semi-quantum tokenized, or a one-shot signature scheme. As mentioned in the introduction, we use \emph{uncloneable signature scheme} as a term that can be used to refer to any of the three schemes. In order to have compatibility between the syntax of tokenized signatures, semi-quantum tokenized signatures and one-shot signatures, we also define a $\setup(\secparam)$ algorithm for one-shot signatures, which samples a $\crs$ uniformly at random from $\{0,1\}^\secpar$. Also for compatibility reasons, we use the notation $\mpk$ as the output of $\setup$, instead of a $\crs$, so that we have a unified framework for tokenized and semi-quantum tokenized signatures and one-shot signatures. Note that in one-shot signatures, there is no master secret key $\msk$. We use the convention that for semi-quantum tokenized signatures, $(\pk,\qsk)\gets \qgen(\msk)$ represents $(\pk,\qsk)\gets \langle\qgen.\sen(\msk),\qgen.\rec\rangle_{(\out_\sen,\out_\rec)}$, i.e., the generation of the public key and the quantum signing key via the classical communication protocol, $\qgen$. Similarly, in the context of semi-quantum tokenized signatures, we say that an adversary $\qadv$ has oracle access to $\qgen$, i.e., $\qadv^{\qgen(\msk)}$ to represent $\qadv^{\langle\qgen.\sen(\msk),\cdot\rangle}$.

\begin{definition}[{Digital signature scheme \cite[Definition 12.1]{KL14}}]\label{def:digital_signature}
    A digital signature scheme consists of two \ppt algorithms $\keygen$, $\sign$ and a deterministic polynomial time algorithm $\verify$ such that:
    \begin{enumerate}
        \item The key-generation algorithm $\keygen$ takes as input a security parameter $\secparam$ and outputs a pair of signing and verification keys $(\vk, \sk)$. We assume that $\vk$ and $\sk$ have lengths of at least $\secpar$ and that $\secpar$ can be determined from either.
        \item The signing algorithm $\sign$ takes as input a private key $\sk$ and a message $m$. It outputs a signature $\sigma \gets \sign_{\sk}(m)$.
        \item The deterministic verification algorithm $\verify$ takes as input a verification key $\vk$, a message $m$ and a signature $\sigma$. It outputs a bit $b \gets \verify_{\vk}(m,\sigma)$, with $b=1$ meaning \textbf{valid} and $b=0$ meaning \textbf{invalid}.
    \end{enumerate}
    \emph{Completeness:} except with negligible probability over $(\vk, \sk)$ output by $\keygen(\secparam)$, it holds that $\verify_{\vk}(m, \sign_{\sk}(m))=1$ for every message $m$.
\end{definition}

The bank is the one that needs to use the digital signature scheme. Our adversaries might be quantum, yet the bank is classical. Hence, we use post-quantum CMA security rather than qCMA security (see \cite{BZ13}): the adversary cannot make superposition signing queries to the bank since the bank is classical. 

\begin{definition}[{PQ-EU-CMA digital signature scheme, adapted from \cite[Definition 12.2]{KL14}}]\label{def:unforgeability_of_digital_signature}
    A digital signature scheme $\Pi$ is \emph{Post-Quantum Existentially Unforgeable under an adaptive Chosen Message Attack} (PQ-EU-CMA) if for every \qpt forger $\forger$, there exists a negligible function $\negl$ such that:
    \begin{equation}
        \Pr[\sigforge{\Pi} = 1] \leq \negl\;.
    \end{equation}
    The signature-forge experiment $\sigforge{\Pi}$ is defined as follows:
    \begin{enumerate}
        \item $\keygen$ is run to generate to obtain keys $(\vk, \sk)$.
        \item Forger $\forger$ is given $\vk$ and access to a (classical) signing oracle $\sign_{\sk}(\cdot)$. We emphasize that the forger cannot query the signing oracle in superposition. The forger than outputs $(m, \sigma)$. Let $Q$ denote the set of all classical queries that $\forger$ made to the signing oracle.
        \item $\forger$ succeeds iff $\verify_{\vk}(m, \sigma)=1$ and $m \notin Q$. In this case the output of the experiment is defined to be 1 (and otherwise 0).
    \end{enumerate}
\end{definition}

\section{Definitions of a quantum payment scheme}
\label{sec:definitions_of_quantum_payment_scheme}
A quantum payment scheme consists of five algorithms, $\setup$ and $\topup$ which are \ppt, $\simplepay$ and  $\createsimpleaccount$ which are \qpt, and $\balance$ which is deterministic and polynomial time, with the following syntax:
\begin{enumerate}
    \item $(\bpk,\bsk) \gets \setup(\secparam)$.
    \item $(\ket{\$},\account) \gets \createsimpleaccount(\bsk)$.%
    \footnote{The main advantage of using a one-shot signature is that in this case, $\createsimpleaccount$ only uses the bank's public key $\bpk$, and therefore can be done without the bank's involvement.}
    \item $\pwit \gets \topup_{\bsk}(\account)$: gets the bank's secret key $\bsk$ and an account, and creates a payment witness. This payment witness serves as a proof that the $\balance$ of $\account$ increased by 1. 
    
    \item $(\pwit_1,\dots,\pwit_r) \gets \simplepay(\bpk,\ket{\$},\account, \witness, (\account_1,\ldots,\account_r))$: generates $r$ payment witnesses. If payments are made from an account which has balance $r$, than $\pwit_i$ can be used to certify that the balance of $\account_i$ increased by $1$.
    \item $j \gets \balance(\bpk, \account, \witness): $ gets the $\bpk$, an $\account$, and a $\witness$ and outputs a non-negative integer representing the balance of this account.
\end{enumerate}

The completeness property requires that the following conditions hold. 
\begin{enumerate}
    \item \label{it:balance_respects_top-up} \textbf{Balance respects top-up.} Informally, if $\balance(\bpk,\account,\witness)$ is $j$, then 
    
    $\topup_{\bsk}(\account)$ generates a witness which increments the $\account$ balance by 1. 
    Formally, the bank's public and secret key-pair is generated using $\setup(\secparam)$. Then, an $\account$ is created using $\createsimpleaccount(\bsk)$. Through arbitrary series of simple payments and top-ups, we have that 
 \[\balance(\bpk,\account,\witness)=j.\] 
Now, a top-up is made to $\account$: $\pwit \gets \topup_{\bsk}(\account)$.
Let $\witness'$ be the concatenation of the $\witness$ and $\pwit$. 
We require that under these circumstances, for every $\secpar$,
\[ \Pr[\balance(\bpk,\account,\witness')=j+1]=1.\]
    \item \label{it:balance_respects_payments} \textbf{Balance respects payments.} Informally, suppose account $A$ had balance $j$ according to $\witness$, and that it receives another payment. Recall that $\simplepay$ generates a payment witness $\pwit$. We require that $A$ would have a balance $j+1$ with the witness $\witness'=\witness||\pwit$. 

    Formally, the bank's public and secret key-pair is generated using $\setup(\secparam)$. Then two simple accounts are created: \[ (\ket{\$_i},\account_i) \gets \createsimpleaccount(\bsk), \quad \text{for $i\in\{0,1\}$}.\]
    Through arbitrary series of simple payments and top-ups, we have for $i\in\{0,1\}$ that 
    \[\balance(\bpk,\account_i,\witness_i)=j_i.\] 
    Now, a simple payment is made from $\account_0$:
    \[(\pwit_1,\ldots,\pwit_{j_0})\gets \simplepay(\bpk, \ket{\$_0},\account, \witness'_0,(\account'_1,\ldots,\account'_{j_0})).\]
    Furthermore, suppose $\account_1$ is the recipient in $t$ of these payments (formally, 
    
    $|\{i | \account'_i=\account_1 \}|=t$). 
    Let $\pwit$ be the concatenation of these payment witnesses (formally, $\pwit$ is the concatenation of the elements $\{\pwit_i | \account'_i=\account_1 \}$ in an arbitrary order). 
    We require that under these circumstances, for every $\secpar$,
    \[ \Pr[\balance(\bpk,\account_1,\witness_1||\pwit)=j+t]=1.\]

    \item \label{it:additive_polynomial_blow_up} \textbf{Additive polynomial blow-up.} Under the setting described in the previous item, the size of the payment witnesses should increase by at most a $\poly$ additive factor. This guarantees that the length of the entire witness remains polynomial in $\lambda$ and linear in the number of times that the money "switched hands." 
    For example, this rules out payment schemes in which the size of the witness doubles after every payment.
    Formally, we require that there exists some polynomial $p$ such that in the setting specified in the previous item, for every $j \in [j_0]$, $|\pwit_j|\leq |\witness_0|+p(\secpar+j_0)$. 

\end{enumerate}

We now turn to discuss unforgeability. Intuitively, a \qpt adversary who uses $n$ calls to \topup should not be able to produce strictly more than $\$n$, except with a negligible probability. In the security game, the adversary can ask for as many fresh accounts, and his goal is to ``load'' these accounts with a total balance of $n+1$ using only $n$ calls to $\topup$. This is formalized in the security game $\forgepayment_{\qadv,\Pi}(\secpar)$.

\begin{game}
\caption{Forgeability game $\forgepayment_{\qadv,\Pi}(\secpar) $.
}\label{game:unforgeability} 
\begin{algorithmic}[1]
    \State The adversary $\qadv$ sends $m,\ell$ to the challenger $\chal$.
    \State $\chal$ prepares $(\bpk,\bsk)\gets \setup(\secparam)$.
    \State \label{line:target_accounts}$\chal$ creates $m$ target accounts:   $(\ket{\$_i},\account_i) \gets \createsimpleaccount(\bsk)$.
    \State $\chal$ sends $\bpk, \{\account_i\}_{i=1}^m$, to $\qadv$. 
    \State \label{line:adversary_prepares_witnesses}$\qadv$, while having oracle access to $\topup_{\bsk}(\cdot)$ and $\createsimpleaccount(\bsk)$, prepares $m$ new witnesses $(\witness_i)_{i=1}^m$, and sends them to $\chal$. Denote by $n$ the total number of queries $\qadv$ makes to the $\topup$ oracle. 
    
    \State \label{line:calculating_balances}$\chal$ computes $b_i \gets \balance(\bpk,\account_i,\witness_i)$ for all $i \in \{1,\ldots,m\}$.
    
    \State The outcome of the game is $1$ (and we say $\qadv$ wins) if $\sum_{i=1}^m b_i > n$.
\end{algorithmic}
\end{game}

\begin{definition}
A quantum payment scheme $\Pi$ is unforgeable if for any \qpt in $\secpar$ adversary $\qadv$, $\Pr(\forgepayment_{\qadv,\Pi}(\secpar) =1) \leq \negl$.
\label{def:unforgeable_payment}    
\end{definition}

As implicitly implied by the names $\createsimpleaccount$ and $\simplepay$, the payment scheme has some freedom to use more advanced forms of accounts and payments, as we will see in our construction.

\subsection{Why existing techniques are not applicable?}
\label{sec:why_existing_techniques_not_applicable}
Existing cryptocurrencies use various Domain Specific Language (DSLs) scripts to express their transactions --- see~\cite{STM16}.
One may wonder: why can't we reuse an existing scripting language, such as Solidity --- a high-level language used in Ethereum? We identify two barriers that make the existing cryptocurrencies scripting languages unfit for our purposes: 
\begin{enumerate}
    \item \label{it:existance_of_DLT} The existing cryptocurrency scripting languages rely on some form of a distributed ledger (usually, a blockchain). Participants in the network communicate and eventually agree on the state of some form of ledger. In contrast, in this work, the system function in a way very similar to cash, where the only parties that take place in the transaction are the sender and the receiver, and they do not need to communicate with others. 
    \item \label{it:reliance_on_multiple_signing} Existing cryptocurrencies extensively rely on digital signatures. It is assumed that the owner of the signing key can sign as many signatures as she wishes. In the context of \qs, one has to consider that it is impossible to sign more than one message. We note that there is a similarity between \qs and one-time signatures~\cite[Section 6.4.1]{Gol04}, which might be confusing: in \qs, the signer cannot sign more than one message with the same quantum signing key, while in onetime signatures, even though it is possible to sign multiple messages using the same classical signing key, it is insecure to do so, and may lead to forgeries.
\end{enumerate}
We now discuss these two items. \cref{it:reliance_on_multiple_signing} is less crucial: the inability to sign multiple messages using the same signing key would imply that some \emph{use cases} may not be implementable; it does not seem to create serious security issues. On the other hand, \cref{it:existance_of_DLT} does create severe security risks --- namely, double-spending. 

To demonstrate \cref{it:existance_of_DLT}, recall that every Bitcoin transaction contains a short script that could be viewed as a lock. The spending transaction provides some data (a witness) that ``unlocks'' it~\cite{NBF+16}. This locking script is called scriptPubKey, and the unlocking script is called scriptSig --- see, e.g., Ref.~\cite{Ant17} for the low-level mechanics. One notable example is called a 1-of-2 multi-sig transaction. In this case, the spending transaction that effectively allows any one of two signing keys to spend these outputs. Concretely, suppose Alice generates a private and public key-pair, $\sk_A,\pk_A$ and similarly Bob creates $\sk_B,\, \pk_B$. Charlie can create a transaction which allows either Alice or Bob to spend it: to unlock the Bitcoins that Charlie sends, a valid signature associated with $\pk_A$ or $\pk_B$ is needed. Suppose Alice and Bob spend the transaction at the same time by signing with their respective secret key, and sending the transaction to the miners? A miner will have to choose at most one transaction from these two to include in the block, and only one of the two transactions could be in the longest blockchain (due to verification rule which do not permit double-spending the same input). Since we use a \qs instead of a distributed ledger, we need a  different mechanism to prevent double-spending a multi-sig transaction that does not rely on the distributed ledger. 

Next, we discuss \cref{it:reliance_on_multiple_signing}. We first provide a few examples where signing multiple messages using the same private key is useful in a network such as Bitcoin. Recall that Bitcoin transactions have to attach a fee to be included in a block. Suppose Alice tries to send some bitcoins to Bob, but the fee she attaches is too low, and it does not get included in the next few blocks. One way for Alice to remedy this is to sign another transaction with a higher fee which would get a higher priority and increase the likelihood of her transaction getting confirmed\footnote{For more details, see \href{https://github.com/bitcoin/bips/blob/master/bip-0125.mediawiki}{BIP-125}. Another, slightly more complicated method, known as child-pays-for-parent, also \href{https://bitcoin.org/en/release/v0.13.0\#mining-transaction-selection-child-pays-for-parent}{exists}. Child-pays-for-parent does not require signing using the same secret key.}%
. 
Consider another example. Bitcoin addresses are sometimes used for an indefinite time. One use case where this is useful is permanent addresses --- see \cref{it:permanent_accounts} on \cpageref{it:permanent_accounts}. If transactions were limited to signing a single transaction using the same private key, the artist in this example would have needed to replace the QR code after each use, which would be impractical.

A use-case in which Bitcoin crucially relies on signing multiple messages with the same signing keys is the lightning-network~\cite{PD15}.

\section{A warm-up quantum payment scheme} 
\label{sec:warm-up_quantum_payment_scheme}
In this section, we provide a warm-up instantiation of a quantum payment scheme. The quantum payment offers little functionality beyond simple payments.
In \cref{sec:construction_payment_scheme} we will introduce several more features. 
With this warm-up construction, we will add the divisibility and mergeability property --- the other prudent contracts require some of the advanced features of the full scheme, which will be presented in \cref{sec:construction_payment_scheme}.
\begin{figure}[phtp]
\noindent\textbf{Assumes:} Digital Signature $\ds$, and $\qs$ is an uncloneable signature scheme.\\
\noindent$\setup(\secparam)$
\begin{compactenum}
\item $(\sk,\vk) \gets \ds.\keygen(\secparam)$.
\item $(\mpk,\msk) \gets \qs.\setup(\secparam)$. 
\item $\bpk\equiv (\mpk,\vk)$, $\bsk \equiv (\msk,\sk)$.
\item Output $(\bpk,\bsk)$.
\end{compactenum}

\noindent $\createsimpleaccount(\bsk)$ \hfill // If one-shot signatures are used, $\bpk$ is used instead of $\bsk$.
\begin{compactenum}
\item Interpret $\bsk$ as $(\msk,\sk)$. 
\item $(\pk, \qsk) \gets \qs.\qgen(\msk)$.  \hfill // If $\qs$ is a one-shot signature scheme, $\msk$ should be replaced with $\mpk$.
\item Output $ (\ket{\$}\equiv \qsk,\ \account \equiv \pk)$. 
\end{compactenum}

\noindent $\topup_{\bsk}(\account)$
\begin{compactenum}
\item Interpret $\bsk$ as $(\msk,\sk)$.
\item Let $\rand\gets_R\{0,1\}^\lambda$.
\item Set $m\equiv (\account,\rand)$.
\item $\sigma \gets \ds.\sign_{\sk}(m)$.
\item Output $\pwit \equiv (\rand,\sigma) $.
\end{compactenum}

\noindent $\simplepay(\bpk, \ket{\$},\account, \witness,(\account_1,\ldots,\account_r) )$ \\
Assumes that $\balance(\bpk,\account,\witness)=r$.
\begin{compactenum}
\item Set $m \equiv (\account_1,\ldots,\account_r)$.
\item $\sigma \gets \qs.\qsign(\ket{\$},m)$.
\item For every $i\in [r]$:
\begin{compactenum}
    \item \label{it:simple_pay_pwit} Set $\pwit_i \equiv (\account, m,\sigma,i,\witness)$
\end{compactenum}
\item Output $(\pwit_1,\ldots,\pwit_r) $.
\end{compactenum}

\noindent $\balance(\bpk, \account,\witness)$
\begin{compactenum}
\item Interpret $\bpk$ as $(\mpk,\vk)$.
\item Interpret $\witness$ as $(\witness_1,\ldots, \witness_v)$. \hfill // $v$ is the claimed balance. 
\item For every $i\in\{1,\ldots,v\}$:
\begin{compactenum}
    \item If $\witness_i.size=2$: \hfill // A top-up witness.  \label{it:top-up-witness}
    \begin{compactenum}
        \item Interpret $\witness_i$ as $\rand_i,\sigma_i$. Set $m_i\equiv (\account,\rand_i)$
        \item If $\ds.\verify_{\vk}(m_i,\sigma_i)=0$, output $\bot$. \label{it:top-up-contains-ds}
        \item If $m_i=m_j$ for some $j<i$, output $\bot$.
    \end{compactenum}
    \item Else:
    \begin{compactenum}
        \item Interpret $\witness_i$ as $\pk_i,m_i\equiv (\account_{i,1},\ldots,\account_{i,r_i}), \sigma_i,t_i,\witness'_i$. 
        \item \label{it:balance_check_oss} If $\qs.\verify(\mpk,\pk_i,m_i,\sigma_i)=0$, output $0$. \hfill // Invalid quantum-signature.  
        \item If $\account_{i,t_i}\neq \account$, output $0$. \hfill // The target should be the current account. 
        \item \label{it:warm-up_balance_check_no_double_spend} 
        If $\account_i=\account_j$ and $t_i=t_j$ for some $j<i$, output $0$.  \hfill // Same witness claimed twice.  
        \item \label{it:balance_check_balance}If $\balance(\bpk,\account_i,\witness'_i) < t_i$, output $0$. \hfill // $\account_i$ has insufficient funds.         
    \end{compactenum}
\end{compactenum}
\item Output $v$.
\end{compactenum}

\caption{Algorithms for the warm-up quantum payment scheme.}
\label{fig:warm-up_scheme}
\end{figure}

An informal description of the warm-up scheme was presented on \cpageref{par:introduction_warm-up_scheme}, and the formal scheme is given in \cref{fig:warm-up_scheme}. A close inspection reveals that the completeness \cref{it:balance_respects_top-up,it:balance_respects_payments} hold. The additive polynomial blow-up property, \cref{it:additive_polynomial_blow_up} holds sinc{}e all the other pieces in $\pwit_i$ other than $\witness'$ (namely, $\account,m,i,\sigma$ --- see \cref{it:simple_pay_pwit} in $\simplepay$) are computed in time $\poly[\lambda+j_0]$, and therefore could have at most polynomial length. We emphasize that $\witness$ can be arbitrarily long. 
\begin{theorem}
The warm-up quantum payment scheme provided in \cref{fig:warm-up_scheme} is unforgeable, assuming $\ds$ is PQ-EU-CMA secure (see \cref{def:unforgeability_of_digital_signature}), and $\qs$ is unforgeable (see \cref{eq:ts_unforgeability,eq:sqts_unforgeability,eq:oss_unforgeability}).
\end{theorem}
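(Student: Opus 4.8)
The plan is to reduce unforgeability of the warm-up scheme to the unforgeability of $\ds$ and of $\qs$ via a small collection of ``bad events'', and then, conditioned on no bad event, to establish $\sum_{i=1}^m b_i\le n$ by a conservation argument on the recursion of $\balance$. Fix a transcript of the forgeability game $\forgepayment_{\qadv,\Pi}(\secpar)$ together with the (polynomially many) recursive $\balance$-invocations that the challenger performs on $\qadv$'s submitted witnesses, and define: $\mathsf{BAD}_{\mathrm{ds}}$ --- some top-up component $(\rand,\sigma)$ encountered has $\ds.\verify_{\vk}((\account,\rand),\sigma)=1$ although $(\account,\rand)$ was never an input to $\topup_{\bsk}$; $\mathsf{BAD}_{\mathrm{2sig}}$ --- for some public key $\pk$, two signatures on \emph{distinct} messages both pass $\qs.\verify(\mpk,\pk,\cdot,\cdot)$; $\mathsf{BAD}_{\mathrm{tgt}}$ --- some signature under a target-account key $\account_i$ passes $\qs.\verify$; $\mathsf{BAD}_{\mathrm{rnd}}$ --- a collision among the $\secpar$-bit strings $\rand$ drawn by the $n$ calls to $\topup$; and $\mathsf{BAD}_{\mathrm{col}}$ --- a collision among the public keys output by the calls to $\qs.\qgen$ in the game. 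The last two are negligible by a union bound and because two colliding $\qgen$ outputs would immediately give two valid signatures under one key (using correctness of $\qs$), contradicting $\qs$-unforgeability.

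For $\mathsf{BAD}_{\mathrm{ds}}$ I would build a PQ-EU-CMA forger for $\ds$: it generates $\qs$'s parameters itself (it needs nothing secret from the $\ds$ game --- in the one-shot case $\qs.\setup$ just samples a $\crs$), plants the challenge $\vk$ as the bank key, answers $\topup$ queries with its signing oracle and $\createsimpleaccount$ queries by running $\qs.\qgen$, then runs $\balance$ on $\qadv$'s output and, if $\mathsf{BAD}_{\mathrm{ds}}$ fired, outputs the offending $((\account,\rand),\sigma)$ --- a forgery on an unqueried message. For $\mathsf{BAD}_{\mathrm{2sig}}$ and $\mathsf{BAD}_{\mathrm{tgt}}$ I would build an adversary against $\qs$-unforgeability: it generates $\ds$'s keys itself, plants the challenge $\mpk$ (or $\crs$), answers $\topup$ itself and $\createsimpleaccount$ via the $\qgen$ oracle (or by running $\qgen(\crs)$), and \emph{keeps} the quantum signing keys $\qsk_i$ it obtained for the target accounts (the warm-up game never touches them). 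If $\mathsf{BAD}_{\mathrm{2sig}}$ fired it outputs the colliding $(\pk,m_0,m_1,\sigma_0,\sigma_1)$; if $\mathsf{BAD}_{\mathrm{tgt}}$ fired, from a valid $(\account_i,m,\sigma)$ it signs a fresh $m'\neq m$ with $\qsk_i$ and outputs $(\account_i,m,m',\sigma,\qsign(\qsk_i,m'))$; in either case $\verify^2=1$. All three simulations are perfect, so it suffices to show $\qadv$ cannot win when no bad event occurs.

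Now suppose no bad event occurs; I claim $\sum_i b_i\le n$ holds deterministically. By $\neg\mathsf{BAD}_{\mathrm{2sig}}$, each public key $\pk$ appearing with a valid payment has a \emph{unique} signed recipient list $m_\pk=(a_1,\dots,a_{r_\pk})$, so its ``output slot'' $(\pk,t)$ has a single destination $a_t$. Let $\beta_a$ be the largest value $\balance(\bpk,a,w)$ attains over witnesses $w$ occurring in the recursion, and call slot $(\pk,t)$ \emph{live} if $t\le\beta_\pk$. Inspecting one invocation $\balance(\bpk,a,w)=\beta\ge 1$: each of its $\beta$ components is either a top-up witness of $a$ --- pairwise distinct by the in-loop check, hence (with $\neg\mathsf{BAD}_{\mathrm{ds}},\neg\mathsf{BAD}_{\mathrm{rnd}}$) at most $T(a):=$ the number of distinct valid top-up messages for $a$ in the recursion --- or a payment-in component $(\pk,m,\sigma,t,w')$ with $m=m_\pk$, $a_t=a$, the pairs $(\pk,t)$ distinct, and $\balance(\bpk,\pk,w')\ge t$, i.e.\ a live slot pointing to $a$. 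Hence $\beta_a\le|I(a)|$ where $|I(a)|:=T(a)+\#\{$live slots pointing to $a\}$, and the number of live output slots of $\pk$ equals $\min(r_\pk,\beta_\pk)\le\beta_\pk\le|I(\pk)|$. Since destinations are unique, every live payment slot is counted once in some $|I(a)|-T(a)$ and once among its payer's live output slots; summing over all accounts and using $\beta_\pk\le|I(\pk)|$ yields $\sum_a(|I(a)|-\min(r_a,\beta_a))=\sum_a T(a)\le n$. Each summand is $\ge 0$, and for a target account $\neg\mathsf{BAD}_{\mathrm{tgt}}$ forces $\min(r_{\account_i},\beta_{\account_i})=0$; as the $\account_i$ are distinct (by $\neg\mathsf{BAD}_{\mathrm{col}}$), $\sum_i b_i\le\sum_i\beta_{\account_i}\le\sum_i|I(\account_i)|\le\sum_a(|I(a)|-\min(r_a,\beta_a))\le n$, so $\qadv$ loses and the theorem follows by a union bound over the bad events.

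The reductions are routine; the step I expect to be the real obstacle is the last one --- making ``money cannot be created'' rigorous. The difficulty is that $\balance$ is recursive and self-contained, so the same top-up witness or the same payment may be re-referenced across many subtrees, and any naive attempt to charge each claimed dollar injectively to a distinct $\topup$ call fails. The way out is the conjunction exploited above: no double-signing pins down each payer's recipient list and hence each slot's unique destination; the index $t$ carried by a payment-in witness forces the payer to genuinely hold balance $\ge t$, so a payer's live out-degree never exceeds its in-degree; and the in-loop uniqueness checks block local reuse --- together these make the global count collapse to $\sum_a T(a)\le n$. For the main construction of \cref{sec:construction_payment_scheme} the same skeleton should carry over once monotonicity of $\siginterpreter$ is used to recover ``one effective message per account'' and $\outputscript$ is treated as a finite description of the recipient list.
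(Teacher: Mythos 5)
Your proposal is correct and follows essentially the same route as the paper: the bad events $\mathsf{BAD}_{\mathrm{ds}}$, $\mathsf{BAD}_{\mathrm{2sig}}$, $\mathsf{BAD}_{\mathrm{tgt}}$ correspond exactly to the paper's reductions bounding $\deg^+(s)\leq n$, forcing a unique signed recipient list per account, and killing outgoing edges from target accounts, while your ``live slot'' conservation argument ($\beta_a\leq|I(a)|$ and live out-slots $\leq\beta_{\pk}$) is precisely the paper's digraph lemma ($\deg^+(v)\leq t_{\max}\leq\deg^-(v)$ and $\deg^+(s)\geq\sum_{v\in T}\deg^-(v)$) phrased as a direct summation over accounts rather than over vertices.
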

\begin{proof}

We start with a simple combinatorial lemma. For a digraph $G=(V,E)$ we denote the in-degree of a vertex $v$ by $\deg^-(v)$ and the out-degree by $\deg^+(v)$.
\begin{lemma}
    Let $G=(V,E)$ be a finite digraph with a special node (which we call a source node) $s \in V$ and a target set $T \subseteq V \setminus \{s\}$. If $G$ satisfies: 
    \begin{compactenum}
         \item \label{it:positive_balance} For every $v \in V\setminus \{s\}$, $\deg^-(v)-\deg^+(v)\geq 0$,
         \item \label{it:zero_out_degree} For every $v \in T$, $\deg^+(v)=0$,
     \end{compactenum} 
     then,
    \begin{equation}
        \deg^+(s)\geq \sum_{v\in T} \deg^-(v).
    \end{equation}
\label{le:outdeg_s_geq_in_degrees_of_T}
\end{lemma}

 The proof outline is as follows. We later construct a digraph $G$ with a source node $s$ representing the challenger's top-up operations and a target set of nodes $T$ of size $m$ representing the challenger's accounts in the forgeability game (see \cref{line:target_accounts}). We show that this digraph satisfies the assumptions needed for \cref{le:outdeg_s_geq_in_degrees_of_T}, and some additional properties:

\begin{lemma}
    The following properties hold in $G$ with overwhelming probability:
    \begin{enumerate}
         \item \label{it:property_1_holds_in_G} For every $v \in V\setminus \{s\}$, $\deg^-(v)-\deg^+(v)\geq 0$.
         \item \label{it:property_2_holds_in_G} For every $v \in T$, $\deg^+(v)=0$.
        \item \label{it:in_degree_geq_b_i} For every target account $\{\account_i\}_{i \in \{1,\ldots,m\}}$ in $T$, $\deg^-(\account_i) \geq b_i$ (where $b_i$ is defined in \cref{line:calculating_balances} in Game~\ref{game:unforgeability}).
        \item \label{it:out_degree_s} $\deg^+(s)\leq n$ (where $n$ is defined in \cref{line:adversary_prepares_witnesses} in Game~\ref{game:unforgeability}).
    \end{enumerate}
    \label{le:properties_of_G_warm-up}
\end{lemma}

The theorem follows by combining the lemmas above, namely, that the following holds with overwhelming probability:
\begin{equation}
    n \stackrel{\cref{le:properties_of_G_warm-up}.\ref{it:out_degree_s}}{\geq}\deg^+(s) \stackrel{\cref{le:outdeg_s_geq_in_degrees_of_T},\cref{le:properties_of_G_warm-up}.\ref{it:property_1_holds_in_G},  \cref{le:properties_of_G_warm-up}.\ref{it:property_2_holds_in_G}}{\geq} \sum_{v\in T} \deg^-(v) = \sum_{i=1}^m \deg^-(\account_i) \stackrel{\cref{le:properties_of_G_warm-up}.\ref{it:in_degree_geq_b_i}}{\geq} \sum_{i=1}^m b_i.
\end{equation}

All that is left is proving these two lemmas.

\begin{proof}[Proof of \cref{le:outdeg_s_geq_in_degrees_of_T}]
Recall that $\sum_{v \in V} (\deg^+(v)-\deg^-(v)) = 0$. By rearranging, 
    \begin{align}
        \deg^+(s) &= \deg^-(s) + \sum_{v \in V \setminus \{s\}} (\deg^-(v) - \deg^+(v)) & \\ 
        &=\deg^-(s) + \sum_{v \in T} (\deg^-(v) - \deg^+(v)) +  \sum_{v \in V \setminus (T \cup \{s\}  )} (\deg^-(v) - \deg^+(v))& \\
        & \geq \sum_{v \in T} (\deg^-(v) - \deg^+(v)) & \text{(by \cref{it:positive_balance})} \\
        &=\sum_{v \in T} \deg^-(v), &\text{(by \cref{it:zero_out_degree})}
    \end{align}
which completes the proof of the lemma.
\end{proof}

\paragraph{The construction of $G$.} The digraph $G=(V,E)$ is constructed based on the witnesses supplied by the adversary in \cref{line:adversary_prepares_witnesses}. For the sake of the construction, we assume that the challenger maintains a data structure for creating and updating $G$. The operation $AddNode(v)$ adds $v$ to $V$. This operation would have no effect if $v$ was already added to the graph in an earlier step. Our digraph could have multiple directed edges between two nodes, and we use edge-labels to identify them. The operation $AddEdge(u,v,\ell)$ adds the edge $u,v$ with a label $\ell$ to $E$. We emphasize that this operation has no effect if $(u,v,\ell)$ was already added to the graph in an earlier step.  For analysis purposes, we assume that the challenger initializes an empty graph, adds a special source vertex $s$, and appends the following steps to the $\balance$ procedure (see \cref{fig:warm-up_scheme}) she uses:
\begin{compactenum}
  \setcounter{enumi}{4}
    \item  $AddNode(\account)$
    \item For every $i \in \{1,\ldots v\}$
    \begin{compactenum}
        \item If $\witness_i$ is a top-up witness, then $AddEdge(s,\account,r_i)$.
        \item Else: $AddEdge(\account_i,\account,(m_i,t_i))$.
    \end{compactenum}
\end{compactenum}
Clearly, this does not change the adversary's view, nor the winning probability in the game. 
We define the set $T$ to be the target accounts $(\account_i)_{i \in \{1,\ldots,m\}}$ generated by the challenger (see  Game \ref{game:unforgeability}, \cref{line:target_accounts}).

\begin{proof}[Proof of \cref{le:properties_of_G_warm-up}] \cref{it:property_1_holds_in_G}:
Let $\account'$ be a vertex in $V \setminus \{s\}$. We make two observations: i) Every two outgoing edges from $\account'$ labeled as $(m,t)$ and $(m',t')$ satisfy that $m=m'$ with overwhelming probability. Note that before adding such an edge, we check that $m$ was signed using $\qs$ (an uncloneable signature scheme) associated with $\account'$ (see \cref{it:balance_check_oss}). Therefore, having $m\neq m'$ as the outgoing edge-labels means that the adversary produced two  signed messages associated with $\account'$ that pass verification, violating the uncloneable signature unforgeability (see \cref{eq:ts_unforgeability,eq:sqts_unforgeability,eq:oss_unforgeability}), and therefore can only occur with negligible probability.
ii) An outgoing edge from $\account'$ labeled as $(m,t)$ implies that $\deg^-(\account')\geq t$. The implication follows from the recursive call to $\balance$ in \cref{it:balance_check_balance}, which adds at least $t$ in-coming edges to $\account'$. Indeed, we would not add edges with same label twice in that recursive call because of the test made in \cref{it:warm-up_balance_check_no_double_spend} in \cref{fig:warm-up_scheme}. 

We conclude that $\deg^-(\account') \geq \deg^+(\account')$: 
Let $t_{max}$ be the maximal $t$ for which there is an outgoing edge from $\account'$ labeled $(m,t)$. By the first observation, and the fact that all the labels are positive integers, we have $t_{max} \geq \deg^+(\account')$. By the second observation, we have $\deg^-(\account') \geq t_{max}$. Combining these inequalities completes the proof of \cref{it:property_1_holds_in_G}, namely, $\deg^-(\account')-\deg^+(\account')\geq 0$.

\cref{it:property_2_holds_in_G}: An outgoing edge is added to an $\account$ in $T$ only if the adversary can provide a signed message associated with it. We argue that such an edge would violate the security of the uncloneable signature scheme $\qs$, and therefore, $\deg^+(\account)=0$ for all the accounts in $T$, except with negligible probability: Recall that the set $T$ is the set of accounts created by the \emph{challenger} (see Game~\ref{game:unforgeability}), using the $\qs.\qgen$  algorithm (see $\createsimpleaccount$ in \cref{fig:warm-up_scheme}). By the unforgeability of the relevant $\qs$ scheme (see \cref{eq:ts_unforgeability,eq:sqts_unforgeability,eq:oss_unforgeability}), an outgoing edge for a vertex in $T$ can only occur with negligible probability; otherwise, we can construct a $\qs$ adversary, who simulates both the challenger and payment scheme adversary,  and provides one $\qs$ signature from the payment scheme adversary, and another from the challenger (who has $\qsk$, and therefore can be used to sign an arbitrary message).

\cref{it:in_degree_geq_b_i}. Recall that $b_i \gets \balance(\bpk,\account_i,\witness_i)$ and that we add $b_i$ incoming edges, during that call to $\balance$ (see the construction of $G$). Note that we might add even more incoming edges in the recurrence calls. Therefore, this property holds with certainty.

\cref{it:out_degree_s}. We add an outgoing edge from $s$ only for valid top-up witnesses. A top-up witness is valid only if it contains a valid digital signature  (see \cref{it:top-up-contains-ds} in \cref{fig:warm-up_scheme}), and the edge label contains the signed message. During the security game, exactly $n$ digital signatures are handed to the adversary, once for each $\topup$. Therefore, the out-degree is at most $n$ with overwhelming probability, or otherwise we can construct a \qpt adversary which would break the PQ-EU-CMA digital signature scheme, which contradicts our assumption.
\end{proof}
\end{proof}

\subsection{Divisibility and Mergeability}
\label{sec:divisibility_and_mergeability}
Any quantum payment scheme, such as the warm-up construction discussed previously in this section, provides divisibility and mergeability. To divide an account with balance $b$ associated with the state $\ket{\$}$ and $\witness$ to two accounts with balances $b_1$ and $b_2$ (where $b_1+b_2=b$), a user would create two accounts and move the appropriate amounts to them using $\simplepay$. 
After running this procedure, by Completeness \cref{it:balance_respects_payments}, we end up with having $\balance(\bpk,\ket{\$_i},\witness_i)=b_i$ for $i\in \{1,2\} $, as expected. We emphasize that the number of qubits in $\ket{\$} $ depends only on $\secpar$, and independent of all the other parameters, which is relevant in cases where quantum storage is scarce. Of course, the $\witness$ sizes, which could be viewed as the classical part of the money, increase in size after a division. Merging achieves the opposite goal: it takes as input two quantum money states with balances $b_1$ and $b_2$, and merge them into a single quantum money state with a balance of $b_1+b_2$, where the size of the quantum state depends only on $\secpar$.  These algorithms are shown if more detail in \cref{fig:divide_and_merge}. 

\begin{figure}[thbp]
\noindent$\dvd(\bpk,b_1,b_2,\ket{\$},\witness,\ket{\$_1},\account_1,\ket{\$_2},\account_2)$

\noindent  \textbf{Assumes:} $\ket{\$},\ket{\$_1},\ket{\$_2}$ are associated with $\account,\account_1,\account_2$, respectively.  \\
\begin{compactenum}
   \item If $b_1,b_2\notin \NN^+$ or $b_1+b_2\neq \balance(\bpk,\account,\witness)$, then Abort.
    \item Set $\vec \account=(\overbrace{\account_1,\account_1,\ldots,\account_1}^{b_1 \text{ times}},\overbrace{\account_2,\account_2,\ldots,\account_2}^{b_2 \text{ times}})$.
    \item $(\pwit_1,\ldots ,\pwit_b)\gets \simplepay(\bpk,\ket{\$},\account, \witness, \vec \account ) $.
    \item Set $\witness_1\equiv (\pwit_1,\ldots,\pwit_{b_1})$ and $\witness_2=(\pwit_{b_1+1},\ldots,\pwit_{b})$.
    \item Output $(\account_1,\ket{\$_1},\witness_1,\account_2,\ket{\$_2},\witness_2)$.
\end{compactenum}

\noindent $\merge(\bpk,\account_1,\ket{\$_1},\witness_1,\account_2,\ket{\$_2},\witness_2,\ket{\$},\account)$\\
\noindent  \textbf{Assumes:} $\ket{\$_1},\ket{\$_2},\ket{\$}$ are associated with $\account_1,\account_2,\account$, respectively.
 
\begin{compactenum}
\item $b_1 \gets \balance(\bpk,\account_1,\witness_1)$.
\item $b_2 \gets \balance(\bpk,\account_2,\witness_2)$.
\item If $b_1=0$ or $b_2=0$ then Abort.
\item Set $b:=b_1+b_2$.
\item $(\pwit_1,\ldots ,\pwit_{b_1})\gets \simplepay(\bpk,\ket{\$_1},\account_1, \witness_1, (\overbrace{\account,\account,\ldots,\account}^{b_1 \text{ times}}) ) $.
\item $(\pwit_{b_{1}+1},\ldots ,\pwit_{b})\gets \simplepay(\bpk,\ket{\$_1},\account_1, \witness_1, (\overbrace{\account,\account,\ldots,\account}^{b_2 \text{ times}}) ) $.
\item Set $\witness :=(\pwit_1,\ldots,\pwit_b) $.
\item Output $(\account,\ket{\$},\witness)$.
\end{compactenum}

\caption{Divide and Merge algorithms.}
\label{fig:divide_and_merge}
\end{figure}
 

\section{The main quantum payment scheme}
\label{sec:construction_payment_scheme}

The goal of this section is to extend the warm-up scheme. We will see in \cref{sec:applications} how it could be used to implement various prudent contracts. Similarly to the warm-up construction, our main construction is primarily based on uncloneable signatures and digital signatures. 

We will now discuss the differences between the warm-up scheme and the main scheme, see \cref{fig:prudent_contract_scheme,fig:prudent_contract_scheme_balance}. An account is associated with a single quantum signing token in the warm-up scheme. The main scheme supports multiple quantum signing tokens. Since there are many quantum signing tokens, we need to define a mechanism that determines their relative power: this is achieved by a program called $\siginterpreter$, specified by the user. This program receives as an input many messages, and outputs a single message, which represents the final decision. Each message is supposed to be signed by the associated quantum signing key. 

This interpreter can even decide if not all the messages are available. The role of $\siginterpreter$ is to interpret, even if some of the signatures are missing, the intended goal of the transaction. In the context of 3-out-of-5 multi-sig, the $\siginterpreter$ could output an interpreted message $m$ if at least 3-out-of-5 of the input messages are $m$. 

Since the $\siginterpreter$ influences where the funds are forwarded, there could be an attempt for double-spending. For example, consider the problematic setting in which there are two quantum signing keys, and in case one message is missing, the decision is based solely on the other message, which is present. This could pose a problem: Mallory could use her first quantum signing key to move her funds to Alice, and give that witness to her, and use the second quantum signing key to move the funds to Bob. Therefore, a $\siginterpreter$ has to be monotone, as follows:
\begin{definition}[Monotone sig-interpreter]
\label{def:siginterpreter}
\siginterpreter receives as an input an $n$-tuple of strings $(m_1,\ldots,m_n)$ where $m_i \in \{0,1\}^* \cup \visiblespace$, and deterministically outputs $\interpretedmessage \in\{0,1\}^* \cup \bot$.
The program \siginterpreter must satisfy the following monotonicity property:
Replacing the empty string $\visiblespace$ with a non-empty string does not change the output value from one valid value (i.e., any $m\neq \bot$) to another. Formally, for every $m_1,\ldots,m_n \in \{0,1\}^* \cup \visiblespace$ and every $i \in [n]$, if 
\[\siginterpreter(m_1,\ldots,m_{i-1},\visiblespace,m_{i+1},\ldots,m_n)\neq \bot,\]
then  
 \[\siginterpreter(m_1,\ldots,m_{i-1},\visiblespace,m_{i+1},\ldots,m_n) = \siginterpreter(m_1,\ldots,m_{i-1},m_i,m_{i+1},\ldots,m_n).\]
\end{definition}

\begin{remark}
Validating that $\siginterpreter$ is monotone could be computationally costly. In the schemes which we provide, we prove it directly. Two of the approaches that could be used are: 
\begin{enumerate}
    \item The scheme would support a fixed number of $\siginterpreter$ programs, which are guaranteed to be monotone. This somewhat reduces the flexibility.
    \item Whenever a $\siginterpreter$ program is specified, attached to it must be a formal proof that this program is monotone. It is the user's role to verify that correctness of that proof while running $\balance$. In this approach, some upper bounds on the running time of this verification should be made, so that verification time would not blow up. 
\end{enumerate}
\label{rem:monotonicity_could_be_hard_to_compute}
\end{remark}

The motivation for the definition above is the following.
\begin{lemma}
Let $\siginterpreter$ be a monotone sig-interpreter program, as in \cref{def:siginterpreter}. 
Fix $(m_1,\ldots,m_n),(m_1',\ldots,m_n')\in  \{0,1\}^* \cup \visiblespace$.
If  $\siginterpreter(m_1,\ldots,m_n),\siginterpreter(m_1',\ldots,m_n')$ $\neq \bot$ and
$\siginterpreter(m_1,\ldots,m_n)\neq \siginterpreter(m_1',\ldots,m_n')$ then there exists $i \in [n]$ such that $m_i\neq m_i'$ and $m_i,m_i'\neq \visiblespace$.    
\label{le:monotone_implies_existence_of_different_message}
\end{lemma}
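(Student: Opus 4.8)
The plan is to argue by contradiction. Suppose $\siginterpreter(m_1,\dots,m_n)\neq\bot$, $\siginterpreter(m_1',\dots,m_n')\neq\bot$, the two outputs are unequal, and yet there is \emph{no} index $i$ with both $m_i\neq m_i'$ and $m_i,m_i'\neq\visiblespace$. I will show this forces the two outputs to be equal, a contradiction.

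First I would introduce the refinement partial order on $\{0,1\}^*\cup\visiblespace$: write $a\sqsubseteq b$ iff $a=\visiblespace$ or $a=b$, so $\visiblespace$ is the least element and distinct non-empty strings are incomparable. The failure of the lemma's conclusion says precisely that for every $i$ the entries $m_i$ and $m_i'$ are $\sqsubseteq$-comparable; hence they have a least upper bound $\check m_i$, namely their common value when $m_i=m_i'$, and the non-empty one of the two when exactly one of them is $\visiblespace$ (the case ``both non-empty and unequal'' being exactly what we have excluded). One then checks the routine fact that $m_i\sqsubseteq\check m_i$ and $m_i'\sqsubseteq\check m_i$ for all $i$; in other words the tuple $\check m=(\check m_1,\dots,\check m_n)$ is obtained from $m$, and also from $m'$, by a finite sequence of steps each of which replaces some $\visiblespace$-entry by a non-empty string.

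The core step is a ``climbing'' claim: if $t$ is a tuple with $\siginterpreter(t)\neq\bot$ and $t'$ is obtained from $t$ by replacing one or more $\visiblespace$-entries with non-empty strings, then $\siginterpreter(t')=\siginterpreter(t)$. I would prove this by induction on the number of entries changed, invoking the monotonicity property of \cref{def:siginterpreter} one entry at a time: at each step the current tuple still has $\visiblespace$ in the slot about to be filled and, by the induction hypothesis, has the same non-$\bot$ value $\siginterpreter(t)$, so the premise ``$\siginterpreter(\cdots\visiblespace\cdots)\neq\bot$'' of \cref{def:siginterpreter} is met and the value is unchanged by that replacement. Applying this with $t=m$, $t'=\check m$ gives $\siginterpreter(\check m)=\siginterpreter(m)$, and with $t=m'$, $t'=\check m$ gives $\siginterpreter(\check m)=\siginterpreter(m')$; hence $\siginterpreter(m)=\siginterpreter(m')$, contradicting the assumption that they differ.

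I expect the only genuine obstacle to be getting the direction of the argument right. Monotonicity only guarantees that \emph{adding} information preserves an already-valid output, so one must move \emph{upward} to the common refinement $\check m$ (where both $m$ and $m'$ land) rather than downward to a common coarsening, about whose value nothing is known; and the induction must be phrased so that each intermediate tuple is verified to be non-$\bot$ before the next application of monotonicity. The remaining ingredients — the three-way case split defining $\check m_i$ and the verification that $m_i,m_i'\sqsubseteq\check m_i$ — are straightforward.
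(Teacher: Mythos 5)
Your proposal is correct and follows essentially the same route as the paper: your common refinement $\check m$ is exactly the paper's tuple $x=x'$ (obtained by filling each side's $\visiblespace$-entries from the other side), and both arguments conclude via monotonicity that $\siginterpreter(m)=\siginterpreter(\check m)=\siginterpreter(m')$. Your explicit one-entry-at-a-time induction for the ``climbing'' claim is in fact slightly more careful than the paper, which applies the single-entry monotonicity property to fill several entries in one step without spelling out that induction.
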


This lemma's usefulness stems from the following reason. In our application, every $m_i \neq \visiblespace$ represents a message signed with the uncloneable signature scheme $\qs$ with respect to some $\pk_i$, whereas $m_i = \visiblespace$ represents a missing signed message. In this context, the lemma implies that an adversary cannot find two input vectors that cause $\siginterpreter$ to output two distinct valid outputs\footnote{By a valid output we mean any value which is \emph{not} $\bot$.},
since such an adversary could produce two signed messages, $m_i$ and $m_i'$ associated with the same public key $pk_i$, breaking the uncloneable signature security of $\qs$.
\begin{proof}
    We define 
    \begin{equation}
        A = \{i|m_i=\visiblespace \text{ and } m_i' \neq \visiblespace \},\quad A'=\{i|m_i\neq \visiblespace \text{ and } m_i' = \visiblespace \},\quad  B=\{i|m_i \neq m_i \text { and } m_i,m_i'\neq \visiblespace \}.
    \end{equation}
    Our goal is to prove that $B$ is non-empty. Suppose, by contradiction, that $B = \emptyset$. 
    Since, by assumption, $\siginterpreter(m_1,\ldots,m_n)\neq \siginterpreter(m_1',\ldots,m_n')$, then either $A$ or $A'$ must be non-empty. Let 
    \begin{equation}
         x_i=
    \begin{cases}
        m'_i &\text{if } i\in A\\ 
        m_i & \text{otherwise.} \\
          \end{cases}
     \end{equation} 
     \begin{equation}
         x_i'=
    \begin{cases}
        m_i &\text{if } i\in A'\\ 
        m_i' & \text{otherwise.} \\
          \end{cases}
     \end{equation} 
     Since by assumption $\siginterpreter(m_1,\ldots,m_n)\neq \bot$, by the monotonicity property (see \cref{def:siginterpreter}), 
     
     $\siginterpreter(x_1,\ldots,x_n)=\siginterpreter(m_1,\ldots,m_n)$,
     and similarly, 
     
     $\siginterpreter(x_1',\ldots,x_n')=\siginterpreter(m_1',\ldots,m_n')$. 
     But note that whenever $B=\emptyset$, by construction, $(x_1,\ldots,x_n)=(x_1',\ldots,x_n')$. 
     This gives us a contradiction, since on the one hand, by combining these facts, we have
     \begin{align}
     \siginterpreter(m_1,\ldots,m_n)&=\siginterpreter(x_1,\ldots,x_n)\\
     &=\siginterpreter(x_1',\ldots,x_n')\\
     &=\siginterpreter(m_1',\ldots,m_n'),
     \end{align}
     while by assumption $\siginterpreter(m_1,\ldots,m_n)\neq \siginterpreter(m_1',\ldots,m_n') $.
\end{proof}

The $\siginterpreter$ can also act as a financial control mechanism. For financial control purposes, the interpreter could define a simple mapping, where upon input message $i$, the money would be transferred to $\account_i$, for some predefined set of accounts. This would mean that money can only be sent to one of these predefined accounts. One motivation is to reduce embezzlement risk; another is to reduce the risk of theft.

Recall that in the warm-up scheme, the sender signs the accounts to which she wants to send the money. In the main scheme, this is done differently. The interpreter outputs a program, denoted $\outputscript$, which determines where the funds will be transferred: upon input $k$, the program outputs the $k$th destination account. But one might wonder, are we just saving space? When the sender has a balance of $b$, why can't the signer specify $b$ destinations? In some cases, the signer may want to decide now, regarding the destination that are not currently available. 
Consider, for example, the permanent accounts example that was discussed.
Here, the account might receive funds even \emph{after} the quantum signing key is used. There is no upper-bound on the number of payments which will be made to that address in the future. In this case, the spender would first create a new address, send the existing $b$ funds to their destinations, in a similiar fashion as was done in the warm-up scheme, and for any $k>b$, the $k$'th unit will be transferred to the new account. This is somewhat analogous to call-forwarding: all remaining balance would be automatically moved to the new account. The main advantage  of the program over a list here is that a (finite) program can implicitly define an infinite list.

The next new ingredient is the $\outputverifyscript$. This is another program that the $\siginterpreter$ outputs. This program can block some of the outputs determined by $\outputscript$ from taking effect. Applications which use this functionality are \emph{restricted accounts} in \cref{sec:restricted_accounts} and \emph{secure exchange} in \cref{sec:secure_exchange}.

The last new ingredient is a parameter for auxiliary information denoted by $\aux$---see \cref{it:interpretedmessage_aux} in \cref{fig:prudent_contract_scheme_balance}. A user may choose to sign $\aux$ with the uncloneable signature $\qs$, which is completely ignored by all the algorithms of the payment scheme. This can be useful for extensions of the scheme, such as the applications discussed in \cref{sec:colored_coins_smart_property_and_tokenized_securities}: there we show an application which piggybacks the payment scheme and use the auxiliary information as a way to authenticate messages (specifically, to allow shareholders to exercise their voting rights). This ingredient is also used in \cref{sec:proof_of_reserves}.

In a simple account, the account is a single \qs public key and a $\siginterpreter$ program, which in this case is the identity program (i.e., it outputs the input that it received). The  $\account$ would consist of multiple \qs public keys in more advanced use-cases. 

Our main payment scheme is presented in full detail in \cref{fig:prudent_contract_scheme,fig:prudent_contract_scheme_balance}.

\begin{figure}[th!]
\noindent\textbf{Assumes:}  Digital Signature $\ds$, and $\qs$ is an uncloneable signature scheme.
\\
\noindent$\setup(\secparam)$ \hfill // The same as in the warm-up scheme. 
\begin{compactenum}
\item $(\sk,\vk) \gets \ds.\keygen(\secparam)$.
\item $(\mpk,\msk) \gets \qs.\setup(\secparam)$. 
\item $\bpk\equiv (\mpk,\vk)$, $\bsk \equiv (\msk,\sk)$.
\item Output $(\bpk,\bsk)$.
\end{compactenum}

\noindent $\simplesiginterpreter(m_1)$
\begin{compactenum}
\item Output $m \equiv m_1$.
\end{compactenum}

\noindent $\createsimpleaccount(\bsk)$ \hfill // If one-shot signatures are used, $\bpk$ is used instead of $\bsk$. 
\begin{compactenum}
\item Interpret $\bsk$ as $(\msk,\vk)$. 
\item $(\pk, \qsk) \gets \qs.\qgen(\msk)$.\hfill // If $\qs$ is a one-shot signature scheme, $\msk$ should be replaced with $\mpk$.
\item $\ket{\$}\equiv \qsk,\ \account \equiv (\pk,\simplesiginterpreter)$.
\item Output $ (\ket{\$},\account)$.
\end{compactenum}

\noindent $\topup_\bsk(\account)$ \hfill // The same as in the warm-up scheme. 
\begin{compactenum}
\item Interpret $\bsk$ as $(\msk,\sk)$.
\item Let $\rand\gets_R\{0,1\}^\lambda$.
\item Set $m\equiv (\account,\rand)$.
\item $\sigma \gets \ds.\sign_{\sk}(m)$.
\item Output $\pwit \equiv (\rand,\sigma) $.
\end{compactenum}

\noindent $\simpleoutputscript_{(\account_1,\ldots,\account_n)}(i)$
\begin{compactenum}
\item If $1 \leq i \leq n $, output $\account_i$, otherwise, output $\bot$. 
\end{compactenum}

\noindent $\simpleoutputverifyscript(i,\vswitness)$
\begin{compactenum}
\item Output True.
\end{compactenum}

\noindent $\simplepay(\bpk, \ket{\$},\account, \witness,(\account_1,\ldots,\account_r) )$ \\
\hfill Assumes that $\balance(\bpk,\account,\witness)=r$. 
\begin{compactenum}
\item \label{it:simplepay_m}Set $m \equiv (\simpleoutputscript_{\account_1,\ldots,\account_r},\simpleoutputverifyscript)$.
\item $\sigma \gets \qs.\qsign(\ket{\$} ,m)$.
\item For every $i\in [r]$:
\begin{compactenum}
    \item Set $\pwit_i \equiv (\account, m, \sigma,i,NULL, \witness)$. \hfill // We do not need a $\vswitness$ for simple accounts, hence the NULL value. 
\end{compactenum}
\item Output $(\pwit_1,\ldots,\pwit_r) $.
\end{compactenum}
\caption{Algorithms for the quantum payment scheme.}
\label{fig:prudent_contract_scheme}
\end{figure}

\begin{figure}[tb]
\noindent $\balance(\bpk, \account,\witness)$
\begin{compactenum}
    \item Interpret $\bpk$ as $(\mpk,\vk)$.
    \item Interpret $\witness$ as $(\witness_1,\ldots, \witness_v)$. \hfill // $v$ is the claimed balance. 
    \item For every $i\in\{1,\ldots,v\}$:
    \begin{compactenum}
        \item If $\witness_i.size=2$: \hfill // A top-up witness, no changes from the warm-up scheme.  \label{it:full-top-up-witness}
        \begin{compactenum}
            \item Interpret $\witness_i$ as $(\rand_i,\sigma_i)$. Set $m_i\equiv (\account,\rand_i)$.
            \item If $\ds.\verify_{\vk}(m_i,\sigma_i)=0$, output $\bot$. \label{it:full-top-up-contains-ds}
            \item If $m_i=m_j$ for some top-up witness $j<i$, output $0$.
        \end{compactenum}
        \item Else:
        \begin{compactenum}
            \item Interpret $\witness_i$ as $(\account_i, \vec m_i,t_i, \vec\sigma_i,\vswitness_i, \witness'_i)$.
            \item Interpret $\account_i$ as $(\pk_{i,1},\ldots,\pk_{i,n_i},\siginterpreter_i)$. 
            \item \label{it:definition_of_vec_m}Interpret $\vec m_i$ as $m_{i,1},\ldots,m_{i,n_i}$ where $m_{i,j}\in \{0,1\}^*\cup \visiblespace$ and $\vec \sigma_i$ as $\sigma_{i,1},\ldots,\sigma_{i,n_i}$ where $\sigma_{i,j}\in \{0,1\}^*\cup \visiblespace$.  
            \item Output $0$, unless the all the following tests succeed: 
            \begin{compactenum}
                \item \label{it:siginterpreter_is_monotone_test} $\siginterpreter$ is monotone (\cref{def:siginterpreter}). \hfill // See \cref{rem:monotonicity_could_be_hard_to_compute}. 
                \item \label{it:valid_message_is_signed} For every $j\in \{1,\ldots,n_i\}$ such that $m_{i,j} \neq \visiblespace$, $\qs.\verify(\mpk,\pk_{i,j},m_{i,j},\sigma_{i,j})=1$.
                \item \label{it:interpreated_message_valid} Set $\interpretedmessage_i\gets \siginterpreter(m_{i,1},\ldots,m_{i,n_i})$. Test that $\interpretedmessage_i \neq \bot$. 
                \item \label{it:interpretedmessage_aux} Interpret $\interpretedmessage_i$ as $(\outputscript_i,\outputverifyscript_i,\aux_i)$. 
                \item $\outputscript_i(t_i)= \account$. \hfill // The $t_i$th output is the current $\account$. 
                \item \label{it:main_checks_outputverifyscript}$\outputverifyscript_i(t_i,\vswitness_i)=1$. \hfill // There is a verify-script witness which allows it to be paid. 
                \item \label{it:main_balance_check_no_double_spend} $(\account_i,t_i)\neq (\account_j,t_j)$ for all $j < i$. \hfill // The same witness should not be claimed twice. 
                \item \label{it:balance_checks_balance_main} $\balance(\bpk,\account_i,\witness'_i)\geq t_i$. \hfill // $\account_i$ has sufficient funds. 

            \end{compactenum}
        \end{compactenum}
    \end{compactenum}
    \item Output $v$.
\end{compactenum}
\caption{The balance algorithm in our quantum payment scheme.}
\label{fig:prudent_contract_scheme_balance}
\end{figure}

\begin{theorem}
The quantum payment scheme in \cref{fig:prudent_contract_scheme,fig:prudent_contract_scheme_balance} is unforgeable, assuming $\ds$ is post-quantum EU-CMA secure (see \cref{def:digital_signature}), and $\qs$ is unforgeable (see \cref{eq:ts_unforgeability,eq:sqts_unforgeability,eq:oss_unforgeability}).
\end{theorem}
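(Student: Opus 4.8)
The plan is to follow the unforgeability proof of the warm-up scheme almost line for line, isolating the few places where the new machinery of the main scheme — the $\siginterpreter$, $\outputscript$, $\outputverifyscript$ and $\aux$ fields — actually intervenes. I would reuse the combinatorial \cref{le:outdeg_s_geq_in_degrees_of_T} verbatim, and rebuild the digraph $G=(V,E)$ from the witnesses that $\qadv$ hands the challenger in \cref{line:adversary_prepares_witnesses} exactly as before, changing only the edge labels to fit the witness format of \cref{fig:prudent_contract_scheme_balance}: when $\balance$ processes a top-up sub-witness it does $AddEdge(s,\account,\rand_i)$, and when it processes a payment sub-witness $(\account_i,\vec m_i,t_i,\vec\sigma_i,\vswitness_i,\witness'_i)$ that passes every check it does $AddEdge(\account_i,\account,(\interpretedmessage_i,t_i))$, where $\interpretedmessage_i=\siginterpreter_i(\vec m_i)$ is the value produced in \cref{it:interpreated_message_valid}. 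As in the warm-up proof this bookkeeping is invisible to $\qadv$, and the target set $T$ is again the challenger's accounts from \cref{line:target_accounts}.

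Next I would re-establish the four items of \cref{le:properties_of_G_warm-up} for this $G$. Items 3 ($\deg^-(\account_i)\ge b_i$) and 4 ($\deg^+(s)\le n$) are the warm-up arguments unchanged: the recursive call in \cref{it:balance_checks_balance_main} adds at least $b_i$ incoming edges to each target, while every outgoing edge of $s$ certifies a distinct valid $\ds$ signature on a message $(\account,\rand)$ (check \cref{it:full-top-up-contains-ds}), so more than $n$ such edges would break PQ-EU-CMA of $\ds$ (\cref{def:unforgeability_of_digital_signature}). Item 2 ($\deg^+(v)=0$ for $v\in T$) is again the observation that a target account is a \emph{simple} account — one $\qs$ public key with the identity $\siginterpreter$ — so an outgoing edge would expose a message validly signed under that public key, which, combined with a second signature the challenger can itself produce from the $\qsk$ it holds, is a $\qs$ forgery, contradicting \cref{eq:ts_unforgeability,eq:sqts_unforgeability,eq:oss_unforgeability}.

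The one step that really needs new work is Item 1, $\deg^-(v)\ge\deg^+(v)$ for $v\ne s$, where the warm-up "all outgoing edges carry the same message" observation must be replaced. I would fix $\account'$ and two of its outgoing edges, labelled $(\interpretedmessage,t)$ and $(\interpretedmessage',t')$, arising from payment sub-witnesses with signed-message vectors $\vec m,\vec m'$ against the same public keys and $\siginterpreter$-program of $\account'$. Since $\balance$ accepted both, $\siginterpreter$ is monotone (\cref{it:siginterpreter_is_monotone_test}, i.e. \cref{def:siginterpreter}), every non-$\visiblespace$ entry of $\vec m$ and of $\vec m'$ is a valid $\qs$ signature under the matching public key (\cref{it:valid_message_is_signed}), and both interpreted messages are $\ne\bot$ (\cref{it:interpreated_message_valid}); so if $\interpretedmessage\ne\interpretedmessage'$, \cref{le:monotone_implies_existence_of_different_message} produces an index $j$ with the $j$-th entries distinct and both $\ne\visiblespace$, i.e. two distinct messages validly signed under the same public key — a $\qs$ forgery. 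Hence with overwhelming probability all outgoing edges of $\account'$ share the same $\interpretedmessage$, in particular the same $\outputscript$, so an outgoing edge $(\interpretedmessage,t)$ forces $\outputscript(t)=\account$ and, through the recursive balance check \cref{it:balance_checks_balance_main} (together with the no-double-spend test \cref{it:main_balance_check_no_double_spend} in that recursive call), adds at least $t$ incoming edges to $\account'$; thus $\deg^-(\account')\ge t$. Since the distinct outgoing labels of $\account'$ share their first coordinate, their second coordinates are distinct positive integers, so with $t_{\max}$ their maximum we get $\deg^+(\account')\le t_{\max}\le\deg^-(\account')$. The other new fields do not interfere: $\outputverifyscript$ (check \cref{it:main_checks_outputverifyscript}) can only block an output, never create an edge, and $\aux$ rides along inside $\interpretedmessage$ without ever inducing one.

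Finally, the theorem follows from the analogue of \cref{le:properties_of_G_warm-up} exactly as in the warm-up proof, via
\[ n \;\ge\; \deg^+(s)\;\ge\;\sum_{v\in T}\deg^-(v)\;=\;\sum_{i=1}^m\deg^-(\account_i)\;\ge\;\sum_{i=1}^m b_i , \]
so $\Pr[\forgepayment_{\qadv,\Pi}(\secpar)=1]$ is negligible. I expect the main obstacle to be getting the edge label right: it must be the \emph{interpreted} message, not the raw signed vector (two honest witnesses for the same payment can legitimately differ only in their $\visiblespace$ placeholders), and one has to recognise that monotonicity of $\siginterpreter$ is exactly what prevents a node from forwarding its balance to two different destination profiles without a $\qs$ forgery; the rest is transcription of the warm-up argument.
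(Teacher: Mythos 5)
Your proposal is correct and follows essentially the same route as the paper: the same digraph construction with edges relabelled by $(\interpretedmessage_i,t_i)$, the same four structural properties, and the same use of \cref{le:monotone_implies_existence_of_different_message} together with the monotonicity test in \cref{it:siginterpreter_is_monotone_test} to show that all outgoing edges of a node carry the same interpreted message, reducing any violation to a $\qs$ forgery. The paper's proof likewise dispatches the other three properties by reference to the warm-up argument and concludes via \cref{le:outdeg_s_geq_in_degrees_of_T} exactly as you do.
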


\begin{proof}
The outline of the proof is the same as the unforgeability proof in the warm-up-scheme: we construct a labeled graph $G$ in the same way as was done for the warm-up scheme, except we use the label $(\interpretedmessage_i,t_i)$ instead of $(m_i,t_i)$; we show that $G$ satisfies the same  properties (see \cref{le:properties_of_G_warm-up}); and use the combinatorial lemma (see \cref{le:outdeg_s_geq_in_degrees_of_T}) to complete the proof.

The only missing part to prove is the following: 
\begin{lemma}
    The following properties hold in $G$ with overwhelming probability:
    \begin{enumerate}
         \item \label{it:property_1_holds_in_G_main} For every $v \in V\setminus \{s\}$, $\deg^-(v)-\deg^+(v)\geq 0$.
         \item \label{it:property_2_holds_in_G_main} For every $v \in T$, $\deg^+(v)=0$.
        \item \label{it:in_degree_geq_b_i_main} For every target account $\{\account_i\}_{i \in \{1,\ldots,m\}}$ in $T$, $\deg^-(\account_i) \geq b_i$ (where $b_i$ is defined in \cref{line:calculating_balances} in Game~\ref{game:unforgeability}).
        \item \label{it:out_degree_s_main} $\deg^+(s)\leq n$ (where $n$ is defined in \cref{line:adversary_prepares_witnesses} in Game~\ref{game:unforgeability}).
    \end{enumerate}
    \label{le:properties_of_G_main} 
\end{lemma}
\begin{proof}
The proofs of \cref{it:property_2_holds_in_G_main,it:in_degree_geq_b_i_main,it:out_degree_s_main} are exactly the same as in \cref{le:properties_of_G_warm-up}, and hence omitted. We are left to prove \cref{it:property_1_holds_in_G_main}:
Let $\account'$ be a vertex in $V \setminus \{s\}$. We make the similar two observations that were made previously: 

i) Every two outgoing edges from $\account'$ labeled as $(\interpretedmessage,t)$ and

$(\interpretedmessage',t')$ satisfy that $\interpretedmessage=\interpretedmessage'$ with overwhelming probability.
Suppose toward the contradiction that $\interpretedmessage\neq \interpretedmessage'$.
Denote by $\vec m$ $(\vec m')$ and $\vec \sigma$ ($\vec \sigma'$) defined in \cref{it:definition_of_vec_m} that were used to generate $\interpretedmessage$ (respectively $\interpretedmessage'$).
Since $\siginterpreter$ is monotone by the test in \cref{it:siginterpreter_is_monotone_test}, and $\interpretedmessage, \interpretedmessage' \neq \bot$ by the test in \cref{it:interpreated_message_valid} (see \cref{fig:prudent_contract_scheme_balance}), the conditions for \cref{le:monotone_implies_existence_of_different_message} hold, and therefore, there exists an $i$ such that $m_i \neq m_i'$ and $m_i,m_i' \neq \visiblespace$.
This can only occur with negligible probability, since in \cref{it:valid_message_is_signed} we check that $(\sigma_i,m_i)$ and $(\sigma_i',m_i') $ are valid \qs signatures associated with the same $\pk_i$, hence violating the \qs unforgeability guarantees (see \cref{eq:ts_unforgeability,eq:sqts_unforgeability,eq:oss_unforgeability}).  

ii) An outgoing edge from $\account'$ labeled as $(\interpretedmessage,t)$ implies that

$\deg^-(\account')\geq t$. As in the warm-up scheme, this implication follows from the recursive call to $\balance$ in \cref{it:balance_checks_balance_main}, which adds at least $t$ in-coming edges to $\account'$. Indeed, we would not add the edges with same label twice in that recursive call because of the test made in \cref{it:main_balance_check_no_double_spend} in \cref{fig:prudent_contract_scheme_balance}. 

We conclude that $\deg^-(\account') \geq \deg^+(\account')$ by the exact same argument used in the warm-up scheme, which is repeated here for completeness. Let $t_{max}$ be the maximal $t$ for which there is an outgoing edge from $\account'$ labeled $(\interpretedmessage,t)$. By the first observation, and the fact that all the labels are positive integers, we have $t_{max} \geq \deg^+(\account')$. By the second observation, we have $\deg^-(\account') \geq t_{max}$. Combining these inequalities completes the proof of \cref{it:property_1_holds_in_G}, namely, $\deg^-(\account')-\deg^+(\account')\geq 0$. \end{proof}

This completes the security proof of the payment scheme.
\end{proof}

\section{Applications} 
\label{sec:applications}
\subsection{Permanent Accounts} 
\label{sec:permanent_accounts}

Here we discuss how to create and use a \emph{permanent} account. Recall that when executing the $\simplepay$ algorithm, the payer must specify all the outputs. A problem could occur if future payments were made to that address. If we were to use the $\simplepay$ algorithm, these future funds would be lost. 

This problem can be easily fixed. We can still use $\createsimpleaccount$ to generate a permanent account. We need to replace $\simplepay$. The idea is to provide a forwarding account: the spender first executes $\createsimpleaccount$ to generate the new $\forwardaccount$, to which all future payments would be forwarded.  Instead of using $\simpleoutputscript$ to spend the funds, we send all future outputs (i.e., outputs greater than $n$, where $n$ is the amount of funds that were received so far) to $\forwardaccount$.   
The algorithm for paying from a permanent account is given in \cref{fig:permanent_account}.

\begin{figure}[tb]
\noindent $\permanentoutputscript_{(\account_1,\ldots,\account_n)}(i)$
\begin{compactenum}
\item If $1 \leq i \leq  n-1$, output $\account_i$. 
\item Else, output $\account_n$.
\end{compactenum}

\noindent $\payfrompermanent(\bpk, \ket{\$},\account, \witness,(\account_1,\ldots,\account_r),\uwave{\forwardaccount} )$ \hfill // Should be compared with $\simplepay$ in \cref{fig:prudent_contract_scheme}.  \\
\hfill Assumes that $\balance(\bpk,\account,\witness)=r$. 
\begin{compactenum}
\item Set $m \equiv ($ \sout{$\simpleoutputscript_{\account_1,\ldots,\account_r}$} \uwave{$\permanentoutputscript_{\account_1,\ldots,\account_r,\forwardaccount}$} $,\simpleoutputverifyscript)$.
\item $\sigma \gets \qs.\qsign(\ket{\$} ,m)$.
\item For every $i\in [r]$:
\begin{compactenum}
    \item Set $\pwit_i \equiv (\account, m, \sigma,i,NULL, \witness)$.
\end{compactenum}
\item Output $(\pwit_1,\ldots,\pwit_r). $ \hfill // If future payments are made to $\account$, so that $\balance(\bpk,\account,\witness')>r$ they would be implicitly transferred to $\forwardaccount$, with $\pwit_i=(\account,m,\sigma,i,NULL,\witness')$ for $i > r$. 
\end{compactenum}  
\caption{Paying from a permanent account. The changes from $\simplepay$ (see \cref{fig:prudent_contract_scheme}) are marked.}
\label{fig:permanent_account}
\end{figure}


\subsection{Secure Exchange} 
\label{sec:secure_exchange}

Consider the setting where Alice wants to exchange her quantum euros for Bob's quantum pounds. Here, we assume that each one of the parties can cheat. If Alice sends Bob the money \emph{first}, then she is at risk that Bob will run away with the money. The goal here is to construct a protocol in which a dishonest party cannot run away with the honest party's money and their own. Note that in our protocol a dishonest party \emph{can} lock the honest party's funds, but \emph{cannot} run away with it. A \emph{cross-chain atomic swap} provides a stronger guarantee: an honest user (Alice or Bob in this case) would always finish with either the euro or the pound; in other words, the honest user funds cannot be locked. See also the discussion in \cref{sec:drawbacks}, \cref{it:atomic-swaps}.

We first describe the idea at a high level. The construction uses a post-quantum one-way function $h$. Alice picks a random $x$ and computes $y=h(x)$.
She then creates a transaction that moves her funds to Bob's account. She sends all the information about the transaction which would make it valid, \emph{except} for $x$: her transaction includes a $\outputverifyscript$ restriction which requires a $\vswitness$ of the form $x'$ such that $h(x')=y$. Clearly, $x$ satisfies this property.  This restriction is given in the following script (where $\exchangeoutputverifyscript$ stands for Exchange Output Verify Script):
\procedureblock{$\exchangeoutputverifyscript_y(i,x)$}{%
    \text {Accept iff } h(x)=y \\
} 

 So, at this point, Bob would like Alice to reveal $x$. After making sure that the only missing piece is $x$, Bob sends Alice his pounds using $\simplepay$. Alice verifies that the transaction is valid and, if so, sends $x$ to Bob. At this point, both parties hold a valid witness for the funds, and the protocol terminates successfully.

We also define the function $\balance'$ which is the same as $\balance$ except it ignores the $\outputverifyscript$ test in \cref{it:main_checks_outputverifyscript} \emph{only} in the non-recursive calls. $\balance'$ is used by Bob before he receives the preimage of the one-way function. 

The protocol is given in \cref{fig:secure_exchange}.  

\begin{figure}[thbp]
\procedureblock[colspace=-4.2cm]{}{%
\textbf{Alice}(\bpk_\eur,\bpk_\str,\account^A_\eur,\ket{\eur^A},\witness^A_\eur,\account^A_\str) \> \> \textbf{Bob}(\bpk_\eur,\bpk_\str,\account^B_\str,\ket{\str^B},\witness^B_\str,\account^B_\eur) \\
\text{(\textbf{Assumes:} Alice has the quantum state} \> \>\text{(\textbf{Assumes:} Bob has the quantum state}\\
 \text{$\ket{\str}$ associated with $\account^A_\str$)} \> \>\text{$\ket{\eur}$ associated with $\account^A_\eur$)}\\
\> \sendmessageleft*[5cm]{\account^B_\eur} \hspace{2cm} \> \\
x \gets \{0,1\}^\secpar \> \>  \\
y \gets h(x) \> \>  \\
\text{Set } m \equiv (\simpleoutputscript_{\underbrace{\account_\eur^B,\ldots,\account_\eur^B}_{b_\eur \text{ times}} },\exchangeoutputverifyscript_y) \> \>  \\
\sigma \gets \qs.\qsign(\ket{\eur},m) \> \>  \\
\text{For every } i\in [b_\eur],  \> \>  \\
\quad  \text{ Set } \pwit_i^\eur\equiv(\account^A_\eur,m,\sigma,i,NULL,\witness^A_\eur)  \> \>  \\
\> \sendmessageright*[5cm]{\account^A_\str,(\pwit_1^\eur,\ldots,\pwit_{b_\eur}^\eur)} \hspace{2cm}  \> \\ 
  \> \> b \gets \balance'(\bpk_\eur,\account^A_\eur,(\pwit_1^\eur,\ldots,\pwit_{b_\eur}^\eur) ) \\
  \> \> \text{Abort if } b\neq b_\eur   \\
  \> \> (\pwit_1^\str,\ldots,\pwit_{b_\str})\gets \simplepay(\bpk_\str,\ket{\str^B}, \ldots \\
  \> \> \quad \account^B_\str,\witness^B_\str,(\overbrace{\account_\str^A,\ldots,\account_\str^A}^{b_\str \text{ times}}  )) \\
\> \sendmessageleft*[5cm]{(\pwit_1^\str,\ldots,\pwit_{b_\str})} \hspace{2cm}  \> \\ 
 b \gets \balance(\bpk_\str,\account^A_\str,(\pwit_1^\eur,\ldots,\pwit_{b_\eur}^\eur) )\> \>  \\
 \text{Abort if } b\neq b_\str  \> \>  \\
 \> \sendmessageright*[5cm]{x} \hspace{2cm}  \> \\ 
 } 
\caption{Secure Exchange Protocol for exchanging $b_\eur$ quantum euros for $b_\str$ quantum sterlings.}
\label{fig:secure_exchange}
\end{figure}

The completeness of the protocol can be easily verified. The security can be analyzed as follows. After the first message from Alice, Alice's money is already locked in Bob's account (or otherwise, Bob would abort), and therefore she cannot run away with Bob's money. After the first message from Alice, Bob cannot runaway with Alice's money since he needs a pre-image of $y$, which is unfeasible due to the one-waynesss property of the one-way function $h$. Note that after Bob's second message, the protocol is aborted by Alice, unless Bob's funds are moved to her. Overall, Bob cannot runaway with Alice's money.  

Another classical approach to reduce the risk is to split the transaction into smaller parts. For example, suppose Alice and Bob agree to exchange her $\eur100$ for his $\str100$.
They could split the transaction to 100 exchanges of $\eur1$ for $\str1$, and, abort if at one point other party cheats. This guarantees that the gain from cheating is at most $\eur1$. 

We can combine the secure exchange protocol with the approach above. Note that here no party has an incentive to cheat, and the worst-case scenario is of $\$1$ getting locked (rather than stolen).

Interestingly, the secure exchange with splitting works with other fungible goods, such as the colored coins and stocks, and without the splitting for non-fungible goods such as the smart property example, which are all discussed in \cref{sec:colored_coins_smart_property_and_tokenized_securities}.
Furthermore, a very similar technique could be used to exchange Alice's quantum money with Bob's bitcoin (or any other similar cryptocurrency). The difference in the protocol are: 
\begin{enumerate}
     \item Instead of calling $\createsimpleaccount$, Alice would generate a Bitcoin secret key and send its public key (also known as a Bitcoin address) to Bob.
     \item Bob would send the payment to Alice using the standard Bitcoin payment protocol instead of  $\simplepay$.
     \item Alice would check that her Bitcoin balance is what she expects using her Bitcoin client, rather than the $\balance$ function. 
 \end{enumerate} 

 Interoperability between different blockchains has been studied extensively; see the recent survey in Ref.~\cite{BVGC22}.
 The example above can be viewed as the first example demonstrating interoperability between quantum money schemes (and quantum payment schemes in particular) and cryptocurrencies. 

\subsection{Backup and multiple signatures} 
\label{sec:backup_and_multiple_signatures}
One of the prominent techniques for internal control, as well as a backup in cryptocurrencies, is \emph{multisignatures}. For example, a Bitcoin address could be associated with $3$ signing keys so that any $2$ of these addresses could spend those Bitcoins. This is useful for backup purposes: the user could save the $3$ different signing keys in $3$ different locations so that losing one of them would not have disastrous consequences. This is useful for internal control purposes as well: in a firm, the accountant, CEO and founder may each create a signing key, so that every $2$ of them may be able to spend the money. This reduces the risk of embezzlement since this way it would require some collusion. For more details, see ~\cite[Section 4.3]{NBF+16}. The goal of this section is to provide such a multi-signature functionality. 

\begin{remark}
A simpler approach to achieving related goals is to use a secret sharing scheme to share a single secret key. This approach is useful for backup purposes but not as useful for internal control, since the distribution and reconstruction phases constitute a single point of failure. Therefore, we do not discuss this approach any further. For more details, see ~\cite[Section 4.3]{NBF+16}.
\end{remark}

In cryptocurrencies, we can construct a scheme in which any $1$-out-of-$2$ signing keys is enough for spending. This cannot work in a quantum payment scheme since it would allow Mallory to spend her money twice: she could send her funds to Alice using her first \qs token and Bob using the second \qs token. In this context, the public ledger in cryptocurrencies provides a significant advantage, which we cannot use.

Nevertheless, we can construct multi-sig accounts for intersecting families, defined as follows:
\begin{definition}[Intersecting family (see, e.g., \cite{Juk11})] A family $\mathcal F$ is \emph{intersecting} if for every $X,Y\in \mathcal F$, $X \cap Y \neq \emptyset$.  We say that the family is \emph{over} the set $S$ if for every $X\in \mathcal F$, $X \subset S$.
\label{def:intersecting_family}
\end{definition}

\begin{example}
    For every $n\in \NN$ and $k> n/2$, the family $\mathcal F=\{X\subset [n]: |X|\geq k \}$ is an intersecting family. For every $n \in \NN$ and $k \leq n/2$, the family $\mathcal F=\{X\subset [n]: |X|\geq k \}$ is not an intersecting family. 
\end{example}

\begin{figure}[thbp]
\noindent$\multisiginterpreter_\intersectingfamily(m_1,\ldots,m_n)$\\
\noindent \textbf{Assumes:} $\intersectingfamily$ is an intersecting family over $[n]$.
\begin{compactenum}
   \item For every $S \in \intersectingfamily$
   \begin{compactenum}
       \item If $m_i=m_j\neq \bot $ for every $i,j\in S$, then Output $m_i$.
   \end{compactenum}
   \item Output $\bot$.
\end{compactenum}

\noindent$\createmultisigaccount (\bsk,\intersectingfamily)$ \hfill // If one-shot signatures are used, $\bpk$ is used instead of $\bsk$. \\
\noindent \textbf{Assumes:} $\intersectingfamily$ is an intersecting family over $[n]$.
\begin{compactenum}
   \item Interpret $\bsk$ as $(\msk,\sk)$.
   \item For $i=1,\ldots, n$:
   \begin{compactenum}
       \item $(\pk_i,\qsk_i) \gets \qs.\qgen(\msk)$. \hfill // If $\qs$ is a one-shot signature scheme, $\msk$ should be replaced with $\mpk$. 
   \end{compactenum}
   \item Set $\ket{\$}\equiv \qsk_1 \tensor \ldots \tensor \qsk_n$ and $\account \equiv (\pk_1,\ldots,\pk_n,\multisiginterpreter_\intersectingfamily) $.
   \item Output $(\ket{\$},\account)$.
\end{compactenum}

\noindent $\multisigpay_\intersectingfamily(\bpk,\account,S,\{\qsk_i\}_{i \in S}, \witness, (\account_1,\ldots,\account_r))$ \\
\noindent  \textbf{Assumes:} $\intersectingfamily$ is an intersecting family over $[n]$, and that for every $i \in S$, $\qsk_i$ is associated with $\account_i$, and that $\balance(\bpk,\account,\witness)=r$. 
\begin{compactenum}
    \item If $S \notin \intersectingfamily$, Output $\bot$.
    \item For every $i\in S$:
    \begin{compactenum}
        \item Set $m_i \equiv (\simpleoutputscript_{\account_1,\ldots,\account_r},\simpleoutputverifyscript,\visiblespace)$.   
        \item $\sigma_i \gets \qs.\qsign(\qsk_i,m_i) $.
    \end{compactenum}
    \item For every $i\in [n]\setminus S$, set $m_i \equiv \visiblespace$ and $\sigma_i\equiv \visiblespace$.
    \item For every $k \in [r]$, set $\pwit_k\equiv (\account,\vec{m}, \vec{\sigma},k,\visiblespace, \witness)$.
\end{compactenum}
\caption{Multi-sig account algorithms.}
\label{fig:multisig_algorithms}
\end{figure}

\cref{fig:multisig_algorithms} contains our construction for a multi-sig account and the accompanying algorithms. Note that the construction is valid for any intersecting family $\mathcal F$. Therefore, we can have a 2-of-3 multi-sig account but cannot have a 1-of-3 multi-sig account. Completeness can be verified easily. The only property that remains to be proved is that $\multisiginterpreter$ is monotone.
\begin{lemma}
    For any intersecting family $\mathcal F$ over $[n]$, the program $\multisiginterpreter_\mathcal F$ is a monotone sig-interpreter (see \cref{def:siginterpreter}).  
\end{lemma}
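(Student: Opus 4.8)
The plan is to first make the function computed by $\multisiginterpreter_{\mathcal F}$ explicit, and then read off monotonicity almost for free. Given an input $(m_1,\ldots,m_n)$, call a set $S\in\mathcal F$ \emph{agreeing} (for this input) if there is a string $\mu\neq\visiblespace$ with $m_j=\mu$ for all $j\in S$; here I read the loop guard ``$m_i=m_j\neq\bot$'' in \cref{fig:multisig_algorithms} as ``$m_i=m_j\neq\visiblespace$'', which is the only sensible reading since the $m_j$ live in $\{0,1\}^*\cup\visiblespace$, and which also guarantees the interpreter never outputs $\visiblespace$, as \cref{def:siginterpreter} demands. With this terminology, $\multisiginterpreter_{\mathcal F}$ outputs the common value $\mu$ of an agreeing set if one exists, and $\bot$ otherwise.

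Before touching monotonicity I would verify that this really is a well-defined function, i.e., that the output does not depend on the order in which the loop ranges over $\mathcal F$. Suppose $S,S'\in\mathcal F$ are both agreeing, with common values $\mu$ and $\mu'$ respectively. Because $\mathcal F$ is an intersecting family there is some $j\in S\cap S'$, and then $\mu=m_j=\mu'$. Hence all agreeing sets share one and the same value, so the output is unambiguous. This is the only step where the intersecting hypothesis is used.

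Now for the monotonicity property of \cref{def:siginterpreter}: fix $\vec m=(m_1,\ldots,m_n)$ and an index $i\in[n]$, and let $\vec m'$ denote the tuple obtained from $\vec m$ by replacing the $i$-th coordinate with $\visiblespace$. Assume $\multisiginterpreter_{\mathcal F}(\vec m')=\mu\neq\bot$; then some $S\in\mathcal F$ is agreeing for $\vec m'$ with common value $\mu\neq\visiblespace$. The $i$-th coordinate of $\vec m'$ equals $\visiblespace$, which differs from $\mu$, so necessarily $i\notin S$. The coordinates indexed by $S$ are therefore the same in $\vec m$ and $\vec m'$, so $S$ is still agreeing for $\vec m$ with value $\mu$; hence $\multisiginterpreter_{\mathcal F}(\vec m)\neq\bot$, and by the well-definedness step its value is exactly $\mu$. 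Since $\vec m'$ is obtained from $\vec m$ by turning coordinate $i$ into $\visiblespace$, this is precisely the equality demanded by \cref{def:siginterpreter}, and the lemma follows.

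I do not anticipate a genuine obstacle: the argument is a couple of lines once the function is described cleanly. The only places that need a little care are the (harmless) re-reading of the loop guard, so that the interpreter's range is $\{0,1\}^*\cup\{\bot\}$ as \cref{def:siginterpreter} requires, and the deduction $i\notin S$, which hinges on the common value of an agreeing set being non-$\visiblespace$ --- this is baked into the definition of ``agreeing''. I would also remark in passing that the description above makes $\multisiginterpreter_{\mathcal F}$ total and deterministic, so it is indeed a legitimate sig-interpreter.
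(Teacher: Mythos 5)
Your proof is correct and rests on the same key observation as the paper's: the intersecting property forces any two agreeing sets to share a coordinate, hence a common value, which is exactly how the paper derives its contradiction. You merely package the argument directly (well-definedness first, then noting $i\notin S$ so the agreeing set survives the substitution) instead of by contradiction, and your remark that the loop guard should read $m_i=m_j\neq\visiblespace$ is a fair and harmless correction of a typo in \cref{fig:multisig_algorithms}.
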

\begin{proof}
Suppose toward the contradiction, that there exists $m_1,\ldots,m_n \in \{0,1\}^* \cup \visiblespace$ and $i$ such that 
\[\multisiginterpreter(m_1,\ldots,m_{i-1},\visiblespace,m_{i+1},\ldots,m_n) = \interpretedmessage \neq \bot.\]
and 
 \[\multisiginterpreter(m_1,\ldots,m_{i-1},m_i,m_{i+1},\ldots,m_n) = \interpretedmessage' \neq \interpretedmessage.\]

 Let $S$ ($S'$) be the set in which the for loop in $\multisiginterpreter(m_1,\ldots,m_{i-1},\visiblespace,m_{i+1},\ldots,m_n)$ (respectively, $\multisiginterpreter(m_1,\ldots,m_{i-1},m_i,m_{i+1},\ldots,m_n)$) terminated with a non $\bot$ value. Note that $S$ is well defined, since we assume that
 
 $\multisiginterpreter(m_1,\ldots,m_{i-1},\visiblespace,m_{i+1},\ldots,m_n) = \interpretedmessage \neq \bot$. That also implies that $i \notin S$. 
 Consequently, $\multisiginterpreter(m_1, \ldots,\allowbreak m_{i-1}, m_i, m_{i+1}, \ldots, m_n)$ cannot terminate with a $\bot$ value, since $S$ would already cause $\multisiginterpreter$ to terminate (since $i \notin S$), and therefore $S'$ is also well defined.

 Since $\mathcal F$ is an intersecting family, and $S,S' \subseteq \mathcal F$, we know $S\cap S' \neq \emptyset$. Let $j\in [n]$ such that $j \in S\cap S'$. 
 Since $j\in S$, by construction, 
 \begin{equation}
  \interpretedmessage =\multisiginterpreter(m_1,\ldots,m_{i-1},\visiblespace,m_{i+1},\ldots,m_n)=m_j.
 \end{equation} 
 Similarly, since $j\in S'$, 
 \begin{equation}
    \interpretedmessage = \multisiginterpreter(m_1,\ldots,m_{i-1},m_i,m_{i+1},\ldots,m_n)=m_j. 
 \end{equation}
We reach the desired contradiction: $\interpretedmessage=\interpretedmessage'$.
\end{proof} 

\begin{remark} Some non-custodial Bitcoin wallets\footnote{For example, Electrum, a Bitcoin wallet, supports 2FA via a service offered by TrustedCoin.com, see \url{https://electrum.readthedocs.io/en/latest/2fa.html} and \url{https://trustedcoin.com/\#/faq}.} provide a mechanism for second-factor authentication (2FA), to provide better key-management. 2FA is achieved by using a 2-of-3 multi-sig Bitcoin address, in which the server holds one secret key, the user stores another secret key on his device, and a third secret key is stored by the user, say, on a piece of paper, and is kept for emergencies. One of the challenges with key-management is that the user's device, which stores one of the secret keys, might be hacked at some point. When using a 2FA service, the server signs a message remotely, only upon some predefined sets of criterion. The server often sends the transaction details through another channel (e.g. a text-message to the user's mobile phone), so the user could see the transaction details, and the server asks the user to confirm the transaction. In this example, not only the device containing the key should be hacked, but also the user's mobile phone. Other criterion, such as spending limits (for e.g., no more than \$1000 a day) could be enforced by the server. Most importantly, such a service is non-custodial: the user is not harmed even if the server is off-line, or even malicious, as long as the user has his recovery key. We can use a similar design using the multi-sig account presented above: we would use a 2-out-of-3 multi-sig account, one quantum signing key would be created by the server, and the two others kept by the user. 
\end{remark} 


\subsection{Restricted accounts} 
\label{sec:restricted_accounts}

A restricted account allow the account creator to restrict the destination accounts to which payments could be made. Here is one motivating example. A parent could create an account, with destinations restricted to those colleges' admission accounts which we call the set of permitted accounts, and send the quantum state to the child. The parent could deposit funds to this restricted account as usual. The child could spend the money to the permitted accounts, without the any involvement of the parent. 
We also add another feature that allows the parent to approve payments to accounts that is not one of the permitted accounts: the parent would need to digitally sign a message approving the payment to that account. 

Such a scheme is presented in \cref{fig:restricted_algorithms}. During the $\createrestrictedaccount$ algorithm, the parent specifies the set of permitted accounts $\{\permittedaccount_i\}_{i=1}^n$. 
In that process, the parent also generates a digital signature signing key $\sk$ and verification key $\vk_{parent}$. We define a $\restrictedsiginterpreter$ which gets a single message $m$ consisting of a list of accounts $(\account_1,\ldots,\account_r)$ and each account $\account_{j}$ is directed with a payment. 
We also use a script called $\restrictedoutputverifyscript$, which always approves payments to any of the $\permittedaccount$s, but requires the parent's signature to any other account. In other words, without the parent's approval, payments to addresses which are not permitted are blocked by the $\restrictedoutputverifyscript$.
\begin{figure}[thbp]
\noindent$\restrictedsiginterpreter_{\vk_{parent},\{\permittedaccount_i\}_{i=1}^n}(m)$
\begin{compactenum}
  \item Interpret $m$ as $(\account_1,\ldots,\account_r)$.
  \item For every $j \in [r]$:
  \begin{compactenum}
      \item If $\account_j \in \{\permittedaccount_i\}_{i=1}^n$, set $b_j = 1$, otherwise, set $b_j = 0$.
  \end{compactenum}
  \item Output $( \simpleoutputscript_{\account_1,\ldots,\account_r},\restrictedoutputverifyscript_{\vk_{parent},(b_1,\ldots,b_r)} ) $.
\end{compactenum}

\noindent$\restrictedoutputverifyscript_{\vk_{parent},(b_1,\ldots,b_r)}(t,\vswitness)$
\begin{compactenum}
  \item If $b_t=1$ or $\ds.\verify_{\vk_{parent}}(t,\vswitness))$, then Output $1$.
  \item Else, Output $0$.
\end{compactenum}

\noindent$\createrestrictedaccount (\bsk,\{\permittedaccount_i\}_{i=1}^n )$\hfill // If one-shot signatures are used, $\bpk$ is used instead of $\bsk$. \\
\noindent \textbf{Assumes:} Digital Signature $\ds$, and $\qs$ is an uncloneable signature scheme.
\begin{compactenum}
    \item Interpret $\bsk$ as $(\msk,\sk)$.
    \item $(\vk_{parent},\sk)\gets \ds.\keygen(\secparam)$.
    \item $(\pk, \qsk) \gets \qs.\qgen(\msk)$.\hfill // If $\qs$ is a one-shot signature scheme, $\msk$ should be replaced with $\mpk$.
    \item $\ket{\$}\equiv \qsk,\ \account \equiv (\pk,\restrictedsiginterpreter_{\vk_{parent}, \{\permittedaccount_i\}_{i=1}^n} )$.
    \item Output $ (\ket{\$},\account)$.
\end{compactenum}

\caption{Restricted account algorithms.}
\label{fig:restricted_algorithms}
\end{figure}

\subsection{Proof of reserves} 
\label{sec:proof_of_reserves}
We will focus on a simple yet non-trivial use case. A more involved setting, where an exchange proves its solvency, is discussed in ~\cite[Section 4.4]{NBF+16}, as well as privacy aspects in Ref.~\cite{DV19} and references therein.

Our (simpler) goal is the following: Alice would like to convince Bob that she has \$1000. This could be implemented using the original construction of quantum money from one-shot signatures. The protocol is as follows:
\begin{enumerate}
    \item Alice generates $(\qsk_A,\pk_A) \gets \qgen(\crs) $, and sends $\pk_A$ to Bob.
    \item Bob generates $(\qsk_B,\pk_B) \gets \qgen(\crs) $, and then $\sigma_B \gets \qsign(\qsk_A,\pk_A) $, and send $\pk_B$ and $\sigma_B) $ to Alice.
    \item Alice checks that $\verify(\crs,\pk_B,\pk_A,\sigma_B)=1$, and aborts otherwise. She then computes $\sigma_\$ \gets \qsign(\qsk_\$,\pk_B) $. Alice  sends $\pk_\$, \sigma_\$ $, as well as the chain of signatures that certify that $\qsk_\$ $ was a valid money state (see  \cpageref{par:quantum_money_from_OSS} for more details regarding the construction of quantum money from one-shot signatures).
    \item Bob verifies that $\verify(\crs,\pk_B,\pk_\$)=1$ (note that here $\pk_B$ has the role of the message which is verified) and that the chain of signatures that certify that $\qsk_\$ $ was a valid quantum money state. 
\end{enumerate}

The approach above tunnels Alice's money through Bob's account while ensuring that it won't ``get stuck''. The main drawback of this approach is that it requires that Bob would have a quantum computer. We can use a different solution using our framework that does not have that disadvantage: Bob does not need to have any quantum computing resources in this solution.

Bob sends some random challenge text $r_B \in \{0,1\}^\secpar$ to Alice. Alice would create a fresh account $\account'$, transfer her funds to the fresh account, and append $r_B$ as the auxiliary information, $\aux$, to her (signed) message. 
This is achieved by replacing \cref{it:simplepay_m} in $\simplepay$ with: Set $m \equiv (\simpleoutputscript_{\account',\ldots,\account'},\simpleoutputverifyscript,r_B)$.
Alice then sends the new account and proof that its balance is indeed \$100. Note that the last part of the witness would contain a signature of the challenge text as $\aux$.  Bob can now check that, indeed, the balance of the new account is \$100 and that $\aux$ is indeed $r_B$.


\subsection{Colored coins, smart property and tokenized securities} 
\label{sec:colored_coins_smart_property_and_tokenized_securities}
We usually think of bitcoins as fungible, meaning each bitcoin is worth just like the other. Nevertheless, Bitcoin transactions can be traced, posing a problem from a privacy perspective. Yet, this traceability is the desired property in the context of \emph{colored coins}~\cite{Ros12,NBF+16}. A colored coin can be seen as a single satoshi (which is worth $10^-8$ bitcoin, hence, has almost no value on its own) which is associated with something else, and become colored. Where could a colored coin be useful? Next, we discuss two such examples. 

A corporation could issue 100 colored coins and sell each of these colored coins in the open market as is done today in an Initial Public Offering (IPO). This colored coin provides the main functions that standard stocks offer: a) The owner of that colored coin could then trade it with others without the need for trusted exchange. b) The corporation could pay dividends in bitcoin to the current holder of the colored coin directly, without trusted intermediaries. c) The colored coin provides shareholders with voting rights. 

Another example is smart property: a colored coin may be associated with a physical item, such as a car. The car's lock could be programmed in such a way that the car opens only with a challenge-response interaction with the \emph{current} owner of the colored coin associated with that car. This way, car ownership could be done by using Bitcoin's blockchain, replacing one of the functions that (trusted) agencies such as the DMV provide. 

We can implement both of these use-cases using our quantum payment scheme. Our design does not suffer from the main draw-backs of blockchain based solutions:
\begin{enumerate}
    \item A colored coin transaction, like any other Bitcoin transaction, takes 10 minutes on average to get a single confirmation, whereas our solution has no such latency.
    \item Transacting with colored coins incurs a transaction fee, just like any other Bitcoin transaction. Our approach does not have any pecuniary cost attached.
    \item Our solution has no cap on the throughput, whereas the Bitcoin network supports only a few transactions per second, globally.
    \item Creating a stock or a smart property token using our system has no associated pecuniary cost (not even one satoshi).
\end{enumerate}
 
We first explain how to issue and use a colored coin using the quantum payment scheme. The construction is presented in full detail in \cref{fig:colored_coin_scheme,fig:colored_coin_balance}. The issuer generates the signing and verification keys of the digital signature scheme and publicly announces the verification key. The issuance of a colored coin is done by running $\createsimpleaccount$, followed by executing $\coloraccount$, where $\coloraccount$ is the same as $\topup$, except the company uses its own secret key to sign the account, instead of the bank's secret key (which, the issuer has no access to). 

Unlike an account that could hold an unbounded balance, a colored account is a boolean property: an account can be either colored or not.  We emphasize that the same account should not be colored twice. This makes the underlying design more straightforward and can be changed with some added complexity.

Payments are made similarly to the way they are done with simple accounts, with a slight change, which allows a mechanism to pay \emph{dividends}.
Recall that in general, $\outputscript(i)$ determines where the $i$'th dollar would be transferred. We make use of the fact that $\outputscript(0)$ had no meaning until now in our design and used the output-script $\coloredoutputscript(0)$ as the address, which determines where the \emph{colored coin} would be transferred. 
The $\coloredpay$ algorithm forwards all the payments made to the current account beyond $n$ to the next colored account ($\caccount$) as well, using the same approach as in \emph{permanent accounts} (see \cref{sec:permanent_accounts}). This provides a simple way to pay \emph{dividends} in the context of stocks: all dividends above $n$ will be paid to the \emph{next} holder of the colored coin, that is, the owner of $\caccount$. 

To pay dividends, the issuer would pay to the \emph{original} account to which the algorithm $\coloraccount$ has been applied. The witness of all such payments should be made publicly available so that the current and future owners can reclaim these funds. 

To verify a colored coin, the $\verifycolored$ algorithm should be used. The $\verifycolored$ is very similar to $\balance$, except for minor changes. Recall that a colored coin is a boolean property, and therefore this algorithm returns either $0$ or $1$. Additionally, it uses $\cvk$ (the issuer's verification key) to verify the signature which was produced during $\coloraccount$ (see \cref{it:colored-valid_message_is_signed} in \cref{fig:colored_coin_balance}). One of the roles of the verification is to make sure that all future payments would be transferred to the account. Suppose that until now, $n-1$ dollars were paid as dividends. This would be known to the receiver of the colored coin since all the witnesses of such payments are announced publicly. In that case, the verification would be run with the parameter $\dividends$ set to $n$. The algorithms makes sure that all future payments would be forwarded to account. This is achieved by making sure that the previous payments that were made used $\simpleoutputverifyscript$ (see \cref{it:colored-main_checks_outputverifyscript}). The fact that the $\outputverifyscript$ is $\simpleoutputverifyscript$ , which always outputs true (see \cref{fig:prudent_contract_scheme}), means that there are no restrictions that might lock the transfer (see \cref{it:main_checks_outputverifyscript} in \cref{fig:prudent_contract_scheme_balance}). 
The algorithm also makes sure that the $\outputscript$ is $\coloredoutputscript$ and that all payments above $\dividends$ are indeed forwarded to $\account$ (see \cref{it:colored_checks_ouput_script}). Note that this is also done recursively (see \cref{it:verify_color_recursive}). 

A colored coin that represents a share in a corporation could also provide \emph{voting rights}. In this case, the corporation would publicly announce a proposal. The proposal should contain a long random string, called the proposal number\footnote{The proposal number is long (rather than a sequential number that increases by 1 for every vote) so that a shareholder cannot guess that and cast a vote in advance, and then selling the share to  others.}. 
Shareholders can use their colored coin to vote by running $\coloredpay$ and moving their colored coin to a new account, and attaching $\aux$ which contains their vote:  the message to be signed in \cref{it:colored_pay_m} in \cref{fig:colored_coin_scheme} would be appended with $\aux$, where $\aux = (\text{proposal number, vote})$ (e.g., the vote might be "YES" when a proposal for a merger is made, or "Alice" when voting on a new board of directors). The payment witness, which contains $\aux$ as well as a proof that the account is indeed valid, would be sent to the corporation. The corporation would check the validity of the witness, and cast a vote according to $\aux$. In case the same share is used to vote twice, only the one which appears earlier in the chain of witnesses would be considered as valid.
 
A colored coin could also be used as smart property. Consider the example given in \cref{sec:introduction}: a car manufacturer could issue and associate a colored coin with every new car. The car could be programmed so that it issues a random challenge number and open its doors only upon receiving a signed message of that challenge text, in the same manner that the corporate verified votes in the previous example. 
Note that this scheme is only secure under the assumption that the attacker cannot undermine the physical locks and the software used in the car.
One of the main advantages of this approach is that car ownership could be proved without a centralized registry. 

\begin{figure}[thbp]
\noindent\textbf{Assumes:} Digital Signature $\ds$, Uncloneable signature $\qs$.\\
\noindent$\coloredcoinkeygen(\secparam)$
\begin{compactenum}
\item $(\csk,\cvk) \gets \ds.\keygen(\secparam)$.
\item Output $(\csk,\cvk)$.
\end{compactenum}

\noindent $\coloraccount(\csk, \account)$ \hfill // Should be compared with $\topup$ in \cref{fig:prudent_contract_scheme}. \\
Assumes that $\account$ was generated using $\createsimpleaccount$.
\begin{compactenum}
\item Let $\rand\gets_R\{0,1\}^\lambda$.
\item Set $m\equiv (\account,\rand)$.
\item $\sigma \gets$ \sout{$\ds.\sign_{\sk}(m)$} \uwave{$\ds.\sign_{\csk}(m)$}.
\item Output $\pwit \equiv (\rand,\sigma) $.
\end{compactenum}

\noindent $\coloredoutputscript_{(\account_1,\ldots,\account_n),\caccount}(i)$
\begin{compactenum}
\item If $1 \leq i \leq n$
    \begin{compactenum}
    \item Output $\account_i$.
    \end{compactenum}
    \item Else if $i=0$ or $i>n$ \hfill // $\coloredoutputscript(0)$ is used to specify to where the colored coin is transferred. The $\caccount$ is also used as a forwarding address for $i>n$, as in $\permanentoutputscript$. 
    \begin{compactenum}
        \item Output $\caccount$.
    \end{compactenum}   
\end{compactenum}

 \noindent $\coloredpay(\bpk, \ket{\$},\account, \witness,(\account_1,\ldots,\account_r),\caccount )$ \\
Assumes that $\balance(\bpk,\account,\witness)=r$.
\begin{compactenum}
    \item \label{it:colored_pay_m} Set $m \equiv (\coloredoutputscript_{(\account_1,\ldots,\account_r),\caccount},\simpleoutputverifyscript)$.
    \item $\sigma \gets \qs.\qsign(\ket{\$} ,m)$.
    \item For every $i\in \{0,\ldots,r\}$:
    \begin{compactenum}
        \item 
        Set $\pwit_i \equiv (\account, m, \sigma,i,NULL, \witness)$.
    \end{compactenum}
    \item Output $(\pwit_0,\ldots,\pwit_r)$.
\end{compactenum}  
\caption{Algorithms for the colored coin.}
\label{fig:colored_coin_scheme}
\end{figure}

\begin{figure}[thbp]
\noindent $\verifycolored(\bpk,\cvk, \account,\witness,$\uwave{$\dividends$}$)$ \hfill // Should be compared with $\balance$. 
\begin{compactenum}
    \item Interpret $\bpk$ as $(\mpk,\vk)$.
    \item Interpret $\witness$ as $(\witness_1,\ldots, \witness_v)$.
    \item \uwave{If $v > 1$, Output $\bot$.} \hfill // An account can only have 1 colored coin. 
    \item For every $i\in\{1,\ldots,v\}$: \hfill // The for loop is degenerate since $v=1$. 
    \begin{compactenum}
        \item If $\witness_i.size=2$: \hfill // A ``direct'' color witness  \label{it:colored-full-top-up-witness}
        \begin{compactenum}
            \item Interpret $\witness_i$ as $(\rand_i,\sigma_i)$. Set $m_i\equiv (\account,\rand_i)$.
            \item If \sout{$\ds.\verify_{\vk}(m_i,\sigma_i)=0$} \uwave{$\ds.\verify_{\cvk}(m_i,\sigma_i)=0$}, output $\bot$. \label{it:colored-full-top-up-contains-ds}
            \item \sout{If $m_i=m_j$ for some top-up witness $j<i$, output $\bot$.} \hfill // Redundant, since $v=1$. 
        \end{compactenum}
        \item Else:
        \begin{compactenum}
            \item Interpret $\witness_i$ as $(\account_i, \vec m_i,t_i, \vec\sigma_i,\vswitness_i, \witness'_i)$.
            \item Interpret $\account_i$ as $(\pk_{i,1},\ldots,\pk_{i,n_i},\siginterpreter_i)$. 
            \item \label{it:colored-definition_of_vec_m} Interpret $\vec m_i$ as $m_{i,1},\ldots,m_{i,n_i}$ where $m_{i,j}\in \{0,1\}^*\cup \visiblespace$ and $\vec \sigma_i$ as $\sigma_{i,1},\ldots,\sigma_{i,n_i}$ where $\sigma_{i,j}\in \{0,1\}^*\cup \visiblespace$.  
            \item Output $0$ unless the all the following tests succeed: 
            \begin{compactenum}
                \item \label{it:colored-siginterpreter_is_monotone_test} $\siginterpreter$ is monotone (\cref{def:siginterpreter}). \hfill // See \cref{rem:monotonicity_could_be_hard_to_compute}. 
                \item \label{it:colored-valid_message_is_signed} For every $j\in \{1,\ldots,n_i\}$ such that $m_{i,j} \neq \visiblespace$, $\qs.\verify(\mpk,\pk_{i,j},m_{i,j},\sigma_{i,j})=1$.
                \item \label{it:colored-interpreated_message_valid} Set $\interpretedmessage_i\gets \siginterpreter(m_{i,1},\ldots,m_{i,n_i})$. Test that $\interpretedmessage_i \neq \bot$. 
                \item Interpret $\interpretedmessage_i$ as $(\outputscript_i,\outputverifyscript_i,\aux_i)$. 
                \item \label{it:colored_checks_ouput_script}\sout{$\outputscript_i(t_i)= \account$} \uwave{$\outputscript_i$ has the form $\coloredoutputscript_{(\account_1,\ldots,\account_n),\caccount}$ and $\caccount=\account$ and $n \leq \dividends $. } \hfill // Makes sure that the colored coin is transferred to $\account$, and that all future transfers beyond $\dividends$ would be forwarded to it as well. 
                
                \item \label{it:colored-main_checks_outputverifyscript}\sout{$\outputverifyscript_i(t_i,\vswitness_i)$} \uwave{$\outputverifyscript_i$ has the form $\simpleoutputverifyscript$.} \hfill // Makes sure that future dividends  would not get locked. 
                \item \sout{$(\account_i,t_i)\neq (\account_j,t_j)$ for all $j < i$.}
                \item \label{it:verify_color_recursive} \sout{$\balance(\bpk,\account_i,\witness'_i) \geq t_i$}. \\
                \uwave{$\verifycolored(\bpk,\cvk,\account_i,\witness'_i,\dividends)$=1}. 
            \end{compactenum}
        \end{compactenum}
    \end{compactenum}
    \item \sout{Output $v$.} \uwave{Output $1$.}
\end{compactenum}
\caption{The balance algorithm for a colored coin.}
\label{fig:colored_coin_balance}
\end{figure}  


\section{A transition from Bitcoin mining}
\label{sec:transition_from_bitcoin_minig}

This section outlines a road map for transitioning away from Bitcoin mining based on quantum payment schemes. 
A less drastic proposal was given in \cite{CS20}. The underlying approach is similar: it allows users who have bitcoins to effectively burn their bitcoins and open a quantum account with the same balance instead of using our main quantum payment scheme. However, the construction in ~\cite{CS20} did not provide any of the prudent contracts the current scheme supports and, therefore, provided a mechanism to switch back to bitcoin. Even though the exact mechanism that allows users to switch their quantum account back to bitcoin could be used with the payment scheme proposal, we believe it is unnecessary since the prudent contracts provide almost all the functionality used in Bitcoin. 
The main advantage of the current approach is that mining can be eliminated. 

The rest of this section is organized as follows.
We list the motivation for such a transition in \cref{sec:bitcoin_motivation}.
The implementation details for this upgrade are given in \cref{sec:imlementing_bitcoin_upgrae}.
Our approach has some drawbacks, which we discuss in \cref{sec:drawbacks}.

\subsection{Motivation for transitioning away from Bitcoin mining}
\label{sec:bitcoin_motivation}
Bitcoin~\cite{Nak08} relies on a consensus mechanism that is based on proof-of-work. This approach has various drawbacks, including:
\begin{enumerate}
    \item It is energetically inefficient: as of March 2022, according to the \href{https://ccaf.io/cbeci/index}{Cambridge Bitcoin Electricity Consumption Index}, Bitcoin mining consumes \%0.6 of the world's electricity.
    \item The network assumes an honest majority of miners. A miner (or coalition of miners) with the majority of the mining power can perform the \%51 attack, which is known to have devastating consequences (see, e.g., ~\cite[Section 2.5]{NBF+16}).
    \item \label{it:througput} Transactions throughput is limited to 7 transactions per second.
    \item \label{it:latency} A transaction takes 10 minutes on average to confirm. 
    \item \label{it:transaction_fees} A transaction fee is required to make a payment. The transaction fee is volatile: in 2021, the average fees per transaction averaged across the day had the following statistics: average $\$9.9$, standard deviation $\$10.1$, minimum $\$1.5$, maximum $\$62.7$. 
    Furthermore, it is hard to predict the minimal fee needed to get included in a block, which causes inconveniences for the senders, receivers, and wallet developers.
    \item A user must be online to send or receive a payment. 
    \item Running a full node, which is needed to verify transactions with the best guarantees, is computationally costly: it requires a lot of disk space and decent internet bandwidth, memory, and CPU. In particular, the disk space requirements grow linearly with time. 
    \item The mining mechanism is not incentive compatible for classical miners and is known to be prone to various attacks, such as selfish mining~\cite{EG14}. This point is arguably weaker than the others
    since other works have almost closed the gap in that regard, see~\cite{PS17,ADE+22}.
    \item Quantum mining is a topic that received very little attention. It is known that the current tie-breaking rule must be modified~\cite{Sat20}. The equilibrium strategy for quantum mining is known only in very few cases~\cite{LRS19}. Furthermore, for the few cases for which the equilibrium is known, a miner needs to know the other miners' hashing power to decide on their strategy, and it is not clear how to reveal this information. This is in sharp contrast with the current honest mining strategy, which is simpler and independent of the other miners' hashing power.  
    On the positive side, it is known that the so-called "Bitcoin backbone" satisfies some important guarantees in the presence of a single quantum mining adversary, see Ref.~\cite{CGK+20}. Of course, the honest majority of the hashing power assumption is adjusted to consider Grover's speed-up of the (quantum) adversary. 
\end{enumerate}
To summarize the points above, mining has severe drawbacks in classical settings, and the implications of quantum mining could potentially make it even worse. 

The Bitcoin lightning network~\cite{PD15} is a second layer network that aims to improve \cref{it:througput,it:latency,it:transaction_fees}. However, it is only an \emph{improvement} rather than a solution. This is because a user wishing to make a lightning network payment has first to open a lightning channel, which is an on-chain transaction that suffers from the same limitations as mentioned in \cref{it:througput,it:latency,it:transaction_fees}. The same holds for closing a channel. Furthermore, the security of the lightning network has also been questioned; see ~\cite{RMT19,PRH+20,MZ21,TYM+21} and references therein.

The main alternative to Proof of Work (PoW) is called Proof-of-Stake (PoS); for more details, see, e.g., ~\cite{KRD+17,GHM+17}. The main advantage is the energy efficiency and the irrelevance of quantum mining in this setting; the other drawbacks are still present, though various trade-offs can be made (for example, improved throughput and latency, at the price of a higher computational cost). 

Coladangelo and Sattath~\cite{CS20} presented a hybrid approach in which a proof-of-work-based cryptocurrency could use a quantum lightning with bolt-to-certificate as a way to upgrade Bitcoin. The way it is done is so that users can switch back and forth between the current mode in which bitcoins are stored via a quantum money-based solution. The primary motivation for going back to Bitcoin was that some forms of smart contracts were impossible to achieve in that framework. However, this approach requires maintaining a consensus mechanism and, therefore, could not be viewed as an alternative to PoW or PoS. 

In the rest of this section, we present and discuss a fully quantum alternative. Since our construction provides the vast majority of the types of transactions that are used in Bitcoin, we think it could \emph{abolish} the need to maintain mining altogether. The approach, which will be presented shortly, does not suffer from \emph{any} of the drawbacks in the list above.

\subsection{Implementing the upgrade}
\label{sec:imlementing_bitcoin_upgrae}
We now explain how such an approach could be implemented. The most crucial ingredient needed would be a one-shot signature scheme, for which we do not have candidate constructions in the plain model; see \cref{sec:drawbacks} for further discussion. 
\paragraph{Generating a \crs.}
Recall that a one-shot signature needs a common random string (\crs). Bonneau, Clark, and Goldfeder~\cite{BCG15} describe how to generate that randomness using the Bitcoin mining process. This randomness could be generated from early blocks, which can rule out specific attack vectors, such as a coalition of malicious miners trying to manipulate the \crs.

\paragraph{Replacement of the bank's digital signature.}
In our quantum payment scheme based on one-shot signatures, we assume that a trusted party, referred to as the bank, holds the digital signature signing key $\sk$ and everyone else has the verification key $\vk$. In our setting, the blockchain replaces the role of the bank. 

\paragraph{Upgrading bitcoins to a quantum account.}
Users would have ample time (say, five years) to upgrade their bitcoin to the quantum payment scheme. The user would run \createsimpleaccount, using the \crs. The \topup is done by \emph{burning} bitcoins and creating an account with the appropriate balance. A new Bitcoin opcode would be introduced, called \optopup. The \optopup would effectively burn the bitcoins and associate the same value to the account specified by the user. Of course, like any other transaction that spends those bitcoins, the transaction is signed by the user's signing key and, therefore, cannot be forged.

\paragraph{Sending a payment.} The account owner could spend the funds as usual: the receiver would create an account (if she doesn't have one already) using $\createsimpleaccount$ and send the $\account$ to the sender. The sender would use $\simplepay$ and send the payment witness $\pwit$ back to the sender.

\paragraph{Receiving a payment before the terminating block.} In the next paragraph, we will discuss the terminating block. To verify a payment before the terminating block, the receiver must access the blockchain. The receiver would use the $\balance$ algorithm, except the verification of a $\topup$ witness would be different: in this case, the verification is done by ensuring that the \optopup transaction is valid---i.e., it is spent from a Bitcoin address with a sufficient balance, this transaction is part of the longest chain, and has received many confirmations---as required from any other Bitcoin transaction. In this transition period, verification requires that the receiver has access to the blockchain and therefore needs to be online; this is not necessary after the terminating block has been mined.

\paragraph{Terminating block.}
As part of the upgrade, a block height for a terminating block\footnote{The first Bitcoin block is often called the genesis block. The terminating block would be the last block in the blockchain.} would be coordinated by the same mechanism of Bitcoin soft-forks. This last block would contain the root of a Merkle tree containing all the accounts for which \optopup has occurred, ordered by the original appearance on the Bitcoin blockchain. The terminating block would be considered valid only if this Merkle root is valid; note that the users can verify that the Merkle root is correct and reject it otherwise.

Special care needs to be made to have a consensus regarding which block is the terminating block. It is desired that a vast majority of the network agree upon this block, with high probability. A fork is not preferred because different nodes may disagree regarding which block is the terminating block. One way to achieve that is through a "difficulty bomb"---a technique that was used in Ethereum for different reasons. In a difficulty bomb, the difficulty gradually increases. An increase in difficulty by a multiplicative factor $f$ would increase the time between blocks by that same factor and decrease the probability that honest miners would create a fork by the inverse factor. Since this upgrade is of theoretical nature at this point, we leave the problem of the optimal way to agree on the terminating block for future research.  

\paragraph{Receiving a payment after the terminating block.}
The terminating block contains a Merkle root of all the \optopup transactions. This fact can be used to simplify the $\balance$ algorithm compared to a payment done before the terminating block. A top-up witness can contain a Merkle path from the relevant \optopup transaction to the Merkle root. The main advantage of this approach is that it means that the transaction is locally verifiable: only the sender and the receiver are involved in a transaction. 

\subsection{Drawbacks}
\label{sec:drawbacks}
The transitioning away from mining discussed in this section has the following drawbacks. Primarily for reasons related to hardware (see the paragraph below), this should be viewed as a long-term solution that would take many years to implement.

\paragraph{Quantum hardware.}
The suggested solution would require users to have access to quantum hardware to generate, store and measure the quantum states in the algorithms. No such hardware is commercially available today and would probably take many years to be realized.

A user may use a quantum service provider to store their quantum money. The user could take the role of the parent in the restricted account, as presented in  (see \cref{sec:restricted_accounts}) so that the service provider would not be able to steal the user's funds.

\paragraph{No construction of one-shot signatures in the standard model.} Our approach for the transition from Bitcoin mining crucially relies on one-shot signatures. 
As was mentioned in \cref{sec:introduction}, the construction in Ref.~\cite{AGKZ20} is relative to a classical oracle, whereas the recent construction in Ref.~\cite{DS22} is based on a non-collapsing hash function, which is not known to exist. A prerequisite for this transition is a construction of $\oss$ in the plain or random oracle models. 
\paragraph{Restricted forms of multi-signatures}
As discussed in \cref{sec:backup_and_multiple_signatures}, losing access to funds and embezzlement are serious risks that are partially mitigated using multi-signatures. The quantum payment scheme allows only a restricted set of multi-signatures. For example, our solution allows 2-out-of-3 multi-signatures but not 1-out-of-2 multi-signatures. Note that this approach can also reduce the risk of lost funds, or even extortion attempts, by a quantum service provider. For example, a user without quantum hardware could use a 2-out-of-3 multi-signatures account with three quantum service providers. As long as no more than one of the service providers fails (recall that the service provider cannot steal the user's funds, though it may try to extort the user), the user's funds are safe. 

\paragraph{Incompatibility with existing applications.}
A few protocols and services would cease to work if such a transition occurred.
These would have to use a different solution, such as a transition to a network dedicated to smart contracts (e.g., Ethereum): our solution does not provide an alternative to a system such as Ethereum, where various contracts require a consensus mechanism are used. 
We argue that the benefits---most notably the energetic aspects---outweigh the disadvantages, especially since these protocols and services do not constitute a large part of the Bitcoin economy and could easily migrate to another blockchain. We now list these main protocols and services that we are aware of:
\begin{enumerate}
    \item Time-stamping. Here, a service provider batches multiple documents and posts the Merkle root of these documents at regular intervals on the blockchain. For details, see~\cite[Section 9.1]{NBF+16}. The main service provider is \url{https://opentimestamps.org/} where details regarding how this service works can be found at \url{https://petertodd.org/2016/opentimestamps-announcement#how-opentimestamps-works}.
    \item \label{it:atomic-swaps} Cross-chain atomic swaps. Recall that in the secure exchange that was mentioned in \cref{sec:secure_exchange}, a malicious party can lock the funds of an honest user. A cross-chain atomic swap provides a better guarantee (see, e.g.,~\cite[Section 10.5]{NBF+16}): a malicious party cannot even lock the funds of the honest party for more than a few blocks. Cross-chain atomic swaps between various pairs of cryptocurrencies have been implemented. For example, as of March 2022, the software in the GitHub repository \url{https://github.com/decred/atomicswap} declares that it supports cross-chain atomic swaps between 10 different cryptocurrencies. 
    \item Fair online lotteries. For details, see ~\cite[Section 9.3]{NBF+16}.
    \item Randomness beacon. Bitcoin can be used as a randomness beacon, though powerful miners can manipulate the randomness~\cite[Section 9.4]{NBF+16}.
\end{enumerate}
\paragraph{Privacy and anonymity aspects.} Standard Bitcoin transactions do not guarantee privacy or anonymity for the users; see, e.g.,~\cite{RS13,RS14,MPJ+16}. There are some countermeasures to provide better anonymity that can be used in Bitcoin. Also, there are various other techniques to improve privacy and anonymity that are used in other cryptocurrencies and incompatible with Bitcoin---see ~\cite{FHZ+19} for a survey. 

On the one hand, with our approach, there is no ledger that allows everyone to see all the past transactions easily. On the other hand, a payment from an account reveals its history. 

While some approaches, such as CoinJoin---a decentralized ``mixer'', see~\cite{Max13}---seem to be incompatible with our techniques, it appears that zero-knowledge techniques could be used to improve the privacy and anonymity of the users; see also \cref{it:untraceability} in \cref{sec:discussion}.

\section{Discussion}
\label{sec:discussion}
Our construction has room for improvements, though implementing these would probably increase its complexity:
\begin{enumerate}
    \item Like cash, and many cryptocurrencies such as Bitcoin, our design is not infinitely divisible. 
    \item  We do not attempt to support more customizable key-management solutions, such as \emph{covenants}~\cite{MES16}.
    \item The witness sizes are not optimized and could become quite large. 
    \item \label{it:untraceability} We ignore aspects related to untraceability, which has been studied extensively in related contexts such as quantum money~\cite{MS10,JLS18,BS20}, and electronic cash~\cite{Cha85,CFN88}. Some of the generic approaches, such as non-interactive zero-knowledge proofs ~\cite{BFM88} and its variants, such as Zero-Knowledge Succinct Non-interactive ARgument of Knowledge (ZK-SNARK)~\cite{BCTV14}, which were proved useful in the context of cryptocurrencies (see, e.g.,~\cite{BCG+14}) seem to have the largest potential to be useful in our context as well.
\end{enumerate}
\subsection*{Acknowledgments}

\ifnum \anon=0
We wish to thank Zvika Brakerski and Isaiah Hull for valuable discussions, and Amit Behera, for his comments.
This work was supported by the Israel Science Foundation (ISF) grant No. 682/18 and 2137/19 and
by the Cyber Security Research Center at Ben-Gurion University.
\fi
\ifnum\sigconf=1
    \bibliographystyle{ACM-Reference-Format}
\else

    \bibliographystyle{alphaabbrurldoieprint}

\fi
{\footnotesize\bibliography{main}}
\appendix
\ifnum \shownomenclature=1
\printnomenclature[1in]
\fi

\section{Tokenized and semi-quantum tokenized signatures unforgeability}
\label{sec:tokenized_signatures_unforgeability}

The original definition of tokenized signatures~\cite{BS16,CLLZ21} does not associate a public key with the quantum signing key. The deletions of the public keys, which are the only changes from the syntax and correctness in \cref{def:tss}, are marked as \sout{$\pk$} in the following definition. 
 
\begin{definition}[Tokenized signatures without public keys~\cite{BS16}]
A tokenized signature scheme consists of 4 algorithms, \setup (\ppt), \tokengen\ (\qpt), \qsign\ (\qpt), and \verify\ (deterministic polynomial time) with the following syntax:
\begin{enumerate}
\item $(\mpk,\msk) \gets \setup(\secparam)$: takes a security parameter and outputs a classical master public key \mpk, and a classical master secret \msk.
\item \sout{$\pk$,}$\qsk \gets \tokengen(\msk):\qsk$ takes the master secret key $\msk$ and outputs \sout{a classical public key $\pk$, and} a quantum signing key $\qsk$.
\item $\sigma \gets \qsign(\qsk,m)$: takes a quantum signing key $\qsk$, and a classical message $m$, and outputs a classical signature $\sigma$.
\item $b \gets \verify(\mpk,$\sout{$\pk$,} $m,\sigma)$: receives a classical master public key\sout{, a classical public key,} a classical message $m$ and an alleged classical signature and either accepts or rejects.
\end{enumerate}
\paragraph{Correctness.} If $(\mpk,\msk)\gets \setup(\secparam)$ and (\sout{\pk,}$\qsk)\gets \tokengen(\msk)$ then for any message $m\in\{0,1\}^*$, $\verify(\mpk$\sout{,\pk},$m,\qsign(\qsk,m))=1$ with overwhelming probability. 
\label{def:tss_without_public_keys}
\end{definition}
To define unforgeability without public keys, the algorithm $\verify_{\ell,\mpk}(\cdot)$ is introduced. This algorithm takes as an input $\ell$ pairs $(m_{1},\sigma_{1}),\ldots,(m_{\ell},\sigma_{\ell})$ and accepts if and only if:
\begin{enumerate}
    \item \label{it:verify_k_1} All the messages are distinct, i.e. $m_i\neq m_j$  for every $1 \leq i \neq j \leq \ell$.
    \item \label{it:verify_k_2} All the pairs pass the verification test: for all $i\in [\ell]$, $\verify_{\mpk}(m_{i},\sigma_i)=1$.
\end{enumerate}
\begin{definition}[Unforgeability for tokenized signatures without public keys~\cite{BS16}]
\label{def:strong_unforgeability_without_public_keys}
A \ts scheme without public keys is unforgeable if for every $\ell=\poly$ a \qpt adversary cannot sign $\ell+1$ distinct messages by using the public key and $\ell$ signing tokens:
\begin{equation}
\Pr\left[
\begin{array}{c}
(\msk,\mpk)\gets \setup(\secparam) \\ 
\qsk_1 \gets \tokengen_{\msk}\\
\vdots \\
\qsk_\ell \gets \tokengen_{\msk}\\
\end{array}
:\verify_{\ell+1,\mpk}( \qadv(\mpk,1^\kappa,\qsk_1 \tensor \ldots \tensor \qsk_{\ell})) = 1 \right]\leq \negl.
\end{equation}
\end{definition}
\begin{remark}
We can similarly define semi-quantum tokenized signatures without public keys by replacing the $\qgen$ algorithm in \cref{def:tss_without_public_keys} with a classical communication protocol $\qgen\equiv\langle \qgen.\sen,\qgen.\rec\rangle$ for generation of the quantum signing key $\qsk$ (similar to how it is done in \cref{def:semi-quantum-tokenized-signatures}). 
The correctness and unforgeability definitions can be adapted accordingly.
\end{remark}

It turns out that the definition above is insufficient for our purposes. In our case, the quantum signing keys are non-equivalent, as each such quantum signing token may have a different value associated with it. 
 
In Ref.~\cite{BS16} and Ref.~\cite{Shm22}, the notions of a tokenized signature mini-scheme and a semi-quantum tokenized signature mini-scheme were introduced respectively, which are analogous to a quantum money mini-scheme~\cite{AC13}. 
\begin{definition}[Tokens for digital signatures mini-scheme, adapted from~\cite{BS16}]
A tokenized signature mini-scheme\footnote{In \cite{BS16} mini-scheme unforgeability was called one-time unforgeability, and the syntax was slightly different. To avoid confusion between one-time and one-shot (which is used to refer to one-shot signatures, which is indeed very different), we avoid using the term one-time unforgeability.} consists of 3 algorithms, \tokengen\ (\qpt), \qsign\ (\qpt), and \verify\ (deterministic polynomial time) with the following syntax:
\begin{enumerate}
\item $\tokengen(\secparam)$ generates a classical public key $\pk$, and a quantum signing token $\qsk$. 
\item $\qsign$ receives a quantum state (presumably, a signing token), and a message $m$, and outputs a classical signature $\sigma$.
\item $\verify$ receives a public key $\pk$, a message, and an alleged signature and either accepts or rejects.
\end{enumerate}
\paragraph{Correctness.} If $(\pk,\qsk)\gets \tokengen(\secparam)$ then $\verify_\pk(m,\qsign(\qsk,m))=1$ for any message $m\in\{0,1\}^*$ with overwhelming probability. 
\paragraph{Unforgeability.} For any \qpt adversary $\qadv$, there is a negligible function $\epsilon$ such that for all $\secpar$:
\begin{equation}
\Pr\left[
\begin{array}{c}
    (\pk,\qsk)\gets \tokengen(\secparam) \\
    \{(m_0,\sigma_0),(m_1,\sigma_1) \} \gets \qadv(\pk,\qsk)
\end{array}
:
\begin{array}{c}
    \verify^2(\mpk,\pk,m_0,m_1,\sigma_0,\sigma_1)=1
\end{array}
 \right]\leq \epsilon(\secpar).
\end{equation}
\label{def:tss_mini_scheme}
\end{definition}

\begin{definition}[Semi-quantum Tokens for digital signatures mini-scheme, adapted from~\cite{Shm22}]
A semi-quantum tokenized signature mini-scheme consists of two algorithms $\qsign\ (\qpt)$, and $\verify\ (\ppt)$ with the same syntax as a tokenized signature mini-scheme. The $\qgen$ algorithm is replaced with the $\qgen$ protocol with classical communication with the following syntax:

$\langle\qgen.\sen,\qgen.\rec\rangle$ is a classical-communication protocol between two parties, a $\ppt$ sender $\qgen.\sen$ which gets the master secret key $\msk$ as input and a $\qpt$ receiver $\qgen.\rec$ (without any input), and generates a classical public key $\pk$, and a quantum signing token $\qsk$. 
\paragraph{Correctness.} If $(\pk,\qsk)\gets \langle\qgen.\sen,\qgen.\rec\rangle_{(\out_\sen,\out_\rec)}$ then $\verify_\pk(m,\qsign(\qsk,m))=1$ for any message $m\in\{0,1\}^*$ with overwhelming probability. 
\paragraph{Unforgeability.} For any \qpt adversary $\qadv$, there is a negligible function $\epsilon$ such that for all $\secpar$:
\begin{equation}
\Pr\left[
\begin{array}{c}
     (\pk,m_0,m_1,\sigma_0,\sigma_1 ) \gets \langle\qgen.\sen,\qadv\rangle_{\out_\sen,\out_\qadv}
\end{array}
:
\begin{array}{c}
    \verify^2(\mpk,\pk,m_0,m_1,\sigma_0,\sigma_1)=1
\end{array}
 \right]\leq \epsilon(\secpar).
\end{equation}
\label{def:semi-quantum-tss_mini_scheme}
\end{definition} 

We define a stronger variant of unforgeability for tokenized signatures with public keys (i.e., as defined in \cref{def:tss}). 
Recall that the unforgeability property in \cref{def:tss} guarantees that an adversary cannot generate two valid signatures associated with the same public key.
We strengthen the definition by adding the requirement that if the adversary did not receive the public key $\pk$ from the oracle access to $\qgen$, then the adversary cannot produce even a single valid signed message associated with $\pk$.
Formally, the following condition is added to the one in \cref{eq:ts_unforgeability}:
\begin{definition}
For a \qpt adversary $\qadv$ with oracle access to $\qgen(\msk)$, we define the set PKS as the set of public keys generated by the oracle and given to the adversary. For any \qpt adversary $\qadv$, there is a negligible function $\epsilon$ such that for all $\secpar$:
\begin{equation}
\Pr\left[
\begin{array}{c}
    (\mpk,\msk)\gets \setup(\secparam)\\
    (\pk, m,\sigma) \gets \qadv^{\tokengen(\msk)}(\mpk)
\end{array}
:
\begin{array}{c}
    \verify(\mpk,\pk,m,\sigma)=1 \text{ and }\\
    \pk \notin PKS
\end{array}
 \right]\leq \epsilon(\secpar).
\label{eq:ts_variant_unforgeability}
\end{equation}
\label{def:variant_unforgeability}
This security notion can be extended to the semi-quantum tokenized signatures by replacing the oracle access to $\qgen(\msk)$ oracle with a oracle access to the classical protocol $\langle \qgen.\sen(\msk),\cdot\rangle$ as it is done in \cref{def:semi-quantum-tokenized-signatures}. 
\end{definition}
 
Ben-David and Sattath~\cite{BS16} showed a black-box construction of a (standard) tokenized signature without public keys based on a tokenized signature mini-scheme. For completeness, we present their construction in \cref{fig:full_scheme_from_mini_scheme}. 
 Here, we show that the resulting (standard) tokenized signature is also a tokenized signature with public keys and that it satisfies the unforgeability defined in \cref{def:tss}, and also the additional property defined in \cref{def:variant_unforgeability}. The same techniques can be used to lift a semi-quantum tokenized signature mini-scheme to a (standard) semi-quantum tokenized signature. The property in \cref{def:variant_unforgeability} is not needed in this work, but might be of independent interest.

\begin{figure}[thbp]
\noindent\textbf{Assumes:} $\tsms$ is a tokenized signature mini-scheme, $\ds$ is a digital signature scheme.\\
\noindent$\setup(\secparam)$
\begin{compactenum}
\item $(\mpk,\msk) \gets \ds.\keygen(\secparam)$.
\item Output $(\mpk,\msk)$.
\end{compactenum}

\noindent $\qgen(\csk)$
\begin{compactenum}
\item $(\pk,\qsk') \gets \tsms.\qgen(\secparam)$.
\item $\rho \gets \ds.\sign_{\msk}(\pk')$.
\item Output $(\pk,\qsk=(\qsk',\rho))$.
\end{compactenum}

\noindent $\qsign((\qsk',\rho),m)$
\begin{compactenum}
\item $\tau \gets \tsms.\qsign(\qsk,m)$.
\item Output $\sigma\equiv (\rho,\tau)$.
\end{compactenum}

 \noindent $\verify(\mpk,\pk,m,\sigma)$
\begin{compactenum}
    \item Interpret $\sigma$ as $(\rho,\tau)$. 
    \item Output $\ds.\verify_\mpk(\pk,\rho) \wedge \tsms.\verify(\pk,m,\tau)$.
\end{compactenum}  
\caption{A standard tokenized signature scheme from a tokenized signature mini-scheme.}
\label{fig:full_scheme_from_mini_scheme}
\end{figure}
\begin{proposition}
If $\tsms$ is an unforgeable tokenized signature mini-scheme (see \cref{def:tss_mini_scheme} and $\ds$ is a PQ-EU-CMA digital signature (see \cref{def:digital_signature}), then the construction in \cref{fig:full_scheme_from_mini_scheme} is an unforgeable tokenized signature (see \cref{def:tss}), which also satisfies the additional unforgeability property in \cref{def:variant_unforgeability}.
\label{prop:lifting_theorem_ts}
\end{proposition}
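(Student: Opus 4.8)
The plan is to verify correctness directly and then establish the two unforgeability statements by separate reductions: the property of \cref{def:variant_unforgeability} (together with the case of \cref{def:tss} in which the forged $\pk$ was never produced by the oracle) will reduce to the PQ-EU-CMA security of $\ds$, and the remaining case of \cref{def:tss} (in which the forged $\pk$ \emph{was} produced by the oracle) will reduce to the unforgeability of the mini-scheme $\tsms$.

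Correctness is immediate. If $(\pk,\qsk=(\qsk',\rho))\gets\qgen(\msk)$ then $\rho\gets\ds.\sign_{\msk}(\pk)$, so $\ds.\verify_{\mpk}(\pk,\rho)=1$ except with negligible probability, and $\tsms.\verify(\pk,m,\tsms.\qsign(\qsk',m))=1$ except with negligible probability by correctness of $\tsms$; a union bound gives $\verify(\mpk,\pk,m,\qsign(\qsk,m))=1$ except with negligible probability. Next I would prove the property of \cref{def:variant_unforgeability}. Suppose a \qpt adversary $\qadv$ with oracle access to $\qgen(\msk)$ outputs $(\pk,m,\sigma)$ with $\verify(\mpk,\pk,m,\sigma)=1$ and $\pk\notin\mathsf{PKS}$; writing $\sigma=(\rho,\tau)$ we get $\ds.\verify_{\mpk}(\pk,\rho)=1$. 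Build a \qpt forger $\mathcal{C}$ against $\ds$: on input $\vk$ with access to a classical signing oracle, $\mathcal{C}$ sets $\mpk\equiv\vk$ and answers each $\qgen$ query by running $(\pk',\qsk')\gets\tsms.\qgen(\secparam)$, querying the signing oracle on $\pk'$ to obtain $\rho'$, and returning $(\pk',(\qsk',\rho'))$. When $\qadv$ halts, $\mathcal{C}$ outputs $(\pk,\rho)$. The set of messages $\mathcal{C}$ queried is exactly $\mathsf{PKS}$, so $\pk\notin\mathsf{PKS}$ means $\pk$ was never queried; hence $(\pk,\rho)$ is a valid forgery and $\mathcal{C}$ wins whenever $\qadv$ does, so this event is negligible. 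This simultaneously disposes of the case $\pk\notin\mathsf{PKS}$ inside \cref{def:tss}, since a $\verify^2$-forgery with $\pk\notin\mathsf{PKS}$ in particular yields one valid signed message under such a $\pk$.

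It remains to handle \cref{def:tss} when $\pk\in\mathsf{PKS}$, which I would reduce to the unforgeability of $\tsms$. Let $q=\poly$ bound the number of $\qgen$ queries of $\qadv$. The reduction $\mathcal{B}$ receives $(\pk^\ast,\qsk^\ast)\gets\tsms.\qgen(\secparam)$, samples $(\mpk,\msk)\gets\ds.\keygen(\secparam)$ on its own, and guesses $i^\ast\gets[q]$. It answers the $i^\ast$-th query with $\big(\pk^\ast,(\qsk^\ast,\ds.\sign_{\msk}(\pk^\ast))\big)$ and every other query honestly (running $\tsms.\qgen$ and signing the resulting public key with $\msk$); since $\mathcal{B}$ holds $\msk$, this simulation is perfect regardless of $i^\ast$. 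If $\qadv$ outputs $(\pk,m_0,m_1,\sigma_0,\sigma_1)$ with $\verify^2(\mpk,\pk,m_0,m_1,\sigma_0,\sigma_1)=1$, writing $\sigma_b=(\rho_b,\tau_b)$ we obtain $m_0\neq m_1$ and $\tsms.\verify(\pk,m_b,\tau_b)=1$ for $b\in\{0,1\}$; conditioned on $\pk=\pk^\ast$, $\mathcal{B}$ outputs $\{(m_0,\tau_0),(m_1,\tau_1)\}$ and wins the mini-scheme game. As the guess is correct with probability at least $1/q$, the probability of this case is at most $q$ times the negligible mini-scheme forgery probability, hence negligible; combining with the previous paragraph completes the proof of \cref{def:tss}.

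The proof is a routine hybrid-style argument, so the only point I expect to require care — what I would call the main (minor) obstacle — is keeping the index-guessing reduction tight: the key observation is that $\mathcal{B}$ generates $\msk$ itself, so it needs the guess $i^\ast$ to be correct only in order to \emph{extract} a useful mini-scheme forgery, never to answer a query, which keeps the security loss to a single multiplicative factor $q=\poly$. Finally, the identical argument, with the $\qgen$ algorithm replaced throughout by the $\qgen$ protocol and ``oracle access to $\qgen(\msk)$'' replaced by ``oracle access to $\langle\qgen.\sen(\msk),\cdot\rangle$'', lifts a semi-quantum tokenized signature mini-scheme (\cref{def:semi-quantum-tss_mini_scheme}) to a semi-quantum tokenized signature satisfying \cref{def:semi-quantum-tokenized-signatures} and the semi-quantum analogue of \cref{def:variant_unforgeability}.
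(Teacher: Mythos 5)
Your proposal is correct and follows essentially the same route as the paper's (sketch) proof: the case $\pk\notin\mathsf{PKS}$ reduces to the PQ-EU-CMA security of $\ds$ by forwarding $\pk$ and its $\ds$-signature component as a forgery, and the case $\pk\in\mathsf{PKS}$ reduces to the mini-scheme by embedding the challenge token at a uniformly guessed query index, losing only the factor $q=\poly$. Your write-up merely makes explicit the details (perfect simulation via self-generated $\msk$, the identification of $\mathsf{PKS}$ with the signing-oracle query set) that the paper leaves at the sketch level.
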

\begin{proof}[Sketch proof] The proof is the same as in \cite{BS16}, but we will give a sketch proof for completeness. 
Suppose the scheme does not satisfy \cref{def:variant_unforgeability}. In that case, the same adversary can be used to break the digital signature scheme: note that the adversary can produce a valid signature for a message that she has not given (we know that this message was not given to the adversary by the criterion $\pk \notin PKS$ in the definition). This is, of course a contradiction to our hypothesis that $\ds$ is  PQ-EU-CMA.

An adversary which violates \cref{eq:ts_unforgeability}, but not \cref{def:variant_unforgeability} can be used to construct an adversary which violates the mini-scheme unforgeability. The main difference is that here a mini-scheme adversary receives only one public key and quantum signing token. In contrast, the adversary has oracle access to the $\qgen$ oracle in the full scheme. To break the mini-scheme security, the mini-scheme adversary could generate the digital signature keys, as well as all the other tokens except one chosen at random, and simulate the full-scheme adversary. Let $n$ be the number of calls made to the $\qgen$ oracle by the full-scheme adversary. Since the full-scheme adversary's view remains exactly the same,  there is $\frac{1}{n}$ probability, that the public key which would be broken is the one which was given to the mini-scheme adversary, and therefore, if the full scheme's adversary success probability is non-negligible, then so is the mini-scheme's success probability. 
\end{proof}
A similar construction and the same proof technique can be used to perform the lift in the semi-quantum regime, but we omit the formal proof for brevity:

\begin{proposition}\label{prop:lifting_theorem_sqts}
Assuming the existence of PQ-EU-CMA digital signatures, any unforgeable semi-quantum tokenized signature mini-scheme (see \cref{def:semi-quantum-tss_mini_scheme}) can be lifted to an unforgeable tokenized signature scheme (see \cref{def:semi-quantum-tokenized-signatures}), which also satisfies the additional unforgeability property in \cref{def:variant_unforgeability}.

\end{proposition}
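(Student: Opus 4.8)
The plan is to mirror, step by step, the construction and sketch proof of \cref{prop:lifting_theorem_ts}, replacing the $\qgen$ \emph{algorithm} everywhere by the $\qgen$ \emph{protocol}. First I would spell out the semi-quantum analogue of \cref{fig:full_scheme_from_mini_scheme}: $\setup(\secparam)$ runs $(\mpk,\msk)\gets\ds.\keygen(\secparam)$; the $\qgen$ protocol first runs the mini-scheme protocol $\langle\qgen.\sen,\qgen.\rec\rangle$ of $\tsms$, leaving the sender with a public key $\pk$ and the receiver with a quantum signing token $\qsk'$, after which the sender computes $\rho\gets\ds.\sign_{\msk}(\pk)$ and sends it over, the sender outputting $\pk$ and the receiver outputting $\qsk=(\qsk',\rho)$; signing is $\qsign((\qsk',\rho),m)=(\rho,\tsms.\qsign(\qsk',m))$, and $\verify(\mpk,\pk,m,(\rho,\tau))$ returns $\ds.\verify_{\mpk}(\pk,\rho)\wedge\tsms.\verify(\pk,m,\tau)$. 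Correctness is immediate from correctness of $\tsms$ and completeness of $\ds$.

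For unforgeability I would use the same two-part argument. (i) For \cref{def:variant_unforgeability}: an adversary $\qadv$ that outputs $(\pk,m,(\rho,\tau))$ passing $\verify$ with $\pk\notin\mathrm{PKS}$ yields a $\ds$-forgery via a reduction $\mathcal B$ that, given $\mpk$ and a classical $\ds.\sign_{\msk}(\cdot)$ oracle, answers each of $\qadv$'s $\qgen$ queries by playing $\qgen.\sen$ of $\tsms$ itself, obtaining $\pk_i$, querying the oracle for $\rho_i$, and sending $\rho_i$; the simulation is perfect, the queried messages are exactly the keys in $\mathrm{PKS}$, and $\rho$ is a valid signature on $\pk\notin\mathrm{PKS}$, contradicting PQ-EU-CMA. (ii) For standard unforgeability (\cref{eq:sqts_unforgeability}): given $\qadv$ producing $(\pk,m_0,m_1,(\rho_b,\tau_b))$ with $\verify^2=1$, either $\pk\notin\mathrm{PKS}$ (then $(\pk,m_0,(\rho_0,\tau_0))$ already violates (i)), or $\pk$ is the sender-output of one of the at most $n=\poly(\secpar)$ many $\qgen$ executions; I would build a mini-scheme adversary $\mathcal B'$ that generates its own $\ds$ keys, guesses an index $j^\ast\in[n]$, simulates all $i\neq j^\ast$ queries as in (i), and for the $j^\ast$-th query relays between its $\tsms$ challenger and $\qadv$, completing that execution by sending $\rho^\ast=\ds.\sign_{\msk}(\pk^\ast)$ for the challenger's key $\pk^\ast$. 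Since the simulation is perfect and $j^\ast$ is information-theoretically hidden, with probability $1/n$ the key attacked is $\pk^\ast$, whereupon $(m_0,m_1,\tau_0,\tau_1)$ is a mini-scheme forgery, contradicting \cref{def:semi-quantum-tss_mini_scheme}. Combining (i) and (ii) gives both required unforgeability properties.

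The step I expect to need the most care is the relaying in (ii): unlike the algorithmic setting of \cref{prop:lifting_theorem_ts}, where $\mathcal B'$ simply runs $\tsms.\qgen$ and reads off $\pk$, here $\mathcal B'$ only observes the challenger's copy of the protocol, yet it must still hand $\qadv$ a \emph{correct} $\ds$-signature of the challenger-generated key $\pk^\ast$ so that the full-scheme $\qgen$ query is answered faithfully (a bogus $\rho^\ast$ would prevent $\qadv$ from ever producing a verifying signature under $\pk^\ast$ and would silently ruin the guess). Since $\mathcal B'$ knows $\msk$, the only thing it needs is $\pk^\ast$ itself, so the argument goes through provided the mini-scheme protocol makes $\pk$ available on the receiver side (equivalently, $\pk$ is a polynomial-time function of the transcript) --- which is the case for the mini-scheme of \cite{Shm22}. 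Modulo this observation, every remaining detail (perfectness of the simulation, the $1/n$ loss, the case split on $\pk\in\mathrm{PKS}$, and bounding the number of $\qgen$ calls by $\qadv$'s running time) is identical to the proof of \cref{prop:lifting_theorem_ts}.
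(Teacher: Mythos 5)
Your proposal is correct and follows exactly the route the paper intends: the paper omits the formal proof of \cref{prop:lifting_theorem_sqts}, stating only that ``a similar construction and the same proof technique'' as in \cref{prop:lifting_theorem_ts} apply, and your write-up is precisely that argument transplanted to the protocol setting (DS-forgery reduction for the $\pk\notin\mathrm{PKS}$ case, guess-and-relay mini-scheme reduction with a $1/n$ loss otherwise). Your closing observation --- that the relaying reduction must be able to recover $\pk^\ast$ from the mini-scheme $\qgen$ execution in order to produce a faithful $\rho^\ast$ --- is the one genuinely new subtlety of the semi-quantum setting, and you handle it correctly.
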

 \begin{theorem}[{\cite[Corollary 1]{CLLZ21}}] Assuming post-quantum indistinguishability obfuscation, and one-way function, there exists a strongly unforgeable tokenized signature mini-scheme. 
  \label{thm:tokenized_signatures_mini-scheme_exist}
 \end{theorem}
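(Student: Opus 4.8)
The plan is to cite the construction of Coladangelo, Liu, Liu, and Zhandry~\cite{CLLZ21} and sketch why it yields a mini-scheme satisfying \cref{def:tss_mini_scheme} with the strong-unforgeability variant of \cref{rem:standard_vs_weak_unforgeability}. The scheme is built from \emph{subspace coset states}. Concretely, $\tokengen(\secparam)$ samples a uniformly random subspace $A\subseteq \mathbb{F}_2^\secpar$ of dimension $\secpar/2$ together with shifts $s,s'\gets \mathbb{F}_2^\secpar$, sets the quantum signing token to the coset state $\qsk=\frac{1}{\sqrt{|A|}}\sum_{a\in A}(-1)^{\langle s',a\rangle}\ket{a+s}$, and sets the public key $\pk$ to be post-quantum $i\mathcal{O}$ obfuscations of the membership programs for the affine subspace $A+s$ and its dual coset $A^\perp+s'$ (together with a puncturable-PRF key used to derive, from a message $m$, the Pauli/Clifford basis in which the token should be measured). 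To sign $m$, $\qsign$ applies the $m$-dependent basis change to $\qsk$, measures in the computational basis, and outputs the resulting vector $\sigma$; $\verify(\pk,m,\sigma)$ re-derives the basis from the PRF key and runs the appropriate obfuscated membership check. For arbitrary-length messages one bundles one fresh coset per message bit into a single token, Lamport-style, so that $\qsk=\bigotimes_i\ket{A_i}$ and $\pk$ is the tuple of obfuscated programs; post-quantum one-way functions suffice for the puncturable PRF.

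Correctness is immediate: an honest measurement in the $m$-prescribed basis lands in the correct (primal or dual) coset with probability $1$, so $\verify$ accepts with overwhelming probability. For unforgeability I would argue the contrapositive. A \qpt adversary producing two distinct valid pairs $(m_0,\sigma_0)\ne(m_1,\sigma_1)$ must, by the structure of the scheme, either produce two distinct elements of a single coset (for some coordinate this already contradicts direct-product hardness), or produce simultaneously a vector in $A+s$ and a vector in $A^\perp+s'$ for some coordinate $i$ where $m_0$ and $m_1$ prescribe different bases — contradicting the \emph{monogamy-of-entanglement} / direct-product hardness of coset states with obfuscated membership checks. The ``strong'' part, i.e.\ ruling out $m_0=m_1$ with $\sigma_0\ne\sigma_1$, is handled by the same direct-product statement: two distinct elements of one coset are infeasible, essentially because the coset state collapses after the first extraction and the obfuscated membership oracle reveals no further coset elements efficiently. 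Thus a forger yields a direct-product-hardness breaker with the same success probability up to a polynomial loss coming from guessing the relevant coordinate.

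The genuinely hard step — and the technical core of~\cite{CLLZ21} — is establishing that the monogamy/direct-product hardness of coset states survives when the membership checks for $A+s$ and $A^\perp+s'$ are handed out as $i\mathcal{O}$-obfuscated circuits rather than as idealized oracles. Ben-David and Sattath~\cite{BS16} proved the oracle-relative version; removing the oracle requires a delicate sequence of hybrids that (i) replaces the obfuscated coset programs by obfuscations of functionally equivalent punctured programs using subspace-hiding obfuscation together with $i\mathcal{O}$ security, (ii) reprograms the puncturable PRF on the challenge points, and (iii) finally reduces to the information-theoretic monogamy bound for bare coset states. I expect the puncturing argument — keeping the two obfuscated membership programs functionally equivalent at every hybrid step while gradually erasing exactly the information that would let an adversary ``open'' both cosets — to be where essentially all of the difficulty lies; the remaining pieces (correctness, the combinatorial reduction from a signature pair to a coset pair, and the Lamport-style handling of long messages) are routine once that property is in hand.
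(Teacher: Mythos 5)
This theorem is imported verbatim from \cite[Corollary 1]{CLLZ21}: the paper offers no proof of its own beyond the citation, and your sketch is a faithful high-level account of how Coladangelo et al.\ establish it --- coset states, $i\mathcal{O}$-obfuscated membership programs for $A+s$ and $A^\perp+s'$, a bit-by-bit reduction of any forgery to producing a forbidden pair of coset vectors, and the computational direct-product hardness theorem (proved via subspace-hiding obfuscation and hybrids down to the information-theoretic oracle bound of \cite{BS16}), which is indeed where all the difficulty lies. One descriptive slip worth correcting: in the signature-token construction the measurement basis is fixed directly by the message bit (computational basis for $0$, Hadamard basis for $1$), so the puncturable PRF you invoke to derive the basis is not part of this scheme --- it appears in other constructions of \cite{CLLZ21} such as unclonable decryption and copy-protection of PRFs.
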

 
 \begin{theorem}[{\cite[Theorem 1.1]{Shm22}}]
 Assume that Decisional LWE has sub-exponential quantum indistinguishability and that
indistinguishability obfuscation for classical circuits exists with security against quantum polynomial
time distinguishers. Then, there is an unforgeable semi-quantum tokenized signature mini-scheme.
\label{thm:semi_quantum_tokenized_signatures_mini-scheme_exist}
 \end{theorem}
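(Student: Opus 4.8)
The plan is to reconstruct the route taken in~\cite{Shm22}: start from the plain tokenized-signature mini-scheme behind \cref{thm:tokenized_signatures_mini-scheme_exist}, whose signing token is a coset state and whose (strong) unforgeability rests on the $i\mathcal{O}$-based direct-product (``monogamy'') hardness of coset states from~\cite{CLLZ21}, and replace its single piece of quantum communication---the bank handing the user a freshly prepared coset state---by a classical-communication protocol after which an honest quantum receiver holds exactly that coset state while the classical sender learns its classical description. Concretely, the public key of the mini-scheme is a post-quantum $i\mathcal{O}$ obfuscation of the membership programs for a uniformly random subspace coset $A+s$ and its dual coset $A^{\perp}+s'$; to sign the bit $0$ the holder measures the coset state in the computational basis and outputs a vector of $A+s$, to sign $1$ it applies $H^{\otimes n}$ and measures, outputting a vector of $A^{\perp}+s'$, and $\verify$ just runs the obfuscated membership checks. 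Arbitrary-length messages are handled by the standard trick of signing each bit in an independent coordinate block, exactly as in~\cite{CLLZ21}. Since a mini-scheme only ever hands out one token/public-key pair, no extra digital-signature layer is needed here---that is precisely what \cref{prop:lifting_theorem_sqts} is for.

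The technical heart is therefore a \emph{classical-communication coset-state preparation protocol} between a $\ppt$ sender (no input) and a $\qpt$ receiver such that: the sender outputs $(A,s,s')$, an honest receiver outputs a state negligibly close to $\ket{A_{s,s'}}$, and against a malicious $\qpt$ receiver the receiver's final register is ``no better than'' an honest coset state for the reduction below. I would build this by composing (i) the semi-quantum money / remote-state-preparation machinery of Radian--Sattath~\cite{RS22} and Shmueli~\cite{Shm22}, using noisy trapdoor claw-free function families from LWE: the sender sends function descriptions, the quantum receiver evaluates in uniform superposition and measures the image, collapsing onto a superposition over a preimage pair $\{x_0,x_1\}$, and the sender uses the trapdoor to recover that pair; with (ii) a Mahadev-style basis-test/equation phase letting the sender identify which of several such two-to-one states the receiver actually holds, so that after appropriate XOR-masking and concatenation the receiver's register is re-expressed as a coset state for a subspace $A$ and shifts $s,s'$ the sender can compute. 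Sub-exponential hardness of Decisional LWE is what makes the soundness analysis close: the extractor against a cheating receiver runs an LWE distinguisher on a polynomially smaller parameter (complexity leveraging), so that a non-negligible cheating advantage still contradicts LWE after the loss.

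Unforgeability (\cref{def:semi-quantum-tss_mini_scheme}) then follows by contraposition. Suppose a $\qpt$ adversary, after running the $\qgen$ protocol with the honest sender and obtaining the obfuscated public key $\pk$, outputs $(m_0,m_1,\sigma_0,\sigma_1)$ with $m_0\neq m_1$ and $\verify^{2}(\mpk,\pk,m_0,m_1,\sigma_0,\sigma_1)=1$ (see \cref{def:verify_2}). As the messages differ, on some coordinate block one signature is a valid element of $A+s$ and the other a valid element of $A^{\perp}+s'$ for that block. Using the soundness of the preparation protocol to replace the adversary's interaction by an idealized handoff of a genuine coset state (indistinguishably, under sub-exponential LWE), we obtain a machine that, given $\ket{A_{s,s'}}$ and $i\mathcal{O}$ of the two coset-membership programs, outputs one primal and one dual coset element---exactly the event ruled out by the direct-product hardness of coset states of~\cite{CLLZ21} under post-quantum $i\mathcal{O}$ and one-way functions (which LWE implies). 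Summing the negligible losses from the LWE-indistinguishability step and the $i\mathcal{O}$ step gives the claimed negligible forging probability.

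The main obstacle is step (ii): proving soundness of the remote coset-state preparation against an \emph{arbitrarily deviating} quantum receiver, i.e.\ showing that a dishonest receiver is left with a state from which producing both a primal and a dual coset element is as hard as from $\ket{A_{s,s'}}$. This needs a careful simulation/extraction argument---essentially a dual-mode/lossy switch on the LWE functions together with a rewinding-free measurement argument in the style of Mahadev's verification protocol---and it is where the sub-exponential assumption is genuinely consumed. Everything else (the coset-state signing/verification syntax, the bit-by-bit extension to arbitrary messages, and feeding the resulting mini-scheme into \cref{prop:lifting_theorem_sqts}) is routine once this protocol is in hand.
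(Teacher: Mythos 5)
This theorem is not proved in the paper at all: it is imported verbatim, with attribution, as Theorem~1.1 of~\cite{Shm22}, and the only place the paper touches it is the two-line derivation of \cref{thm:semi-quantum_tokenized_signatures_exist}, which combines it as a black box with \cref{prop:lifting_theorem_sqts}. So there is no in-paper argument to measure your reconstruction against; what you have written is an attempt to re-derive an external result that the paper deliberately does not reprove.

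Judged as such a reconstruction, your outline gets the architecture right: coset states as tokens, signing $0$/$1$ by computational- versus Hadamard-basis measurement, verification via obfuscated membership programs, a bit-by-bit extension to long messages, security from the direct-product (monogamy) hardness of coset states of~\cite{CLLZ21}, and a classical-communication protocol after which the receiver holds a coset state whose description the classical sender knows, with sub-exponential LWE consumed by complexity leveraging. You also correctly observe that the mini-scheme needs no digital-signature layer because that is exactly what \cref{prop:lifting_theorem_sqts} supplies. However, the single step you explicitly defer---proving that an \emph{arbitrarily deviating} \qpt receiver of the remote coset-state preparation protocol is left in a position from which producing both a primal and a dual coset vector is as hard as it is for an honest holder of $\ket{A_{s,s'}}$---is not a detail but the entire technical content of~\cite{Shm22}. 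Your sketch asserts that after ``XOR-masking and concatenation the receiver's register is re-expressed as a coset state,'' which is precisely the claim requiring proof, and the mechanism you propose for it (noisy trapdoor claw-free functions plus a Mahadev-style basis test in the spirit of~\cite{RS22}) is not the route~\cite{Shm22} actually takes, whose protocol is built from quantum fully homomorphic encryption together with obfuscation rather than from claw-free basis tests. Without a precise statement of the soundness guarantee of the preparation protocol---what the reduction extracts, in which hybrid the adversary's interaction is replaced by an idealized handoff, and with what loss---the final reduction to coset monogamy does not go through as written, so the proposal should be regarded as a plausible plan with its central lemma missing rather than as a proof.
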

\begin{proof}[Proof of \cref{thm:tokenized_signatures_exist}]\label{pf:thm:tokenized_signatures_exist}
The assumptions made in \cref{thm:tokenized_signatures_exist} are the same as required in \cref{thm:tokenized_signatures_mini-scheme_exist}.
It is known that post-quantum one-way functions imply PQ-EU-CMA digital signatures (see \cref{def:digital_signature,def:unforgeability_of_digital_signature}), see Ref.~\cite[Theorem 5.4]{Son14}. Hence, the assumptions in \cref{thm:tokenized_signatures_exist} imply the requirements in \cref{prop:lifting_theorem_ts}. Therefore, by combining \cref{prop:lifting_theorem_ts}, and \cref{thm:tokenized_signatures_mini-scheme_exist}, we get the desired result. 
\end{proof} 

\begin{proof}[Proof of \cref{thm:semi-quantum_tokenized_signatures_exist}]\label{pf:thm:semi-quantum_tokenized_signatures_exist}
The assumptions made in \cref{thm:semi-quantum_tokenized_signatures_exist} are the same as required in \cref{thm:semi_quantum_tokenized_signatures_mini-scheme_exist}.
Note that, sub-exponential quantum indistinguishability of Decisional LWE (which is stronger than vanila post-quantum indistinguishability of DLWE) implies post-quantum one-way functions, see Ref.~\cite{Reg05}. As noted in the proof of \cref{thm:tokenized_signatures_exist}, post-quantum one-way functions imply PQ-EU-CMA digital signatures (see \cref{def:digital_signature,def:unforgeability_of_digital_signature}), see Ref.~\cite[Theorem 5.4]{Son14}. Hence, the assumptions made in \cref{thm:semi-quantum_tokenized_signatures_exist} imply the requirements in \cref{prop:lifting_theorem_sqts}. Therefore, by combining \cref{prop:lifting_theorem_sqts}, and \cref{thm:semi_quantum_tokenized_signatures_mini-scheme_exist}, we get the desired result. 
\end{proof} 
   
\end{document}